%% 
%% Copyright 2007-2024 Elsevier Ltd
%% 
%% This file is part of the 'Elsarticle Bundle'.
%% ---------------------------------------------
%% 
%% It may be distributed under the conditions of the LaTeX Project Public
%% License, either version 1.3 of this license or (at your option) any
%% later version.  The latest version of this license is in
%%    http://www.latex-project.org/lppl.txt
%% and version 1.3 or later is part of all distributions of LaTeX
%% version 1999/12/01 or later.
%% 
%% The list of all files belonging to the 'Elsarticle Bundle' is
%% given in the file `manifest.txt'.
%% 
%% Template article for Elsevier's document class `elsarticle'
%% with numbered style bibliographic references
%% SP 2008/03/01
%% $Id: elsarticle-template-num.tex 249 2024-04-06 10:51:24Z rishi $
%%
% \documentclass[preprint,12pt]{elsarticle}

%% Use the option review to obtain double line spacing
%% \documentclass[authoryear,preprint,review,12pt]{elsarticle}

%% Use the options 1p,twocolumn; 3p; 3p,twocolumn; 5p; or 5p,twocolumn
%% for a journal layout:
%% \documentclass[final,1p,times]{elsarticle}
%% \documentclass[final,1p,times,twocolumn]{elsarticle}
\documentclass[final,3p,times]{elsarticle}
%% \documentclass[final,3p,times,twocolumn]{elsarticle}
%% \documentclass[final,5p,times]{elsarticle}
%% \documentclass[final,5p,times,twocolumn]{elsarticle}

%% For including figures, graphicx.sty has been loaded in
%% elsarticle.cls. If you prefer to use the old commands
%% please give \usepackage{epsfig}

%% The amssymb package provides various useful mathematical symbols
\usepackage{amssymb}
%% The amsmath package provides various useful equation environments.
\usepackage{amsmath}
%% The amsthm package provides extended theorem environments
\usepackage{amsthm}

%% The lineno packages adds line numbers. Start line numbering with
%% \begin{linenumbers}, end it with \end{linenumbers}. Or switch it on
%% for the whole article with \linenumbers.
\usepackage{lineno}
\usepackage{todonotes}
\usepackage[most]{tcolorbox}
\usepackage{newtxtext,newtxmath}

%% Stuff we added
\usepackage{thm-restate}

\newtheorem{theorem}{Theorem}
\newtheorem{lemma}{Lemma}
\newtheorem{observation}{Observation}
\newtheorem{definition}{Definition}
\newtheorem{corollary}{Corollary}

\usepackage{xcolor}
\usepackage[normalem]{ulem} 

% \journal{Nuclear Physics B}

\begin{document}

\begin{frontmatter}

%% Title, authors and addresses

%% use the tnoteref command within \title for footnotes;
%% use the tnotetext command for theassociated footnote;
%% use the fnref command within \author or \affiliation for footnotes;
%% use the fntext command for theassociated footnote;
%% use the corref command within \author for corresponding author footnotes;
%% use the cortext command for theassociated footnote;
%% use the ead command for the email address,
%% and the form \ead[url] for the home page:
%% \title{Title\tnoteref{label1}}
%% \tnotetext[label1]{}
%% \author{Name\corref{cor1}\fnref{label2}}
%% \ead{email address}
%% \ead[url]{home page}
%% \fntext[label2]{}
%% \cortext[cor1]{}
%% \affiliation{organization={},
%%             addressline={},
%%             city={},
%%             postcode={},
%%             state={},
%%             country={}}
%% \fntext[label3]{}

\title{On the complexity of constrained reconfiguration and motion planning}

%% use optional labels to link authors explicitly to addresses:
%% \author[label1,label2]{}
%% \affiliation[label1]{organization={},
%%             addressline={},
%%             city={},
%%             postcode={},
%%             state={},
%%             country={}}
%%
%% \affiliation[label2]{organization={},
%%             addressline={},
%%             city={},
%%             postcode={},
%%             state={},
%%             country={}}

\author[udem,ulyon]{Nicolas Bousquet}
\author[uw]{Remy El Sabeh}
\author[aub]{Amer E. Mouawad}
\author[uw]{Naomi Nishimura}

%% Author affiliation
\affiliation[udem]{organization={CNRS - Université de Montréal CRM - CNRS},%Department and Organization
            % addressline={}, 
            % city={},
            % postcode={}, 
            % state={},
            country={Canada}}
            
\affiliation[ulyon]{organization={Univ. Lyon, Université Lyon 1, CNRS, LIRIS UMR CNRS 5205},%Department and Organization
            % addressline={}, 
            % city={},
            postcode={F-69621}, 
            state={Lyon},
            country={France}}

\affiliation[uw]{organization={David R. Cheriton School of Computer Science, University of Waterloo},%Department and Organization
            % addressline={}, 
            city={Waterloo},
            % postcode={}, 
            state={ON},
            country={Canada}}

\affiliation[aub]{organization={American University of Beirut},%Department and Organization
            % addressline={}, 
            city={Beirut},
            % postcode={}, 
            % state={},
            country={Lebanon}}

%% Abstract
\begin{abstract}
Coordinating the motion of multiple agents in constrained environments is a fundamental challenge in robotics, motion planning, and scheduling. A motivating example involves $n$ robotic arms, each represented as a line segment. The objective is to rotate each arm to its vertical orientation, one at a time (clockwise or counterclockwise), without collisions nor rotating any arm more than once. This scenario is an example of the more general $k$-\textsc{Compatible Ordering} problem, where $n$ agents, each capable of $k$ state-changing actions, must transition to specific target states under constraints encoded as a set $\mathcal{G}$ of $k$ pairs of directed graphs.
We show that $k$-\textsc{Compatible Ordering} is $\mathsf{NP}$-complete, even when $\mathcal{G}$ is planar, degenerate, or acyclic. On the positive side, we provide polynomial-time algorithms for cases such as when $k = 1$ or $\mathcal{G}$ has bounded treewidth. We also introduce generalized variants supporting multiple state-changing actions per agent, broadening the applicability of our framework. These results extend to a wide range of scheduling, reconfiguration, and motion planning applications in constrained environments.    
\end{abstract}

%%Graphical abstract
% \begin{graphicalabstract}
%\includegraphics{grabs}
% \end{graphicalabstract}

%%Research highlights
% \begin{highlights}
% \item Research highlight 1
% \item Research highlight 2
% \end{highlights}

%% Keywords
\begin{keyword}
%% keywords here, in the form: keyword \sep keyword
search \sep optimization \sep robotics \sep parameterized complexity \sep combinatorial reconfiguration
%% PACS codes here, in the form: \PACS code \sep code

%% MSC codes here, in the form: \MSC code \sep code
%% or \MSC[2008] code \sep code (2000 is the default)

\end{keyword}

\end{frontmatter}

%% Add \usepackage{lineno} before \begin{document} and uncomment 
%% following line to enable line numbers
%% \linenumbers

%% main text
%%

%% Use \section commands to start a section

\makeatletter
\def\ps@pprintTitle{%
   \let\@oddhead\@empty
   \let\@evenhead\@empty
   \let\@oddfoot\@empty
   \let\@evenfoot\@oddfoot}
\makeatother

\section{Introduction}
\label{sec:introduction}
Motion planning of multiple robots sharing a common workspace is a well-studied problem that has been tackled from both centralized~\cite{HueA22} and decentralized~\cite{CurkovicJ10,SinaFB18} perspectives. The main goal of robot motion planning is typically to coordinate the actions of robots, such that the robots attain a particular common goal, while guaranteeing that the interference between the robots is either minimized or eliminated. In other words, the movements of a robot should not cause a collision with any of the other robots, which may be moving either synchronously or asynchronously. 

The system can abstractly be defined by a starting position for each robot, a set of potential moves for each robot, which may be dependent on previous moves, some measure we would like to minimize (such as time) and a target position (or positions) for each robot. Many state-of-the-art algorithms for motion planning in the literature are based on sampling from reasonable local choices. While these algorithms may perform well in practice, they are not deterministic and can usually only be assessed empirically against each other, without concrete guarantees. In this paper, and inspired by the abstract definition of the motion planning problem, as well as its accompanying geometric constraints, we tackle motion planning from the reconfiguration angle.

The reconfiguration framework studies the transformation of combinatorial objects via a sequence of operations, each involving a single object, while maintaining feasibility throughout the transformation. The objects under consideration typically belong to the solution space of a well-defined problem. Given source and target configurations, the goal is to determine whether there exists a sequence of valid steps to transform the former into the latter.

A well-known example is the Rubik's cube, where one transforms a source (mixed) configuration to a target (monochromatic) one through a sequence of rotations. Reconfiguration problems can often be modeled as reachability questions in a graph of configurations, where nodes represent feasible solutions, and edges correspond to possible transformations. However, the solution space is usually exponential, so a polynomial-time solution to such problems cannot rely on exhaustive search.

The framework of reconfiguration was formally introduced by Hearn and Demaine with the \textsc{Nondeterministic Constraint Logic} problem, which established $\mathsf{PSPACE}$-completeness results for many naturally occurring games~\cite{hearn:games,hearn:sliding,buchin:dots}. Since then, reconfiguration has extended to graph problems~\cite{ito:reconf}, such as independent set, vertex cover, and dominating set reconfiguration; in the context of these problems, and more broadly, graph reconfiguration problems, configurations are transformed by adding, removing, or substituting vertices.

While some reconfiguration problems admit polynomial-time algorithms on restricted graph classes (e.g., planar graphs~\cite{bousquet:feedback}, graphs of bounded treewidth~\cite{mouawad:parameterized}, or interval graphs~\cite{brianski:interval}), many remain $\mathsf{PSPACE}$-complete or hard to approximate in general~\cite{demaine:tetris,ohsaka:gap2022,ohsaka:gap2024}. For more on reconfiguration problems, we refer the reader to surveys on the topic~\cite{BousquetMNS22+,Heuvel13,Nishimura17}.

Going back to motion planning, we now present variants of problems on robotic arms under the umbrella of reconfiguration problems. Most previous results consider a small number of robots using a broad set of possible moves. We have chosen to restrict the power of robots to study the impact of the number and relative positions of robots on the complexity of the problem. We also focus on centralized algorithms, avoiding synchronization issues that might arise in multi-agent systems. We refer to the most general setup as the \textsc{Robotic Arm Motion Planning (RAMP)} problem.

The presentation of the \textsc{RAMP} problem, followed by its variants, is meant to provide context for and a motivation of the paper's two main problems, namely the $k$-\textsc{Compatible Ordering} problem and the $k$-\textsc{Compatible Set Arrangement} problem, with the included results for these two problems being used to refine what we know about the tractability of different versions of the \textsc{RAMP} problem. \textsc{RAMP} problems will therefore be recurrently used as tangible examples of $k$-compatible ordering and arrangement problems. 

The \textsc{RAMP} problem is defined in the plane $\mathbb{R}^2$, with the canonical $x$-axis and $y$-axis, where a \emph{robotic arm} is a unit-length line segment. An arm can be uniquely identified by its \emph{center}, i.e., the midpoint $(x, y)$ of its line segment, and the \emph{angle} $a \in [0, \pi)$ in radians that the line segment forms with the horizontal line passing through its center. A collection of arms is \emph{conflict-free} whenever the corresponding line segments are pairwise non-intersecting. We are particularly interested in arms that are \emph{horizontal}, which have an angle that is equal to $0$, or \emph{vertical}, which have an angle that is equal to $\pi/2$. A reconfiguration operation on a set of arms is a \emph{rotation} of a single arm about its center, which corresponds to the robotic arm in question updating its angle $a$ to $a' = (a + \alpha) \text{ mod } \pi$, where $\alpha \neq 0$. A rotation is said to be a \emph{counterclockwise rotation} when $\alpha > 0$, and is said to be a \emph{clockwise rotation} otherwise. When a robotic arm rotates from one orientation to another, it sweeps two opposite sectors at the center of the arm, associated with the arcs defined between $a$ and $a'$. If a robotic arm $r_1$ is rotating, and another arm $r_2$ intersects one of the two sectors swept by the rotation of $r_1$, we say that the rotation of $r_1$ \emph{hits} $r_2$ (see Figure~\ref{fig:toothpicks} for an illustration).

We now formally define the problems under consideration, starting with the most general formulation of the \textsc{RAMP} problem:
\smallskip

\begin{tcolorbox}[colback=white, colframe=black,  
                  arc=4mm, boxrule=0.3mm, 
                  fonttitle=\bfseries]
\textsc{Multiple Move Robotic Arm Motion Planning (MM-RAMP)}
\\
\textbf{Input: } A sequence of robotic arms $\mathcal{R} = r_1, \ldots, r_n$ with distinct fixed centers, two conflict-free configurations of angles $\mathcal{R}_{\text{s}}= \alpha_1,\ldots,\alpha_n$ and $\mathcal{R}_{\text{t}}=\beta_1,\ldots,\beta_n$, and for $1\leq i \leq n$, a collection of angles $\mathcal{A}_i$ such that both $\alpha_i$ and $\beta_i$ belong to $\mathcal{A}_i$.
\\
\textbf{Output:} Whether there exists a sequence of conflict-free robotic arm configurations $\mathcal{R}_1:= \mathcal{R}_{\text{s}}, \ldots, \mathcal{R}_p:= \mathcal{R}_{\text{t}}$, such that, for every pair $(\mathcal{R}_j, \mathcal{R}_{j + 1})$ of consecutive configurations, $1 \leq j \leq p - 1$, there exists a robotic arm $r_i$ satisfying the following:
\begin{itemize}
\item $\mathcal{R}_j$ and $\mathcal{R}_{j + 1}$ differ only in the angle of $r_i$ (i.e., consecutive configurations differ in the rotation of a single arm);
\item $r_i$ is rotated from $a \in \mathcal{A}_i$ in $\mathcal{R}_{j}$ to $a' \in \mathcal{A}_i$ in $\mathcal{R}_{j + 1}$; and
\item The rotation of $r_i$ does not hit any other robotic arm in $\mathcal{R}_j$.
\end{itemize}
\end{tcolorbox}

We also consider a simpler problem, interesting in its own right, that captures much of the difficulty of the general problem. In particular, instead of allowing multiple moves per arm, we seek a transformation where each arm moves at most once, and where the final desired orientation for all arms is the vertical one. More formally, we define the following problem: 
\smallskip

\begin{tcolorbox}[colback=white, colframe=black,  
                  arc=4mm, boxrule=0.3mm, 
                  fonttitle=\bfseries]
\textsc{Single Move Robotic Arm Motion Planning (SM-RAMP)}
\\
\textbf{Input: } A sequence of robotic arms $\mathcal{R} = r_1, \ldots, r_n$, with distinct fixed centers.
\\
\textbf{Output:} Whether there exists a sequence of conflict-free robotic arm configurations $\mathcal{R}_0, \ldots, \mathcal{R}_n$ such that $\mathcal{R} = \mathcal{R}_0$ and all robotic arms are vertical in $\mathcal{R}_n$. Moreover, for every pair $(\mathcal{R}_j, \mathcal{R}_{j + 1})$ of consecutive configurations, $0 \leq j \leq n - 1$, there exists a robotic arm $r_i$ satisfying the following:
\begin{itemize}
\item $\mathcal{R}_j$ and $\mathcal{R}_{j + 1}$ differ only in the angle of $r_i$;
\item $r_i$ is rotated from $a \in \mathcal{A}_i$ in $\mathcal{R}_{j}$ to $\pi/2$ (vertical orientation) in $\mathcal{R}_{j + 1}$;
\item The rotation of $r_i$ does not hit any other robotic arm in $\mathcal{R}_j$; and
\item $r_i$ is rotated at most once in the sequence $\mathcal{R}_0, \ldots, \mathcal{R}_n$. 
\end{itemize}
\end{tcolorbox}

In order to utilize what is known in reconfiguration on graphs in terms of tractability and how it relates to a graph's structure or its properties, it is reasonable to attempt to define both \textsc{MM-RAMP} and \textsc{SM-RAMP} in graph theoretic terms. Therefore, we consider two problems on graphs that generalize them, starting with the $k$-\textsc{Compatible Ordering} problem. The $k$-\textsc{Compatible Ordering} problem, which is a generalization of \textsc{SM-RAMP}, is defined below:
\smallskip

\begin{tcolorbox}[colback=white, colframe=black,  
                  arc=4mm, boxrule=0.3mm, 
                  fonttitle=\bfseries]
$k$-\textsc{Compatible Ordering}
\\
\textbf{Input: } A vertex set $V$, a collection $\mathcal{G}$ of $k$ pairs of directed graphs $(A_1, B_1), \ldots, (A_k, B_k)$, such that $V(A_i) = V(B_i) = V$ for $1 \leq i \leq k$. 
~\\
\textbf{Output: } Whether there exists a pair $(\mathcal{S}, \mathcal{L})$, where $\mathcal{S} = s_1, \ldots, s_n$ is an ordering of $V$,  $\mathcal{L} = \ell_1, \ldots, \ell_n$ is a sequence of labels, and vertex $s_i \in \mathcal{S}$ is assigned label $\ell_i \in [k]$, for $1 \leq i \leq n$, such that the following two constraints are satisfied:
\begin{itemize}
\item $s_i$ is a sink in $A_{\ell_i}[s_i, \ldots, s_n]$; and
\item $s_i$ is a source in $B_{\ell_i}[s_1, \ldots, s_i]$.
\end{itemize}
\end{tcolorbox}

\begin{observation}
    There exists a polynomial-time reduction from the \textsc{SM-RAMP} problem to the $2$-\textsc{Compatible Ordering} problem.
\end{observation}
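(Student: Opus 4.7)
The plan is to build, from any \textsc{SM-RAMP} instance on arms $r_1,\ldots,r_n$, a $2$-\textsc{Compatible Ordering} instance on vertex set $V=\{r_1,\ldots,r_n\}$ whose two labels encode the rotation direction: label $1$ for a clockwise rotation and label $2$ for a counterclockwise rotation. For each arm $r_i$ and each direction $\ell\in\{1,2\}$, let $S_i^{\ell}$ denote the pair of opposite circular sectors swept when $r_i$ is rotated from its original angle $\alpha_i$ to $\pi/2$ in direction $\ell$. Each $S_i^{\ell}$ is determined by the (fixed) center of $r_i$, its initial angle, and the rotation direction, so it can be computed in constant time.

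I would then populate the two pairs of digraphs as follows. In $A_\ell$, I place an arc from $r_i$ to $r_j$ whenever the line segment of $r_j$ at its original angle intersects $S_i^{\ell}$; thus $A_\ell$ records conflicts with arms still in their original orientation. In $B_\ell$, I place an arc from $r_j$ to $r_i$ whenever the vertical unit segment centered at the center of $r_j$ intersects $S_i^{\ell}$; thus $B_\ell$ records conflicts with arms that have already been rotated to vertical. Each arc is decided by a single segment–versus–sector intersection test, so the whole construction runs in polynomial time.

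The correspondence between solutions is essentially a direct rephrasing. Given a solution $(\mathcal{S},\mathcal{L})$ of the resulting $2$-\textsc{Compatible Ordering} instance, interpret $s_i$ as the $i$-th arm rotated, using direction $\ell_i$. The sink condition on $s_i$ in $A_{\ell_i}[s_i,\ldots,s_n]$ says exactly that no arm yet to be rotated (hence still at its original angle) lies in the swept region $S_i^{\ell_i}$, and the source condition on $s_i$ in $B_{\ell_i}[s_1,\ldots,s_i]$ says exactly that no already-vertical arm lies in $S_i^{\ell_i}$. Together these are precisely the non-collision requirements that \textsc{SM-RAMP} demands at step $i$, so the ordering and labeling yield a valid reconfiguration sequence. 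Conversely, any valid \textsc{SM-RAMP} sequence determines an ordering of the arms and a direction label for each rotation, and the same translation shows that the two sink/source conditions hold at every step.

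The main point that requires a small amount of care, rather than a genuine obstacle, is handling arms whose initial orientation is already $\pi/2$: such arms perform no rotation, so I would preprocess them out of $V$ and append them at the end of the ordering with either label, which introduces no conflicts because none of their \emph{swept} sectors exist. Apart from this edge case, the argument is purely a translation of geometric non-collision constraints into the combinatorial sink/source constraints of $2$-\textsc{Compatible Ordering}, and everything is clearly polynomial.
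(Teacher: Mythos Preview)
Your construction is essentially the paper's: both create one vertex per arm, use labels $1$/$2$ for clockwise/counterclockwise, and populate $A_\ell$ and $B_\ell$ by the same sector--intersection tests (your $B_\ell$ arc $r_j\to r_i$ when the vertical segment of $r_j$ meets $S_i^{\ell}$ is exactly the paper's condition after reindexing; the paper's extra clause ``vertical $r_i$ crosses initial $r_j$'' is subsumed once the swept sectors are taken to be closed).

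The one genuine slip is your handling of already-vertical arms. Removing such an arm $r_j$ from $V$ discards the constraints it imposes on \emph{other} arms: if $S_i^{\ell}$ contains the (permanent) vertical segment of $r_j$, then in \textsc{SM-RAMP} the arm $r_i$ can never rotate in direction $\ell$, yet your reduced $2$-\textsc{Compatible Ordering} instance no longer forbids label $\ell$ for $r_i$. So your reduction can turn a no-instance of \textsc{SM-RAMP} into a yes-instance of $2$-\textsc{Compatible Ordering}. The fix is simply not to preprocess: leave $r_j$ in $V$. Because $S_j^{\ell}$ is degenerate, $r_j$ acquires no outgoing $A$-arcs and no incoming $B$-arcs, so it trivially satisfies its own sink/source conditions wherever it sits in the ordering; meanwhile the pair of arcs $r_i r_j\in A_\ell$ and $r_j r_i\in B_\ell$ (which your own rules create, since initial $r_j=$ vertical $r_j$) together forbid label $\ell$ for $r_i$, exactly as required.
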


\begin{proof}
Consider an instance $\mathcal{R} = r_1, \ldots, r_n$ of \textsc{SM-RAMP}. For each robotic arm $r_i \in \mathcal{R}$, we create a vertex $v_i$ in $V$. If the clockwise (resp. counterclockwise) rotation of an arm $r_i$ hits an arm $r_j$, we add the arc $r_ir_j$ to $A_1$ (resp. $A_2$). If the vertical orientation of an arm $r_i$ is hit by the clockwise (resp. counterclockwise) rotation of an arm $r_j$, or crosses $r_j$ in its initial orientation, we add the arc $r_ir_j$ to $B_1$ (resp. $B_2$). The vertex set $V$ and the four directed graphs $A_1$, $B_1$, $A_2$, $B_2$ can clearly be constructed in time polynomial in $n$, the number of robotic arms in $\mathcal{R}$. We claim that $\mathcal{R}$ is a yes-instance of \textsc{SM-RAMP} if and only if $\{(V, \mathcal{G} = \{(A_1, B_1), (A_2, B_2)\}$ is a yes-instance of $2$-\textsc{Compatible Ordering}. We only present the forward direction of the proof, as the argument for the backward direction is symmetrical and follows directly from the construction of the $2$-\textsc{Compatible Ordering} instance.

Let $(\mathcal{R}_{\text{sol}}, \mathcal{D})$ represent a solution to the \textsc{SM-RAMP} instance. More specifically, let $\mathcal{R}_{\text{sol}} = r'_1, \ldots, r'_n$ be an ordering of the robotic arms in $\mathcal{R}$ such that $r'_i$ is the $i$th robotic arm to be rotated, for $1 \leq i \leq n$, and let $\mathcal{D} = d_1, \ldots, d_n$ denote the direction in which the robotic arms are rotated, such that $d_i \in \{r, l\}$, for $1 \leq i \leq n$ (where $r$ and $l$ denote a clockwise and a counterclockwise rotation, respectively). We claim that $(\mathcal{S} = v'_1, \ldots, v'_n, \mathcal{L} = \ell_1, \ldots, \ell_n)$, where $v'_i$ is the vertex associated with $r'_i$ in the construction, and where $\ell_i = 1$ if $d_i = r$ and $\ell_i = 2$ if $d_i = l$, is a valid labeled ordering for the constructed \textsc{SM-RAMP} instance.
Assume otherwise, then, there must exist two vertices $v'_i$ and $v'_j$ in $\mathcal{S}$ (without loss of generality, $i < j$) that are assigned labels $\ell_i$ and $\ell_j$ such that either $v'_iv'_j$ is an arc in $A_{\ell_i}$, or $v'_iv'_j$ is an arc in $B_{\ell_j}$. However, by construction, the existence of the former of these two arcs implies that the rotation of $r'_i$ in direction $d_i$ would have hit $r'_j$ in its initial orientation, and that of the latter of these two arcs implies that the rotation of $r'_j$ in direction $d_j$ would have hit $r'_i$ in its vertical orientation, or that $r'_i$ in its vertical orientation crosses $r'_j$ in its initial orientation: either of these two cases invalidates $(\mathcal{R}_{\text{sol}}, \mathcal{D})$ being a solution to the \textsc{SM-RAMP} instance.
\end{proof}

While the \textsc{SM-RAMP} problem, which we initially set to solve, inspired the $k$-\textsc{Compatible Ordering} problem, the latter's applications extend beyond problems on robotic arms, as it can capture constraints that are not inherently geometric. More precisely, the $k$-\textsc{Compatible Ordering} problem models a class of element-ordering problems in which each element admits one of $k$ possible insertion modes, where the chosen mode determines the set of elements that must precede it and the set that must follow it, independently of the insertion modes of these other elements.

We use the \textsc{Truck Unloading} problem as an example of a $3$-\textsc{Compatible Ordering} problem. Let $b_1, \ldots, b_n$ be a set of boxes in a truck, where each box is a rectangular cuboid. We are looking to place these boxes in a single stack on a platform inside a warehouse, while ensuring that the stack does not topple. Each box may be placed in the stack with any of its six faces oriented downward; we will assume that opposite faces count as the same orientation, so each box has three orientations, each of which corresponds to a distinct face oriented downward (where no two of those distinct faces are opposite to each other). The three faces are numbered $1$ to $3$ arbitrarily. The orientation of a box determines its weight distribution, which should be taken into account when it comes to balancing the stack. Choosing label $\ell$ for box $b_i$ means: ``put box $b_i$ in the stack with face $\ell$ facing downward''. Getting a box onto the stack on the platform is a two-step process: the first step involves getting the box out of the truck, and the second step involves placing the box at the top of the stack, in such a way that does not cause the stack to topple. We add one more restriction: the orientation of the box has to be chosen before it is lifted (i.e., before the first step), and the box preserves its orientation until it is placed on top of the stack. $A$ graph constraints are associated with the first step, whereas $B$ graph constraints are associated with the second step. More specifically, the graph $A_\ell$ has an arc $b_ib_j$ whenever moving $b_i$ outside of the truck in orientation $\ell$ must wait until $b_j$ has already been moved onto the stack (e.g., $b_i$ sits deeper in the truck than $b_j$, and cannot be carried out of the truck in orientation $\ell$ without hitting $b_j$). The graph $B_\ell$ has an arc $b_ib_j$ whenever $b_i$ being added to the stack in any orientation before $b_j$ in orientation $\ell$ would cause the stack to topple, due to the weight distribution in $b_i$ prohibiting $b_j$ from being placed on top of it in orientation $\ell$, irrespective of $b_i$'s orientation. Thus, when a box $b_i$ is assigned label $\ell$ in a solution's ordering, we simultaneously enforce that all its $A_\ell$-successors appear earlier than it in the ordering, ensuring that no box prevents box $b_i$ from being moved out of the truck, and all its $B_\ell$-predecessors appear later than it in the ordering of $V$ in a solution, ensuring that no box positioned under $b_i$ in the stack causes the stack to topple.

While both the \textsc{SM-RAMP} problem and the \textsc{Truck Unloading} problem can be modeled as $k$-\textsc{Compatible Ordering} problems ($k = 2$ for \textsc{SM-RAMP}, $k = 3$ for \textsc{Truck Unloading}), the value of $k$ is fixed and is intrinsic to the definition of both problems. With that being said, \textsc{Truck Unloading} can be generalized more naturally by replacing boxes with polyhedrons that can be oriented in one of $k$ ways. We have therefore decided to include the parameter $k$ in the formulation of $k$-\textsc{Compatible Ordering} not only to capture a wider array of problems, but also to analyze how the value of $k$ affects the problem's tractability.

\begin{figure}
    \centering
    \includegraphics[width=0.75\linewidth]{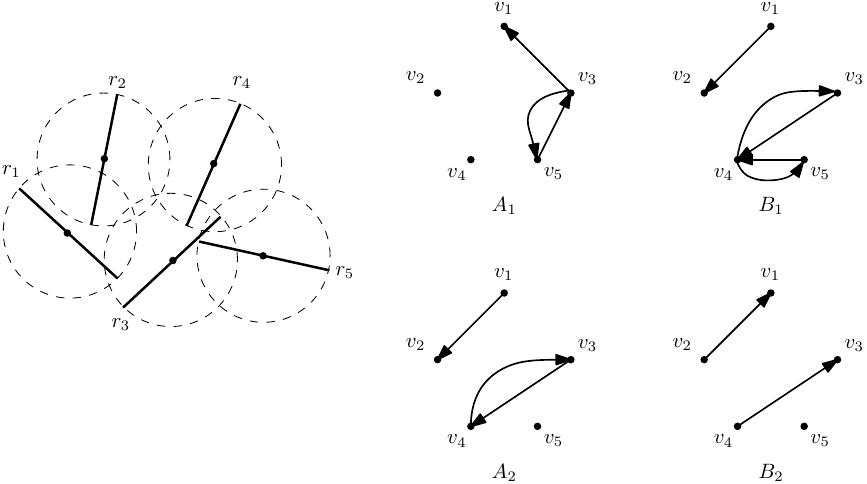}
    \caption{An \textsc{SM-RAMP} instance (left) and its equivalent $2$-\textsc{Compatible Ordering} instance (right). Robotic arms can move towards the vertical orientation in the order $r_1,r_5,r_3,r_4,r_2$, where $r_1,r_3$ rotate clockwise and $r_2,r_4,r_5$ rotate counterclockwise. The arc $v_3v_5$ in $A_1$ means that a clockwise rotation of $r_3$ to its final orientation would hit $r_5$ if $r_5$ has not rotated. The absence of the arc $v_3v_5$ from $B_2$ means that $r_3$ does not block the counterclockwise rotation of $r_5$ if $r_3$ is already vertical. The solution of the $2$-\textsc{Compatible Ordering} instance corresponding to the solution of \textsc{SM-RAMP} described above is the solution $\mathcal{S}=v_1,v_5,v_3,v_4,v_2$ and $\mathcal{L}=1,2,1,2,2$.}
    \label{fig:toothpicks}
\end{figure}

\textsc{SM-RAMP} has a counterpart where robots can move back and forth before reaching their final orientations, which is the \textsc{MM-RAMP} problem defined earlier. One then wonders whether we can generalize \textsc{MM-RAMP} as we generalized \textsc{SM-RAMP}. The answer is positive. The following problem is a natural generalization of \textsc{MM-RAMP}:
\smallskip 

\begin{tcolorbox}[colback=white, colframe=black,  
                  arc=4mm, boxrule=0.3mm, 
                  fonttitle=\bfseries]
$k$-\textsc{Compatible Set Arrangement}
\\\
\textbf{Input: } A vertex set $V$, a collection $\mathcal{G}$ of $k$ pairs of directed graphs $(A_1, B_1), \ldots, (A_k, B_k)$, such that $V(A_i) = V(B_i) = V$ for $1 \leq i \leq k$, and two sets $V_{\text{s}}, V_{\text{t}} \subseteq V$.
\\\textbf{Output: } Whether there exists a sequence of sets $\mathbb{V} = V_1, \ldots, V_q$, such that:
\begin{itemize}
    \item $V_{\text{s}} = V_1$ and $V_{\text{t}} = V_q$;
    \item $|V_j \Delta V_{j + 1}| = 1$ for $1 \leq j \leq q - 1$ (i.e., a vertex is either added to or removed from $V_j$ to obtain $V_{j + 1}$); and
    \item For each vertex $v$ that is added to or removed from a set $V_j$, there exists a label $\ell \in [k]$ assigned to $v$ such that:
    \begin{itemize}
        \item If $v$ is added to $V_j$, $v$ is a sink in $A_\ell[V_{j + 1}]$ and a source in $B_\ell[V \setminus \left(V_{j + 1} \setminus \{v\}\right)]$. 
        \item If $v$ is removed from $V_j$, $v$ is a sink in $A_\ell[V_{j}]$ and a source in $B_\ell[V \setminus \left(V_{j} \setminus \{v\}\right)]$.
    \end{itemize}
\end{itemize}   
\end{tcolorbox}

\paragraph{Our contributions}
The main technical contributions of the paper are efficient algorithms to find $k$-compatible orderings, and hardness results for $k$-\textsc{Compatible Ordering} and $k$-\textsc{Compatible Set Arrangement}. 
We first prove the following dichotomy (Section~\ref{sec:positive_results}):

\begin{theorem}
\label{thm:kcompac_dico}
$k$-\textsc{Compatible Ordering} can be decided in polynomial time if $k=1$ and is $\mathsf{NP}$-complete for every $k \ge 2$.
\end{theorem}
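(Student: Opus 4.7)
The plan is to handle the two sides of the dichotomy separately.

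For $k=1$ the labelling is forced, so a solution is just an ordering witnessing the induced structural precedence constraints. Placing $v$ at position $i$ forces every out-neighbour of $v$ in $A_1$ into the prefix $\{s_1,\dots,s_{i-1}\}$ and every in-neighbour of $v$ in $B_1$ into the suffix $\{s_{i+1},\dots,s_n\}$; both conditions bundle into a single precedence digraph $H$ obtained by adding an arc $u\to v$ whenever $(v,u)\in A_1\cup B_1$. A valid ordering is then exactly a topological sort of $H$, so the instance reduces to testing whether $A_1\cup B_1$ is acyclic, which is decidable in polynomial time.

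For $k\ge 2$, membership in $\mathsf{NP}$ is immediate: the pair $(\mathcal{S},\mathcal{L})$ is a polynomial-size certificate whose validity is checkable in polynomial time. For the hardness it suffices to treat $k=2$, since any $2$-instance lifts to a $k$-instance by duplicating $(A_1,B_1)$ into every additional slot, and any $k$-labelling of the lifted instance can be renamed to $\{1,2\}$ without affecting feasibility. My plan is to reduce from $3$-\textsc{Sat}: each variable $x_i$ becomes a vertex $v_i$ whose two labels encode the two truth values, and each clause $C_j$ is realised by a small gadget on auxiliary vertices $p_j^1,p_j^2,p_j^3$ that produces a directed cycle in the induced precedence digraph exactly when all three literals of $C_j$ are false. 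The gadget relies on two primitives: a \emph{rigid} precedence $p_j^r\to v$ obtained by inserting $(v,p_j^r)$ into both $A_1$ and $A_2$, and a \emph{conditional} precedence $v\to p_j^{r+1}$ triggered only when $v$ carries the label falsifying the corresponding literal, obtained by inserting $(p_j^{r+1},v)$ into that single $B_\ell$. Chaining three such rigid-conditional pairs around $p_j^1,p_j^2,p_j^3$ yields the clause cycle.

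Putting the gadgets together, the induced precedence digraph should be acyclic if and only if each clause cycle is broken, i.e.~every clause has a true literal; so the constructed instance is a yes-instance of $2$-\textsc{Compatible Ordering} precisely when the $3$-\textsc{Sat} formula is satisfiable. The main obstacle I anticipate is ruling out \emph{parasitic} cycles snaking across several clause gadgets, since a variable vertex appears in all of its clauses simultaneously. The cleanest remedy is to replicate each variable vertex once per clause occurrence and glue the copies together with a synchronisation sub-gadget that forces them to share a common label; once this is in place, every cycle in the induced precedence digraph is confined to a single clause gadget and the correspondence with satisfying assignments follows.
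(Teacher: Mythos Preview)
Your treatment of $k=1$ is correct and coincides with the paper's (Lemma~\ref{lem:new-1-compat-ordering}): both reduce to testing whether $A_1\cup B_1$ is a DAG and read off a solution from a reversed topological sort. The lift from $k=2$ to general $k$ and $\mathsf{NP}$ membership are also fine.

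For the hardness direction the paper takes a different route, and your sketch leaves a genuine gap. The paper's clause gadget puts the three literal vertices $c_i^1,c_i^2,c_i^3$ on a directed triangle \emph{inside $A_2$}; by Observation~\ref{obs:directedcycle} they cannot all receive label~$2$, so one of them is forced to carry label~$1$. That label-$1$ clause vertex then \emph{drives} the label of its associated variable via the clause--variable arcs ($c^+x\in A_1$ and $xc^+\in A_2$ for a positive literal, a $B_1$-digon for a negative one). Because there is a single vertex per variable, no synchronisation is ever needed, and the forward direction is finished by exhibiting the explicit ordering $\mathcal{X}^T,\mathcal{C}^T,\mathcal{C}^R,\mathcal{X}^F$.

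In your scheme the truth value lives on the variable vertex and the clause gadget is only a detector; this is precisely where parasitic cycles bite, and the issue is not hypothetical. With $C_1=(x_1\lor x_3\lor\neg x_2)$, $C_2=(\neg x_1\lor x_2\lor x_3)$ and the satisfying assignment $x_1=x_2=\mathrm{T}$, your basic gadget already induces the cycle $p_1^1\to v_1\to p_2^2\to v_2\to p_1^1$. The fix you propose --- replicate and synchronise --- is where the argument currently stops. A label-equality gadget between two copies $u,u'$ does exist (for instance $(u,u')\in A_2\cap B_2$ together with $(u',u)\in A_1\cap B_1$ forces equal labels while imposing $u<u'$ under label~$1$ and $u'<u$ under label~$2$), but those very sync arcs connect copies lying in different clauses, so the assertion that ``every cycle is confined to a single clause gadget'' is no longer automatic and needs its own proof. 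Until that argument is supplied, the forward direction of your reduction is not established.
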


The theorem ensures, in particular, that if all the robotic arms are allowed to rotate only in one direction (clockwise for instance), then we can decide the problem in polynomial time. Since we have two directions and since $k$-\textsc{Compatible Ordering} is $\mathsf{NP}$-complete for $k = 2$, if \textsc{SM-RAMP} can be decided in polynomial time, it must be due to some topological or geometric structure of the constraints. 

In that direction, we also give some evidence that only very specific topologies can help.  Namely, we prove that $2$-\textsc{Compatible Ordering} remains $\mathsf{NP}$-complete even when the union of all the graphs in $\mathcal{G}$ is planar, or more formally: 

\begin{restatable}{theorem}{thmNPPlanar}
\label{thm:kcompat-npc}
$k$-\textsc{Compatible Ordering} is $\mathsf{NP}$-complete for $k \geq 2$ even when restricted to instances where the union of all graphs in $\mathcal{G}$ is planar and has degeneracy $3$.
\end{restatable}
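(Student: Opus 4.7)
The plan is to reduce from \textsc{Planar 3-SAT} (or any equivalent planar variant of satisfiability with bounded variable occurrence), where the variable--clause incidence graph is planar. Given a formula $\varphi$, I would build a $2$-\textsc{Compatible Ordering} instance $(V, \{(A_1, B_1), (A_2, B_2)\})$ in which the label $\ell \in \{1, 2\}$ of a vertex plays the role of a truth value (say, $1 = \text{true}$ and $2 = \text{false}$). The vertices are grouped into one \emph{variable gadget} per variable and one \emph{clause gadget} per clause, linked by short directed \emph{wires}. The whole construction is drawn on top of a fixed planar embedding of the incidence graph of $\varphi$, so that the union of all arcs in $A_1, B_1, A_2, B_2$ (viewed as an undirected graph) follows this embedding and is therefore planar.

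A variable gadget is a short cycle of vertices whose arcs in $A_1, A_2, B_1, B_2$ are chosen so that the only feasible labelings of the cycle are the two monochromatic ones, encoding the two truth values. Each occurrence of the variable in a clause is propagated outward by a directed wire that either preserves the cycle label (positive occurrence) or flips it (negative occurrence), using a chain of vertices whose local sink/source constraints force the label to either be copied or swapped. A clause gadget is a constant-size subgraph with three \emph{endpoint} vertices, one per literal, designed so that no valid labeled ordering exists when all three endpoints are labeled $2$, while any other combination admits one. A natural realization uses a small central configuration where the sink-in-$A_\ell$ and source-in-$B_\ell$ conditions at the center cannot all be satisfied if all three endpoints insist on being labeled false. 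Soundness and completeness follow directly from label propagation along the wires and the monochromatic behavior of variable cycles.

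The main obstacle will be satisfying the three structural requirements simultaneously. Planarity of each individual graph is immediate from the incidence-graph layout, but the union of four digraphs can fail to be planar if one unordered pair of vertices carries arcs from several graphs in $\mathcal{G}$ and these parallel edges create crossings elsewhere; I would avoid this by subdividing every wire with enough auxiliary vertices and distributing the four arc types across distinct subdivision edges, so that the underlying simple graph faithfully follows the embedding. The $3$-degeneracy requirement will be enforced by designing every gadget vertex to have degree at most $3$ in the union: each variable-cycle vertex touches only its two cycle neighbors and at most one wire, each wire vertex has degree two or three, and each clause-gadget vertex has constant degree bounded by $3$. The delicate point will be engineering the clause gadget so that it rules out the all-false case using only degree-$3$ vertices while staying planar around the formula's clause node; this will likely force a gadget of small but carefully chosen size in which the arcs of $A_1, A_2, B_1, B_2$ are split across different edges so that the sink and source obligations at the central vertex genuinely translate into the required Boolean constraint.
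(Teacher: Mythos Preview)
Your high-level plan---reduce from a planar bounded-occurrence SAT variant and let labels encode truth values---matches the paper's, but the gadget architecture you sketch is both more elaborate than necessary and contains a real gap.

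The paper's construction is strikingly minimal: each variable is a \emph{single vertex}, each clause is a directed $3$-cycle $c_i^1c_i^2c_i^3$ in $A_2$ alone, and there are no wires at all. The clause gadget works because of a one-line observation you do not invoke: a directed cycle in $A_\ell$ forbids all its vertices from receiving label~$\ell$, so at least one of $c_i^1,c_i^2,c_i^3$ must be labeled~$1$. Each literal vertex is then linked directly to its variable vertex by just two arcs (for a positive literal, $c^+x\in A_1$ and $xc^+\in A_2$; for a negative literal, the digon $c^-x,xc^-\in B_1$), and a short case analysis on the ordering shows that a literal vertex labeled~$1$ forces the correct label on its variable. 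The graph $B_2$ is empty. Planarity follows by expanding each clause node of the incidence graph into three nearby points; degeneracy~$3$ is immediate because every clause vertex has undirected degree~$3$ and variable vertices form an independent set.

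Your proposed variable gadget is the concrete gap. You want a short cycle whose only valid labelings are the two monochromatic ones, but you do not say how to build it, and the natural attempts fail: placing a directed cycle in $A_1$ (or $A_2$) \emph{forbids} the all-$1$ (or all-$2$) labeling rather than forcing it, and pairwise ``copy the label'' arcs between adjacent cycle vertices tend to rule out the monochromatic labelings along with the mixed ones. It may be possible to engineer such a gadget, but it is nontrivial and you would have to exhibit it explicitly before the reduction goes through. The paper sidesteps this entirely by not needing label propagation: only clause vertices labeled~$1$ constrain their variable, and a variable vertex can take either label freely unless some literal vertex pins it. As a minor point, your worry about parallel arcs from different graphs spoiling planarity of the union is unfounded---the union is a simple graph by definition, so multiplicities collapse.
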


The theorem above shows that the topology and having few constraints locally (bounded degeneracy) do not help to obtain polynomial-time algorithms. The proof has a drawback: some of the graphs in $\mathcal{G}$ contain directed cycles. As far as yes-instances of \textsc{SM-RAMP} are concerned, such cycles do not exist, as one can prove that a set of arms cannot prevent themselves from rotating in a ``cyclic'' manner if a solution exists.  One may then wonder what happens when we restrict graphs in $\mathcal{G}$ to be acyclic; we prove that the problem remains $\mathsf{NP}$-complete. However, we prove that if we strengthen this assumption a bit, the problem becomes polynomial-time solvable. Namely, if there exists an integer $x$ such that the union of the pair $(A_x,B_x)$ forms an acyclic graph, then $k$-\textsc{Compatible Ordering} can be decided in polynomial time; unfortunately, in the case of robotic arms, a cycle is possible in the union of the two directed graphs representing constraints on the same rotation direction.

\begin{restatable}{theorem}{thmNPAcyclic}
\label{thm:kcompat-npc-acyclic}
   $k$-\textsc{Compatible Ordering} is $\mathsf{NP}$-complete for $k \geq 2$ even when restricted to instances where none of the graphs in $\mathcal{G}$ contains a directed cycle. 
   \end{restatable}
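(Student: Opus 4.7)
Containment in $\mathsf{NP}$ is immediate: a polynomial-size certificate consists of the ordering $\mathcal{S}$ and the labeling $\mathcal{L}$, and the sink/source conditions can be verified at each vertex in polynomial time.

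For the lower bound, my plan is to reduce from $3$-\textsc{SAT} to $2$-\textsc{Compatible Ordering} restricted to instances in which each of $A_1, B_1, A_2, B_2$ is acyclic; the extension to $k \ge 2$ is obtained by padding $\mathcal{G}$ with empty pairs. The key observation behind the reduction is that, even when each individual graph $A_\ell$ or $B_\ell$ is acyclic, the effective precedence graph induced by a labeling (the union of all arcs activated by the chosen labels) can still contain cycles arising purely from interactions between arcs of different labels. Since a valid ordering exists if and only if this effective graph is acyclic, there is room for $\mathsf{NP}$-hardness even under the acyclic restriction.

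For the construction, for each variable $x_i$ I would introduce a vertex $v_i$ with label $1$ encoding ``true'' and label $2$ encoding ``false''. For each clause $C_j$ with literals $\ell_1, \ell_2, \ell_3$ on variables $v_{a_1}, v_{a_2}, v_{a_3}$, I would introduce three auxiliary vertices $w_1^j, w_2^j, w_3^j$ and, for every $k \in \{1,2,3\}$ (indices mod $3$), add a $B_{\phi(\ell_k)}$-arc $(w_k^j, v_{a_k})$ together with an $A_{\phi(\ell_{k+1})}$-arc $(v_{a_{k+1}}, w_k^j)$, where $\phi(\ell)$ denotes the falsifying label of the literal $\ell$ (so $\phi(x_a)=2$ and $\phi(\bar{x}_a)=1$). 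These two arcs become active in the effective graph precisely when $\lambda(v_{a_k}) = \phi(\ell_k)$ and $\lambda(v_{a_{k+1}}) = \phi(\ell_{k+1})$, so the full effective cycle $v_{a_1} \to w_1^j \to v_{a_2} \to w_2^j \to v_{a_3} \to w_3^j \to v_{a_1}$ appears if and only if every literal of $C_j$ is falsified. Crucially, every $A$-arc goes from a variable vertex to an auxiliary vertex and every $B$-arc goes from an auxiliary vertex to a variable vertex, so each of $A_1, A_2, B_1, B_2$ is bipartite and hence acyclic by construction.

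The main obstacle of the plan is ruling out unintended cross-clause cycles in the effective graph under a satisfying assignment, since a variable can appear with its falsifying label in many clauses and the arcs activated across several gadgets could combine to close a spurious directed cycle. To handle this, I would duplicate each variable vertex once per clause in which it occurs, and then add small equality gadgets between the copies (built with the same effective-cycle trick but on two vertices, designed to force a cycle precisely when the two copies receive different labels) so that all copies of a variable are forced to share a label. After duplication, the activated arcs of each clause gadget only touch vertices private to that clause, and when equality holds the equality-gadget arcs reduce to dead-end stubs, so every directed cycle in the effective graph must live inside a single clause gadget. Completeness (a satisfying assignment labels each $v_i$ consistently, breaks every clause cycle at a satisfied literal, and yields an acyclic effective graph admitting a topological ordering) and soundness (any valid labeling must satisfy all equality gadgets and avoid every clause cycle, hence corresponds to a satisfying assignment) then follow from purely local reasoning inside one gadget at a time.
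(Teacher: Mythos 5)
Your overall strategy is sound and genuinely different from the paper's (the paper modifies its planar reduction by adding dummy clause vertices, dummy negative-clause vertices and a global vertex so that the old cycles are split across $A_1\cup A_2$ and $B_1\cup B_2$, whereas you encode clauses directly as ``effective'' cycles, using the characterization of Observation~\ref{obs:residgraph}). Your clause gadget is correct: the arc $v_{a_{k+1}}\to w_k^j$ placed in $A_{\phi(\ell_{k+1})}$ is activated by the tail's label and the arc $w_k^j\to v_{a_k}$ placed in $B_{\phi(\ell_k)}$ by the head's label, so the six-cycle through $w_1^j,w_2^j,w_3^j$ appears exactly when all three literals are falsified, the labels of the $w$'s are irrelevant, and each single graph is acyclic because within each graph all arcs cross the bipartition in one direction (your phrase ``bipartite hence acyclic'' is not a valid justification by itself, and the cycle you write traverses the arcs backwards, but both are cosmetic). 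You also correctly identified the real danger, cross-clause cycles under a satisfying assignment.

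The genuine gap is the equality gadget, which is exactly the piece your completeness argument rests on and which you never construct. If you take ``the same effective-cycle trick but on two vertices'' literally, the only way to get a cycle precisely when the two copies $u,u'$ disagree is to put $u\to u'$ in $A_1\cap B_1$ and $u'\to u$ in $A_2\cap B_2$; but then under \emph{equal} labels one of the two arcs stays active (e.g., $u\to u'$ when both copies are labeled $1$), so the claim ``when equality holds the equality-gadget arcs reduce to dead-end stubs'' is false, and the live copy-to-copy arcs can be spliced together with clause arcs. Concretely, with $C_1=(\bar x\vee y\vee s)$ and $C_2=(y\vee\bar x\vee t)$ and the satisfying assignment $x=\text{true}$, $y=\text{false}$, $s=t=\text{true}$, the effective graph contains the cycle $x^{C_1}\to x^{C_2}\to w_1^{C_2}\to y^{C_2}\to y^{C_1}\to w_1^{C_1}\to x^{C_1}$, and by adding clauses forcing $x$ true and $y$ false one gets a satisfiable formula mapped to a no-instance, so completeness fails for this instantiation. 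The fix is to route each forbidden label combination through auxiliary vertices private to the gadget, e.g.\ forbid $(\lambda(u),\lambda(u'))=(1,2)$ with $u\to z\in A_1$, $z\to u'\in B_2$, $u'\to z'\in A_2$, $z'\to u\in B_1$, and symmetrically forbid $(2,1)$; then no cycle can pass through $z$ or $z'$ unless the copies disagree, the one-directional bipartite structure of every single graph is preserved, and your ``every cycle lives inside one clause gadget'' argument becomes valid. As written, however, the proposal asserts rather than establishes the key dead-stub property, so the proof is incomplete at its crucial step.
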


Next, we investigate what other restrictions make the problem easy to solve. We present several polynomial-time algorithms. Our first result proves that the problem is polynomial-time solvable if the union of all (directed) graphs has bounded (undirected) treewidth. We note that this is not a consequence of the famous Courcelle meta-theorem~\cite{DBLP:journals/eatcs/CourcelleE12} stating that all problems that can be expressed in monadic second-order logic can be decided in polynomial time on bounded treewidth graphs.  In a sense, our result can be interpreted as showing that when robotic arms ``weakly'' interact, we obtain a polynomial-time algorithm.

\begin{restatable}{theorem}{thmTreewidth}
    \label{thm:bdd-treewidth}$k$-\textsc{Compatible Ordering} can be solved in time $\mathcal{O}(|V| \cdot (t^2 \cdot t! \cdot k^t))$, where $t$ is the treewidth of the undirected union of the graphs in $\mathcal{G}$.
\end{restatable}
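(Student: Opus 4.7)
The plan is a dynamic programming on a nice tree decomposition of the undirected union graph $G := \bigcup_{i=1}^{k}(A_i \cup B_i)$, which has width $t$ by assumption. The key observation that makes such a DP work is that every $k$-\textsc{Compatible Ordering} constraint is encoded by a single arc of some $A_i$ or $B_i$, and hence by an edge of $G$; therefore the relative order in the final solution of two vertices that never share a bag of the decomposition is completely unconstrained, and can be interleaved arbitrarily when merging partial solutions.

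First I would compute a nice tree decomposition $(T,\{X_x\}_{x\in V(T)})$ of width at most $t$ in time $f(t)\cdot|V|$ using any standard constant-factor approximation. Then, for each node $x$, the DP table $D_x$ is indexed by pairs $(\sigma_x, \lambda_x)$, where $\sigma_x$ is a linear order on $X_x$ and $\lambda_x : X_x \to [k]$ is a labeling; the intended semantics is that $D_x(\sigma_x, \lambda_x)$ is true iff these data extend to a linear order and labeling of $V_x$ (the vertices appearing in the subtree rooted at $x$) that satisfy every $k$-\textsc{Compatible Ordering} constraint whose two endpoints both lie in $V_x$.

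The transitions follow the usual nice-decomposition pattern. At a leaf node, the unique empty state is true. At an \emph{introduce} node adding a vertex $v$, I copy the relevant child entry and, for each $u$ already in the bag, verify that the position of $v$ relative to $u$ in $\sigma_x$ respects the four arc-level constraints imposed by $\lambda_x(u)$ and $\lambda_x(v)$ on the four graphs $A_{\lambda_x(u)}$, $A_{\lambda_x(v)}$, $B_{\lambda_x(u)}$, and $B_{\lambda_x(v)}$. At a \emph{forget} node I take the disjunction over the $O(t\cdot k)$ possible choices of position in $\sigma_x$ and label for the forgotten vertex. At a \emph{join} node I take the pointwise AND of the two child tables on matching $(\sigma_x, \lambda_x)$. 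The input is a yes-instance iff some entry is true at the (empty-bag) root.

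The main subtle point I would need to justify carefully is that tracking only the relative order of the current bag loses no information. By the tree decomposition property, once a vertex $v$ is forgotten at $x$, every arc of $G$ incident to $v$ has its other endpoint in $V_x$, so all constraints involving $v$ have already been verified during the introductions of $v$ and its neighbors, and the relative position of $v$ to any vertex in $V\setminus V_x$ is completely free; this in turn is consistent with the way sibling subtrees are merged at join nodes, where any cross-arc would contradict the separator property of $X_x$. Granting this correctness argument, the number of states per bag is bounded by $(t+1)!\cdot k^{t+1}$, each introduce/forget/join transition processes $O(t)$ pairs of vertices in $O(1)$ time each, the decomposition has $O(|V|)$ nodes, and the overall running time telescopes to $\mathcal{O}(|V|\cdot t^2\cdot t!\cdot k^t)$ as claimed.
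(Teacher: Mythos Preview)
Your proposal is correct and follows essentially the same approach as the paper: dynamic programming on a nice tree decomposition, with states given by a linear order and labeling of the current bag, and with leaf/introduce/forget/join transitions defined exactly as you describe. The paper's correctness argument (Observations~\ref{obs:core} and~\ref{obs:emptyinters}) is precisely your ``forgotten vertices have no arcs outside $V_x$, so relative order across separators is free'' reasoning, and the running-time bookkeeping matches up to the same $(t+1)$-versus-$t$ looseness.
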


Our second and final positive result also exploits the expected structure of the $k$-\textsc{Compatible Ordering} instances encoding \textsc{SM-RAMP} instances. Indeed, since robotic arms only have constraints related to nearby robotic arms, one should be able to iteratively encode all the constraints by considering sets of robotic arms, rather than individual arms. This partitioning is such that two robotic arms belonging to the same set have the same relationship with respect to the robotic arms outside of the set, without partitioning the robotic arms into too many sets. This idea is tightly coupled with the notions of modular decomposition and modular width, which we modify slightly to get them to work for $k$-\textsc{Compatible Ordering}, with the existence of labels and directions on the edges. We prove that if the union of the graphs has bounded labeled modular width (see Section~\ref{sec:positive_results} for a formal definition), then the problem can be decided in polynomial time. We note that two robotic arms (or, in the context of $k$-\textsc{Compatible Ordering}, two vertices) belonging to the same set are allowed to have different constraints involving the rest of the robotic arms (or vertices) in the same set. We provide intuition for the modular width using the unloading problem. For instance, if the objects are stored in several boxes in a container, we need to completely empty and remove a box before being able to open another one. Thus, all the objects in the same box behave the same with respect to other boxes.

\begin{restatable}{theorem}{thmModularwidth}
\label{thm:bdd-mdw}
$k$-\textsc{Compatible Ordering} can be solved in time $f(k + mw) \cdot |V|^{\mathcal{O}(1)}$, where $f$ is a computable function and $mw$ is the labeled modular width, assuming a labeled modular decomposition of the labeled union of $\mathcal{G}$ is given as part of the input. Thus, the problem is solvable in polynomial time when $k + mw$ is a constant. 
\end{restatable}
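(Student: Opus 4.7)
\medskip
\noindent\textbf{Proof proposal.}
The plan is to perform a bottom-up dynamic program on the given labeled modular decomposition tree $T$ of the labeled union $G$ of the $2k$ graphs in $\mathcal{G}$. For every node $M$ of $T$, I would maintain a table of reachable \emph{interfaces}: compact summaries of feasible labeled partial orderings of the vertices in $M$, retaining only the information needed to determine, together with the interfaces of sibling modules, whether the partial orderings can be combined into a feasible global ordering of $V$.

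The structural fact I would exploit is the module property itself: for every external vertex $v \notin M$, every label $\ell \in [k]$, and every graph $A_\ell$ or $B_\ell$, either $v$ is adjacent to all of $M$ or to none of it in that graph. Consequently, once $v$'s label is fixed, the interaction between $v$ and $M$ reduces to a small set of relative-position constraints of the form ``all of $M$ precedes (or follows) $v$'' coming from $v$'s own label, or ``$v$ precedes (respectively follows) every $M$-vertex carrying a specific label $\ell_u$'' coming from the labels chosen inside $M$. Hence the externally observable behavior of a labeled ordering of $M$ is captured by (i) which of the $k$ labels appear in $M$, and (ii) the relative order among the first and last occurrences of each such label within $M$; this gives at most $h(k)$ distinct interfaces for some computable function $h$.

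The DP is initialized at the leaves of $T$, where each leaf is a single vertex and therefore has exactly $k$ interfaces corresponding to its possible labels. At an internal node $M$ with $p \leq mw$ children $M_1, \ldots, M_p$ and labeled quotient graph $Q$, I would enumerate over the $p!$ orderings of the children, the at most $h(k)^p$ combined choices of an interface per child, and the label-consistency checks dictated by $Q$ and by the children's interfaces; each consistent combination yields an interface of $M$. Each internal node is then processed in time bounded by $f(k+mw) := p! \cdot h(k)^p \cdot \mathrm{poly}(p)$, and since $T$ has $\mathcal{O}(|V|)$ nodes the total running time is $f(k+mw) \cdot |V|^{\mathcal{O}(1)}$ as claimed. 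The hardest part will be pinning down the interface representation precisely, so that composition at each node is well-defined and captures \emph{all} valid interleavings --- notably those in which external vertices must be inserted between label classes of $M$, so that $M$ is not contiguous in the global ordering. Once the interface is fixed, correctness follows by a standard bottom-up induction on $T$.
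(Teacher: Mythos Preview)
Your high-level plan---bottom-up dynamic programming on the labeled modular decomposition with a per-module ``interface''---is the paper's plan as well, but you diverge in what the interface records. The paper keeps only the \emph{set of labels} appearing in the module; at an internal node it prunes the quotient (template) graph to the arcs activated by the chosen label sets and tests whether the result is acyclic. That test is exactly the question of whether the child modules can be laid out as \emph{contiguous blocks} in some order. You instead propose to store the relative order of the first and last occurrence of each label, precisely so that a child module need not occupy a contiguous interval in the parent's ordering. This is a genuinely different and more conservative design, and it buys you robustness against the interleaving phenomenon you flag in your last sentence---a subtlety the paper's block-level acyclicity check does not obviously handle.

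There is, however, an internal inconsistency you must repair before the argument goes through. You say you will ``enumerate over the $p!$ orderings of the children,'' but a total order on the $p$ children \emph{is} the contiguous-block model---the very thing your richer interface is meant to escape. If you commit to the first/last-marker interface, then each child contributes at most $2k$ markers, and what you must enumerate at an internal node is the set of total orders on the combined $\le 2kp$ markers that (a) respect each child's internal marker order and (b) satisfy every cross-module constraint, each of which, as you correctly observe, has the form ``last vertex of $M_j$ precedes first label-$\ell$ vertex of $M_i$'' or ``last label-$\ell$ vertex of $M_j$ precedes first vertex of $M_i$.'' There are at most $(2kp)!$ such orders, still bounded by a function of $k+mw$, so the running time survives. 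You also owe one sentence arguing that a feasible interleaving of \emph{markers} lifts to a feasible interleaving of the full vertex sets; this holds because every constraint that crosses module boundaries is expressible purely in terms of markers, but it should be stated rather than left implicit. With these two repairs your sketch becomes a proof; as written, the $p!$ step collapses your approach back to the paper's block picture and your richer interface is never used.
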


In an attempt to narrow down what makes $k$-\textsc{Compatible Ordering} hard, we consider whether looking for an orderable subset of the vertices, subject to the same constraints, makes the problem any easier to solve. In particular, we investigate whether the problem can be solved in polynomial time when the subset is bounded in size. We prove that this is likely false.

\begin{restatable}{theorem}{thmBoundedOrdering}
    $k$-\textsc{Compatible Bounded Ordering} is $\mathsf{W[1]}$-hard parameterized by $b$.
\end{restatable}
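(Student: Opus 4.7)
The plan is to give a parameterized reduction from \textsc{Clique} (parameterized by the clique size $q$, a canonical $\mathsf{W[1]}$-hard problem) to $k$-\textsc{Compatible Bounded Ordering}, mapping $q$ directly to the target parameter $b$. The key observation that makes this easy is that a symmetric pair of $A$-arcs $u \to v$, $v \to u$, present in every layer $\ell \in [k]$, acts as a label-independent hard pairwise incompatibility between $u$ and $v$: in any ordering containing both, the earlier vertex has an out-arc to the later one under every label and so fails sinkness no matter which label it is assigned.

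Given an instance $(G, q)$ of \textsc{Clique}, I would build the following instance of $k$-\textsc{Compatible Bounded Ordering}: set $V := V(G)$; for every non-edge $\{u,v\}$ of $G$ (with $u \neq v$) and every label $\ell \in [k]$, insert both arcs $u \to v$ and $v \to u$ into $A_\ell$; set $B_\ell := \emptyset$ for every $\ell$; and let $b := q$. The construction runs in polynomial time, and since the output parameter equals the input parameter, this is a genuine parameterized reduction; what remains is to verify the equivalence of solutions.

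The forward direction is immediate: any $q$-clique $C$ in $G$ has no internal non-edge, so $A_\ell[C]$ is arc-free for every $\ell$, whence any ordering of $C$ paired with any labeling is trivially a valid compatible bounded ordering of size $b$. For the backward direction, I would show that any valid compatible ordering $(s_1, \ldots, s_b; \ell_1, \ldots, \ell_b)$ must induce a clique: for any $i < j$, the sinkness condition on $s_i$ in $A_{\ell_i}[\{s_i, \ldots, s_b\}]$ forbids the arc $s_i \to s_j$, and by construction this arc would be present whenever $\{s_i, s_j\}$ is a non-edge of $G$. Thus every pair of selected vertices forms an edge in $G$, yielding a $q$-clique.

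The only genuinely conceptual step is recognizing that pairwise incompatibilities can be encoded so cleanly by symmetric $A$-arcs, and that \textsc{Clique} (equivalently, \textsc{Independent Set} on the complement) is exactly the right $\mathsf{W[1]}$-hard source to exploit this. Once that is seen, no gadget machinery is needed, and I do not foresee any significant technical obstacle; the reduction is essentially a one-line construction, which is arguably why the authors state the theorem with the brief parameterization by $b$ alone.
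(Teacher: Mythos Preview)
Your proposal is correct and follows essentially the same idea as the paper: encode pairwise incompatibility via digons in the $A$-graphs (leaving all $B$-graphs empty) and reduce from a canonical $\mathsf{W[1]}$-hard problem. The only cosmetic difference is that the paper reduces from \textsc{Multicolored Independent Set} with $k=1$ (adding digons both within colour classes and along edges), whereas you reduce from \textsc{Clique} (adding digons along non-edges, replicated in every layer); the equivalence argument and the parameter translation $b=q$ are the same.
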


%\paragraph{Our contributions for $k$-compatible set arrangements.}
Our final result concerns the $k$-\textsc{Compatible Set Arrangement} problem. Unlike the $k$-\textsc{Compatible Ordering} problem, which allows only one move per agent (or robotic arm), the $k$-\textsc{Compatible Set Arrangement} problem allows multiple moves per agent and is therefore not surprisingly ``harder'' to solve. Our next result confirms this intuition by showing that the problem is indeed $\mathsf{PSPACE}$-complete.

\begin{restatable}{theorem}{thmArrangement}
    \label{thm:arrangement}
    $k$-\textsc{Compatible Set Arrangement} is $\mathsf{PSPACE}$-complete, even when restricted to instances where the union of the graphs in $\mathcal{G}$ is planar, has maximum degree 6, and has bounded bandwidth.
\end{restatable}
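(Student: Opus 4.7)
The plan is to establish membership in $\mathsf{PSPACE}$ first, and then prove hardness by reducing from a canonical reconfiguration problem that is already $\mathsf{PSPACE}$-complete on planar, bounded-degree, bounded-bandwidth instances.

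For membership, a nondeterministic algorithm can guess the sequence $V_1,\ldots,V_q$ one move at a time, keeping in memory only the current set $V_j$, the next candidate vertex $v$ to add or remove, and its label $\ell \in [k]$. Verifying that $v$ is a sink in the appropriate $A_\ell$-induced subgraph and a source in the corresponding $B_\ell$-induced subgraph takes polynomial time. Each configuration is a subset of $V$, so the whole simulation runs in nondeterministic polynomial space; Savitch's theorem then places the problem in $\mathsf{PSPACE}$.

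For hardness, I would reduce from \textsc{Nondeterministic Constraint Logic} (NCL) on planar AND/OR graphs of maximum degree~$3$, in the bounded-bandwidth variant (established via the linearization techniques of van der Zanden, Bodlaender, and Wrochna for bandwidth-preserving $\mathsf{PSPACE}$-hardness of reconfiguration problems). Each NCL edge $e$ would be replaced by a pair of indicator vertices $x_e^+,x_e^- \in V$ whose membership in the current set $V_j$ encodes the orientation of $e$. Reversing the orientation of $e$ is then implemented by a short local sequence of add/remove operations that toggles these indicators. Each NCL vertex (AND/OR) would be replaced by a constant-size gadget whose arcs in the $A_\ell$ and $B_\ell$ graphs precisely enforce the flow or disjunction constraint: adding or removing an indicator that would create an inadmissible inflow would fail the sink/source test in the induced subgraph. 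The two labels $\ell \in [k]$ would play the role of the two opposite ``directions'' of a move, analogous to clockwise versus counterclockwise rotations in \textsc{SM-RAMP}, so $k = 2$ labels should suffice. The starting and target orientations of the NCL instance translate directly into $V_{\text{s}}$ and $V_{\text{t}}$.

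The main obstacle is enforcing all three structural restrictions---planarity, maximum degree~$6$, and bounded bandwidth---simultaneously. Planarity is natural to maintain if each gadget is embedded locally following a planar embedding of the NCL source graph. The degree bound of $6$ gives just enough slack to spread the indicator and gadget vertices of each node across a few interconnected sub-gadgets so that no vertex accumulates too many incident arcs in the union $A_1 \cup B_1 \cup A_2 \cup B_2$. The hardest part is preserving bounded bandwidth: all arcs of the constructed graph must stay within a constant-width window of a linear ordering of $V$. This forces the starting point of the reduction to itself be bandwidth-bounded, and it forces the gadgets to be laid out along a linear backbone with all ``long'' interconnections confined inside a constant window. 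Once the gadgets are in place, correctness is routine in both directions: any legal NCL edge reversal lifts to a constant-length subsequence of valid add/remove moves, and conversely every valid arrangement sequence projects, via the indicator encoding, back to a legal NCL reconfiguration, since any illegitimate move would have been blocked by an arc of some $A_\ell$ or $B_\ell$.
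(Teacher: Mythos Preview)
Your reduction source and high-level architecture match the paper's: reduce from planar bounded-bandwidth \textsc{C2C NCL}, introduce two endpoint vertices per NCL edge whose membership in the current set encodes orientation, enforce mutual exclusion via digons in the $A$ graphs, and inherit planarity, degree, and bandwidth locally from the source instance.

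The real gap is your account of the two labels. They are \emph{not} ``two opposite directions of a move''---add versus remove is already treated symmetrically in the problem definition, so no label needs to be spent on that. The two labels are what make the \emph{OR-vertex disjunction} expressible: when an edge is pulled away from an OR vertex $u$, at least one of the two other edges incident to $u$ must still point inward, and which one is a choice. The paper places the arc from one of those neighbors into $B_1$ and the arc from the other into $B_2$, so that the source constraint on the removed endpoint is satisfiable under \emph{some} label precisely when the OR condition holds; AND vertices receive identical arcs in both $B$ graphs, rendering the label irrelevant there. If you instead use the labels as you describe, you will not be able to encode ``at least one of two'' with a single sink/source test, and the gadget design stalls at exactly the point that matters. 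A second point you underestimate is the backward direction: a general solution to the constructed instance may interleave additions and removals arbitrarily, and the paper needs several normalization observations to show that a \emph{shortest} solution pairs each removal of $u^e$ with an immediately following addition of $v^e$ for the same edge $e$, which is what allows one to read off a legal NCL reversal sequence.
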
 

The paper is organized as follows: Section~\ref{sec:preliminaries} introduces the necessary preliminaries. Section~\ref{sec:k_compat_ordering} presents the results concerning the $k$-\textsc{Compatible Ordering} problem, including positive results in Section~\ref{sec:positive_results} and hardness results in Section~\ref{sec:hardness_results}. Section~\ref{sec:bounded_ordering} covers a variant of $k$-\textsc{Compatible Ordering} where we are interested in ordering a subset of the vertices in the input, subject to the usual constraints. Section~\ref{sec:compat_set_arrangement} covers the $\mathsf{PSPACE}$-completeness result for the $k$-\textsc{Compatible Set Arrangement} problem. Section~\ref{sec:conclusion} summarizes our main contributions and suggests potential avenues for future research, including but not limited to problems on robotic arms.

\section{Preliminaries}
\label{sec:preliminaries}
For $k \in \mathbb{N}$, we denote the set $\{1, \ldots, k\}$ by $[k]$. Given a graph $G$, we use $V(G)$ and $E(G)$ to refer to the vertex set and edge set of $G$, and we use $n$ and $m$ to refer to $|V(G)|$ and $|E(G)|$ respectively.
In a directed graph, a \emph{cycle} is a sequence of vertices $v_1, \ldots, v_n$ such that $v_iv_{i + 1}$ is an arc for $1 \leq i \leq n - 1$, $v_n$ is adjacent to $v_1$, and no vertices are repeated. A vertex $v$ is said to be an \emph{out-neighbor} (resp. an \emph{in-neighbor}) of $u$ in a directed graph $G$ if the arc $uv$ (resp. $vu$) belongs to the edge set of $G$.  The \emph{indegree} (resp. the \emph{outdegree}) of a vertex $v \in V(G)$ is the number of its in-neighbors (resp. out-neighbors). A vertex $v$ is said to be a \emph{sink} (resp. a \emph{source}) in a directed graph if it has no out-neighbors (resp. in-neighbors). A directed graph $G$ is said to be a \emph{directed acyclic graph} (DAG) if it does not contain any cycles. In a directed acyclic graph on $n$ vertices, a sequence of its vertices $v_1, \ldots, v_n$ is a \emph{topological ordering} if there does not exist an arc $v_iv_j$ in the graph, where $v_j$ occurs earlier than $v_i$ in the list. If a directed graph is edge-weighted, the \emph{in-weight} (resp. \emph{out-weight}) of a vertex $v$ is the sum of the weights of the edges going into (resp. going out of)~$v$. 

Given a graph $G$ and a subset $X \subseteq V(G)$, we denote the induced subgraph of $G$ on the vertices in $X$ by $G[X]$. Let the \emph{union} of a set of graphs that have the same vertex set $G_1 = (V, E_1), G_2 = (V, E_2), \ldots, G_k = (V, E_k)$ be a graph $G' = (V, \{ v_iv_j$ $|$ $v_iv_j \in \bigcup\limits_{i = 1}^{k} E_i\})$. If $G_1, \ldots, G_k$ are directed graphs, and we intend $G'$ to be undirected, we call the operation \emph{undirected union} instead.  For brevity, whenever we refer to the union of $\mathcal{G} = \{(A_1, B_1), \ldots, (A_k, B_k)\}$, we are referring to the union of the graphs in $\mathcal{G}$.

A graph is said to be \emph{planar} if its edges can be drawn in a plane as simple curves such that the drawings of two edges share an endpoint if and only if the edges themselves share an endpoint.

A $d$\emph{-degenerate} graph is an undirected graph in which every induced subgraph has a vertex of degree at most $d$; the \emph{degeneracy} of a graph is the smallest value of $d$ for which the graph is $d$-degenerate. The \emph{treewidth} of an undirected graph informally measures how close the graph is to a tree. We direct the reader to the relevant work for terminology and definitions related to the notions of treewidth and tree decomposition~\cite{cygan:parameterized}. Intuitively, a \emph{modular decomposition} of an undirected graph is a recursive partitioning of its vertex set, such that the neighbors of the vertices in the same partition found in the other partitions at the same recursion level are similar. The \emph{modular width} of an undirected graph is defined with respect to its modular decompositions, and is equal to the smallest maximum partition size taken across all its possible modular decompositions. Those two concepts are presented in detail in Section~\ref{sec:positive_results}. The \emph{bandwidth} of an undirected graph $(V, E)$ is a measure of how tightly the vertices of a graph can be arranged in a linear order. It is given by $\min\limits_{f} \max\{|f(v_i) - f(v_j)| : v_iv_j \in E\}$, where $f : V \mapsto \mathbb{N}$ is a one-to-one function. The degeneracy, treewidth, and bandwidth of a directed graph all refer to the respective measures of the underlying undirected graph.

\section{Finding $k$-compatible orderings}
\label{sec:k_compat_ordering}

In this section, we present the positive (Section~\ref{sec:positive_results}) and the hardness results (Section~\ref{sec:hardness_results}) for $k$-\textsc{Compatible Ordering}. 
\\
\\
\indent Recall that in the definition of $k$-\textsc{Compatible Ordering}, we are given a collection of constraints $\mathcal{G}$ and we seek a pair $(\mathcal{S}, \mathcal{L})$, where $\mathcal{S} = s_1, \ldots, s_n$ is an ordering of $V$,  $\mathcal{L} = \ell_1, \ldots, \ell_n$ is a sequence of labels, and vertex $s_i \in \mathcal{S}$ is assigned label $\ell_i \in [k]$ such that $s_i$ satisfies the following two constraints:
\begin{enumerate}[(i)]
\item $s_i$ is a sink in $A_{\ell_i}[s_i, \ldots, s_n]$; and
\item $s_i$ is a source in $B_{\ell_i}[s_1, \ldots, s_i]$.
\end{enumerate}

We present a set of definitions related to the problem statement.

\begin{definition}
    A directed graph $A_i$ in $\mathcal{G}$ is called an \textbf{$A$ graph}, whereas a directed graph $B_j$ in $\mathcal{G}$ is called a \textbf{$B$ graph}.
\end{definition}

\begin{definition}
    Given a solution $(\mathcal{S}, \mathcal{L})$ to a $k$-\textsc{Compatible Ordering} instance, we say that $s_i \in \mathcal{S}$ satisfies both \textbf{the sink constraint} (constraint (i)) and \textbf{the source constraint} (constraint (ii)).
\end{definition}

\begin{definition}
    Given a solution $(\mathcal{S}, \mathcal{L})$ to a $k$-\textsc{Compatible Ordering} instance, we refer to the sequence of labels $\mathcal{L}$ as a \textbf{labeling}, and we call $(\mathcal{S}, \mathcal{L})$ a \textbf{valid labeled ordering} (of the vertices in the instance), as opposed to an arbitrary labeled ordering, where the source and sink constraints need not be satisfied.
\end{definition}

\begin{figure}
    \centering
    \includegraphics[width=0.5\linewidth]{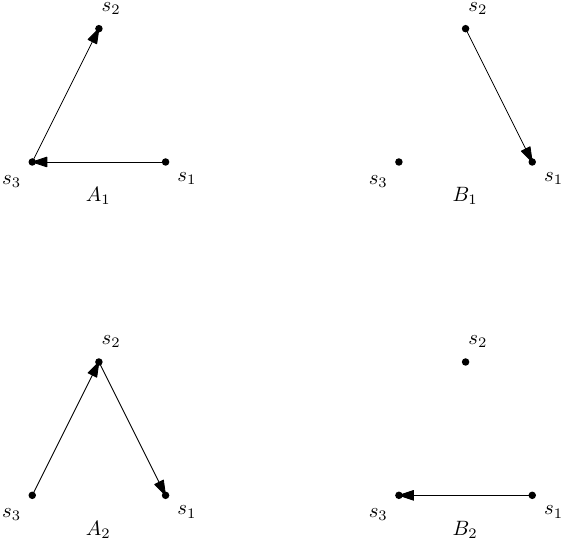}
    \caption{Example where taking the union of the $A$ graph and the $B$ graph of the same label turns a yes-instance of $k$-\textsc{Compatible Ordering} into a no-instance.}
    \label{fig:AB-merge}
\end{figure}

At times, in particular for the positive results in this section, we will find it useful to consider a $k$-\textsc{Compatible Ordering} instance derived from another instance by considering a subset of the vertex set in the input. 

\begin{definition}
    Given an instance $\mathcal{I} = (V, \{(A_1, B_1), \ldots, (A_k, B_k)\})$ of $k$-\textsc{Compatible Ordering}, an \textbf{induced instance} of $\mathcal{I}$ on $V' \subseteq V$ is the instance $\mathcal{I}' = (V', \{(A_1[V'], B_1[V']), \ldots, (A_k[V'], B_k[V'])\})$.
\end{definition}

The source and sink constraints lead to the following characterizations of any solution to the $k$-\textsc{Compatible Ordering} problem, for any label $\ell$ and vertices $v_i$ and $v_j$.

\begin{observation}
    \label{obs:ordering}If $v_iv_j$ is an arc in $A_\ell$ (resp. $B_\ell$), and $v_i$ (resp. $v_j$) is assigned label $\ell$, then $v_j$ must precede $v_i$ in the ordering.
\end{observation}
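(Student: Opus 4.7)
The plan is to unpack the two defining constraints of a valid labeled ordering and derive each case directly, noting that the two halves of the statement are essentially symmetric dual arguments on the sink versus source conditions.

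For the first case, suppose $v_i v_j$ is an arc in $A_\ell$ and $v_i$ is assigned label $\ell$. Let $p$ be the position of $v_i$ in $\mathcal{S}$, so $s_p = v_i$ and $\ell_p = \ell$. By constraint (i), $s_p$ is a sink in $A_\ell[s_p, \ldots, s_n]$, meaning $v_i$ has no out-neighbors in $A_\ell$ among $\{s_p, \ldots, s_n\}$. Since $v_iv_j \in A_\ell$ makes $v_j$ an out-neighbor of $v_i$, we conclude $v_j \notin \{s_p, \ldots, s_n\}$, and hence $v_j \in \{s_1, \ldots, s_{p-1}\}$, i.e., $v_j$ precedes $v_i$.

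For the second case, suppose $v_i v_j$ is an arc in $B_\ell$ and $v_j$ is assigned label $\ell$. Let $p$ be the position of $v_j$, so $s_p = v_j$ and $\ell_p = \ell$. Constraint (ii) forces $s_p$ to be a source in $B_\ell[s_1, \ldots, s_p]$, so $v_j$ has no in-neighbors in $B_\ell$ among $\{s_1, \ldots, s_p\}$. Since $v_iv_j \in B_\ell$ makes $v_i$ an in-neighbor of $v_j$, we get $v_i \notin \{s_1, \ldots, s_p\}$, i.e., $v_i \in \{s_{p+1}, \ldots, s_n\}$, so again $v_j$ precedes $v_i$.

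There is no real obstacle: the statement is a direct reading of the sink and source conditions from the problem definition. The only minor subtlety to mention is that the restricted subgraphs $A_\ell[s_p, \ldots, s_n]$ and $B_\ell[s_1, \ldots, s_p]$ both contain the vertex in question, but since the arc in question is between two distinct vertices $v_i \neq v_j$, no self-loop issue arises; implicitly one may assume the graphs are loopless (a loop $v v$ in $A_\ell$ or $B_\ell$ would trivially prevent $v$ from ever being placed under label $\ell$).
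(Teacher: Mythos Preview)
Your proof is correct and matches the paper's intent: the paper states this as an observation without an explicit proof, treating it as an immediate consequence of the sink and source constraints, which is exactly what you have unpacked.
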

\begin{observation}
    \label{obs:before}If $v_iv_j$ is an arc in all $A$ graphs or all $B$ graphs, then $v_j$ must precede $v_i$ in the ordering. 
\end{observation}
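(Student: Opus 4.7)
The plan is to derive Observation~\ref{obs:before} as an immediate corollary of Observation~\ref{obs:ordering} by a straightforward case analysis on which side of the pair the universal membership holds. Fix a valid labeled ordering $(\mathcal{S}, \mathcal{L})$, and let $\ell_i, \ell_j \in [k]$ denote the labels assigned to $v_i$ and $v_j$, respectively. The conclusion is about the relative positions of $v_i$ and $v_j$ in $\mathcal{S}$, so it suffices to produce, in each case, a single $A$ or $B$ graph whose label matches the endpoint whose label Observation~\ref{obs:ordering} needs.

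In the first case, assume $v_iv_j$ is an arc in every $A$ graph. Then in particular it is an arc in $A_{\ell_i}$, and since $v_i$ has been assigned label $\ell_i$, the ``$A$ part'' of Observation~\ref{obs:ordering} applies and forces $v_j$ to appear before $v_i$ in $\mathcal{S}$. In the second case, assume $v_iv_j$ is an arc in every $B$ graph. Then it is an arc in $B_{\ell_j}$, and since $v_j$ has been assigned label $\ell_j$, the ``$B$ part'' of Observation~\ref{obs:ordering} again forces $v_j$ to precede $v_i$. Since $(\mathcal{S}, \mathcal{L})$ was an arbitrary valid labeled ordering, the conclusion holds for every solution.

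There is essentially no obstacle here beyond invoking the right instance of Observation~\ref{obs:ordering}; the content of the statement is purely that the labels $\ell_i$ and $\ell_j$ cannot be used to dodge the constraint, because the arc is present under every possible label choice. If I wanted to make the argument even more uniform, I could phrase it as: Observation~\ref{obs:ordering} says the ordering must respect the arcs in $A_{\ell_i}$ (out of $v_i$) and in $B_{\ell_j}$ (into $v_j$), so any arc lying in the intersection $\bigcap_{\ell} A_\ell$ or $\bigcap_{\ell} B_\ell$ is respected unconditionally, which is exactly the claim.
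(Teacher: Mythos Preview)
Your proof is correct and matches the paper's intent: the paper states Observation~\ref{obs:before} without an explicit proof, treating it as an immediate consequence of Observation~\ref{obs:ordering}, which is exactly the derivation you give.
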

\begin{observation}
    \label{obs:directedcycle}For a set of vertices $V' \subseteq V$ forming a directed cycle in a graph $A_\ell$ or $B_\ell$, the vertices in $V'$ cannot all be labeled $\ell$.
\end{observation}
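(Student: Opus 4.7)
The plan is to derive the observation as a direct consequence of Observation~\ref{obs:ordering} applied iteratively around the alleged directed cycle. Suppose, for contradiction, that $V' = \{v_1, \ldots, v_r\}$ forms a directed cycle $v_1 \to v_2 \to \cdots \to v_r \to v_1$ in some graph $A_\ell$ (the case of $B_\ell$ is handled symmetrically), and that every $v_i \in V'$ is assigned label $\ell$ in a valid labeled ordering $(\mathcal{S}, \mathcal{L})$.

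First I would invoke Observation~\ref{obs:ordering} on each arc of the cycle: since $v_i v_{i+1} \in A_\ell$ and $v_i$ is labeled $\ell$, the vertex $v_{i+1}$ must precede $v_i$ in $\mathcal{S}$, for every $i \in \{1, \ldots, r-1\}$, and similarly $v_1$ must precede $v_r$ because of the closing arc $v_r v_1$. Chaining these relations yields
\[
v_1 \text{ precedes } v_r \text{ precedes } v_{r-1} \text{ precedes } \cdots \text{ precedes } v_2 \text{ precedes } v_1,
\]
which is impossible in a linear ordering of $V$.

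For the $B_\ell$ case, the same argument works with the roles swapped: an arc $v_i v_{i+1} \in B_\ell$ together with $v_{i+1}$ being labeled $\ell$ forces $v_{i+1}$ to precede $v_i$ in $\mathcal{S}$ (the second clause of Observation~\ref{obs:ordering}), and chaining these constraints around the cycle again produces a cyclic ``precedes'' relation, contradicting the fact that $\mathcal{S}$ is a linear ordering. There is no real obstacle here: once Observation~\ref{obs:ordering} is available, the statement reduces to the basic fact that a strict linear order cannot contain a cycle of ``precedes'' relations, so the only care needed is to apply the correct clause of Observation~\ref{obs:ordering} depending on whether the cycle lies in an $A$ graph or a $B$ graph.
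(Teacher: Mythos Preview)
Your proof is correct. The paper states this observation without an explicit proof, treating it as an immediate consequence of the sink/source constraints; your argument via Observation~\ref{obs:ordering} applied around the cycle is precisely the natural derivation the paper has in mind, so there is nothing to add.
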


\begin{observation}
    \label{obs:forbiddenlabel}If $A_\ell$ contains the arc $v_iv_j$, and $B_\ell$ contains the arc $v_jv_i$, then $v_i$ cannot be labeled $\ell$.
\end{observation}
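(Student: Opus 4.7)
The plan is to argue by contradiction: assume $v_i$ receives label $\ell$ in some valid labeled ordering $(\mathcal{S}, \mathcal{L})$, and then derive a violation of one of the two defining constraints (sink or source). The argument hinges entirely on where the vertex $v_j$ sits relative to $v_i$ in $\mathcal{S}$, so the natural strategy is a simple case split on this relative position.

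First I would suppose for contradiction that $\ell_i = \ell$, and let $p$ denote the position of $v_i$ in the ordering, so $s_p = v_i$. I would then split on whether $v_j$ appears in the suffix $\{s_p, \ldots, s_n\}$ or in the prefix $\{s_1, \ldots, s_{p-1}\}$. In the first case, the arc $v_iv_j \in A_\ell$ survives in the induced subgraph $A_\ell[s_p, \ldots, s_n]$, and since this arc leaves $v_i$, the vertex $v_i$ is not a sink there, contradicting constraint~(i). In the second case, $v_j$ lies in $\{s_1, \ldots, s_p\}$, so the arc $v_jv_i \in B_\ell$ survives in $B_\ell[s_1, \ldots, s_p]$; since this arc enters $v_i$, the vertex $v_i$ is not a source there, contradicting constraint~(ii). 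Either branch yields a contradiction, completing the argument.

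There is no real obstacle here; the observation is an immediate structural consequence of the sink/source definitions, and the proof is a one-line dichotomy once the two constraints are written out. The only mild subtlety is to remember that $v_j \neq v_i$ (the arc would otherwise be a loop, which does not occur in the constructions of interest), so that the two cases above genuinely cover all possibilities.
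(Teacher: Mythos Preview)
Your argument is correct and is exactly the natural proof; the paper in fact states this observation without proof, treating it as an immediate consequence of the sink and source constraints. Your case split on the position of $v_j$ relative to $v_i$ is the intended (and only reasonable) justification.
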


% \begin{observation}
%    \label{obs:forcedlabel}
%    Given an instance $(S, G)$ of $2$-\textsc{Compatible Ordering}, let $s_i$ and $s_j$ be two elements in $S$, such that $(s_i, s_j)$ is an arc in $A_1$, and $(s_j, s_i)$ is an arc in $A_2$. In an ordering $\mathcal{S}$ of $S$ that is valid with labeling $\mathcal{L}$, where $\mathcal{L}$ assigns label 1 to $s_i$, $s_j$ comes before $s_i$ in $\mathcal{S}$ and is assigned label 1.
% \end{observation}

% \begin{observation}
%    \label{obs:one_of_two}
%    Given an instance $(S, G)$ of $k$-\textsc{Compatible Ordering}, let $s_i$ and $s_j$ be two elements in $S$, such that $(s_i, s_j)$ are two arcs in the same graph $A_l$ (or $B_l$). In an ordering $\mathcal{S}$ of $S$ that is valid with labeling $\mathcal{S}$, only one of $s_i$ and $s_j$ can be assigned label $l$. 
% \end{observation}

Observations~\ref{obs:ordering},~\ref{obs:before},~\ref{obs:directedcycle} and~\ref{obs:forbiddenlabel} greatly simplify the presentation of the hardness results in Section~\ref{sec:hardness_results}, as most of the gadgets that are used in the constructions depend on one of, or a combination of, these observations. For now, we have all the definitions needed to proceed to the positive results.

\subsection{Positive Results}\label{sec:positive_results}

In the first stage, and before looking at $k$-\textsc{Compatible Ordering} results on instances with bounded structural parameters, we situate the $2$-\textsc{Compatible Ordering} instances encoding \textsc{SM-RAMP} instances with respect to the rest of the $k$-\textsc{Compatible Ordering} instances. In particular, we look at instances with a single directed graphs per label and those with a single label. 

There are two directed graphs per label in the specification of the input to $k$-\textsc{Compatible Ordering}: one can wonder whether replacing each pair of directed graphs with the union of its graphs yields an equivalent instance (while maintaining the source and sink constraints on the resulting graphs). The answer is negative: Figure~\ref{fig:AB-merge} justifies the use of two directed graphs per label in the input specification of $k$-\textsc{Compatible Ordering}. If we replace each graph pair with the union of the two graphs, and we treat all constraints as sink constraints (or, symmetrically, source constraints), the resulting instance has no valid labeled ordering, whereas a solution to the original instance is $\mathcal{S} = s_1, s_2, s_3$ and $\mathcal{L} = 2, 2, 1$. In fact, substituting the $A$ graph and the $B$ graph of every pair with their union in the example of Figure~\ref{fig:AB-merge} does not preserve solutions specifically because the union is a cycle. If this is not the case, we are dealing with a yes-instance of $k$-\textsc{Compatible Ordering}, and a valid labeled ordering can be computed efficiently (Corollary~\ref{cor:trivial-1-compat-ordering}). We first show how checking for cycles in the union of the graphs is sufficient to determine whether a $1$-\textsc{Compatible Ordering} instance is a yes-instance. 

\begin{lemma}
    \label{lem:new-1-compat-ordering}
    $1$-\textsc{Compatible Ordering} can be solved in time $\mathcal{O}(n + m)$, where $m$ is the number of edges in the union graph.
\end{lemma}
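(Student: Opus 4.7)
The plan is to show that when $k=1$, every vertex is forced to receive the single available label, and the two constraints collapse into a single precedence relation that can be realized iff the directed union $H := A_1 \cup B_1$ is acyclic. This reduces the problem directly to topological sort, for which standard $\mathcal{O}(n+m)$ algorithms (Kahn's algorithm or a DFS-based sort) suffice.

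More precisely, I would first unpack the sink and source constraints into precedence form. For the sink constraint, if $uv \in A_1$ and $u$ appears at position $i$ in the ordering, then $u$ being a sink of $A_1[s_i,\ldots,s_n]$ forces $v \notin \{s_i,\ldots,s_n\}$, i.e.\ $v$ precedes $u$. For the source constraint, if $uv \in B_1$ and $v$ appears at position $i$, then $v$ being a source of $B_1[s_1,\ldots,s_i]$ forces $u \notin \{s_1,\ldots,s_i\}$, i.e.\ $v$ precedes $u$. Thus \emph{every} arc $uv$ in the directed union $H = A_1 \cup B_1$ encodes exactly the constraint ``$v$ before $u$,'' and conversely any ordering respecting these constraints satisfies both (i) and (ii). Hence the set of valid orderings is precisely the set of reverse topological orderings of $H$, which is nonempty iff $H$ contains no directed cycle.

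The algorithm therefore computes $H$ in $\mathcal{O}(n+m)$ time, runs a topological sort on its reversal $H^R$, and either returns the resulting ordering (labeled uniformly with $1$) or reports infeasibility when a cycle is detected. There is no real obstacle: the only thing to be careful about is matching the two quantifiers correctly when translating the sink/source conditions into precedence constraints, so that both conditions are shown to be equivalent to the single precedence digraph $H$. As a byproduct, this justifies the forthcoming Corollary~\ref{cor:trivial-1-compat-ordering} by showing that acyclicity of the union is both necessary and sufficient.
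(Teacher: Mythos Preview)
Your proposal is correct and follows essentially the same approach as the paper: both reduce the problem to checking acyclicity of the directed union $A_1 \cup B_1$ and obtaining a valid ordering as the reverse of a topological ordering, yielding the $\mathcal{O}(n+m)$ bound. Your translation of the sink/source constraints into the single precedence condition ``$v$ before $u$'' for every arc $uv$ is exactly the content of the paper's argument, just phrased a bit more explicitly than the paper's contradiction-based forward direction.
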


\begin{proof}
    We prove that an instance $(V, \{(A_1, B_1)\})$ of \textsc{1-Compatible Ordering} is a yes-instance if and only if the union of $A_1$ and $B_1$ is a directed acyclic graph.

   For the forward direction, towards a contradiction, suppose that $(V, \{(A_1, B_1)\})$ is a yes-instance, and that the union of $A_1$ and $B_1$ contains at least one cycle.
Consider an arbitrary solution $(\mathcal{S} = s_1, \ldots, s_n, \mathcal{L})$, where $\mathcal{L}$ trivially assigns label $1$ to all vertices. Since the union of $A_1$ and $B_1$ contains at least one cycle, there will exist a vertex $s_x$ in $\mathcal{S}$ that has an out-neighbor $s_w$ in $s_{x + 1}, \ldots, s_n$, otherwise, the union of $A_1$ and $B_1$ would have a topological ordering and would therefore be acyclic, a contradiction. Whether the arc $s_xs_w$ is in $A_1$ or $B_1$, it forces the solution to include $s_w$ before $s_x$, which contradicts $(\mathcal{S}, \mathcal{L})$ being a solution to the instance. 

As for the backward direction, let the labeling $\mathcal{L}$ be the labeling that assigns label $1$ to all vertices. We claim that we can find an ordering $\mathcal{S}$ of the vertices such that $(\mathcal{S}, \mathcal{L})$ is a valid labeled ordering.  We first compute a topological ordering for the union of $A_1$ and $B_1$ in time $\mathcal{O}(n + m)$ ($m$ being the number of edges in the union of $A_1$ and $B_1$). Let $\mathcal{S} = s_1, \ldots, s_n$ be the topological ordering in reverse order.
   To show that $(\mathcal{S}, \mathcal{L})$ is a solution, suppose, for the sake of contradiction, that there exists a vertex $s_x$ in $\mathcal{S}$ that is assigned label $1$ and that violates either the sink or the source constraint. If $s_x$ is not a sink in $A_{1}[s_x, \ldots, s_n]$, then there exists an out-neighbor of $s_x$ in $s_{x + 1}, \ldots, s_n$. However, all the out-neighbors of $s_x$ in $A_i$ are in $s_1, \ldots, s_{x - 1}$, given that $\mathcal{S}$ is a reversed topological ordering. An analogous argument ensures that $s_x$ is a source in $B_{1}[s_1, \ldots, s_x]$. Since the choice of $s_x$ is arbitrary, there is no vertex for which a source or sink constraint is violated, and $(\mathcal{S}, \mathcal{L})$ is a valid labeled ordering, as needed.
\end{proof}

A generalization of the backward direction of Lemma~\ref{lem:new-1-compat-ordering} can be used to characterize trivial yes-instances of $k$-\textsc{Compatible Ordering} as follows:

\begin{corollary}
    \label{cor:trivial-1-compat-ordering}
    Given an instance $(V, \mathcal{G})$ of $k$-\textsc{Compatible Ordering}, if there exists a pair of directed graphs $(A_i, B_i)$ in $\mathcal{G}$ whose union is a directed acyclic graph, $(V, \mathcal{G})$ is a yes-instance of $k$-\textsc{Compatible Ordering}.
\end{corollary}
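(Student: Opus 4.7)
The plan is to simply lift the backward direction of Lemma~\ref{lem:new-1-compat-ordering} to the multi-label setting by assigning every vertex the single label $i$. First, I would define the candidate solution $(\mathcal{S}, \mathcal{L})$: let $\mathcal{L} = \ell_1, \ldots, \ell_n$ with $\ell_j = i$ for every $j$, and let $\mathcal{S}$ be the reverse of any topological ordering of the union of $A_i$ and $B_i$. Such a topological ordering exists by hypothesis, since that union is a DAG, and it can be computed in time $\mathcal{O}(n+m)$.

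Next, I would verify that $(\mathcal{S}, \mathcal{L})$ is a valid labeled ordering. The key observation is that because every vertex is labeled $i$, only the constraints arising from the pair $(A_i, B_i)$ are ever invoked: Observation~\ref{obs:ordering} shows that the sink constraint at $s_x$ and the source constraint at $s_x$ depend exclusively on the arcs of $A_{\ell_x}$ and $B_{\ell_x}$, respectively. Hence the other pairs $(A_j, B_j)$ with $j \neq i$ play no role whatsoever, and checking validity reduces to the $1$-\textsc{Compatible Ordering} problem on the instance $(V, \{(A_i, B_i)\})$.

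At this point I would essentially copy the sink/source verification from the backward direction of Lemma~\ref{lem:new-1-compat-ordering}: for any $s_x$, every out-neighbor of $s_x$ in $A_i$ precedes $s_x$ in the reversed topological ordering, so $s_x$ is a sink in $A_i[s_x, \ldots, s_n]$; symmetrically, every in-neighbor of $s_x$ in $B_i$ follows $s_x$, so $s_x$ is a source in $B_i[s_1, \ldots, s_x]$. Both conditions hold because the reverse of a topological ordering of the \emph{union} simultaneously respects the edges of both $A_i$ and $B_i$, which is precisely why the lemma bundled the two constraints together.

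There is essentially no obstacle here: the corollary follows directly from Lemma~\ref{lem:new-1-compat-ordering} because an unused label contributes no constraints. The only thing worth stating explicitly in the write-up is the reduction from the multi-label instance to the single-label one via the uniform labeling $\mathcal{L} \equiv i$, so that the reader sees why the extra graphs $(A_j, B_j)$ for $j \neq i$ can safely be ignored.
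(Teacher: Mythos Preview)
Your proposal is correct and follows exactly the approach the paper intends: it states the corollary as a direct generalization of the backward direction of Lemma~\ref{lem:new-1-compat-ordering}, obtained by assigning the single label $i$ to every vertex and using the reverse topological ordering of $A_i\cup B_i$.
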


The result from Corollary~\ref{cor:trivial-1-compat-ordering} can also be used to obtain an asymptotically optimal algorithm that can be used to check whether a pair $(\mathcal{S} = s_1, \ldots, s_n, \mathcal{L} = \ell_1, \ldots, \ell_n)$ is a solution to an instance of $k$-\textsc{Compatible Ordering} in time linear in the number of vertices and the number of constraint edges incident to the $s_i$'s in the directed graphs specified by their label, i.e., only the outgoing edges of $s_i$ in $A_{\ell_i}$ and the incoming edges of $s_i$ in $B_{\ell_i}$ are accounted for. This algorithm will be used as a subroutine in Theorems~\ref{thm:bdd-treewidth} and~\ref{thm:bdd-mdw}.

\begin{observation}
    \label{obs:residgraph}A pair $(\mathcal{S} = s_1, \ldots, s_n, \mathcal{L} = \ell_1, \ldots, \ell_n)$ is a solution to a $k$-\textsc{Compatible Ordering} instance $(V, \mathcal{G})$ if and only if the graph  $(V, \{s_is_j \mid s_is_j \in A_{\ell_i} \text{ or } s_is_j \in B_{\ell_j}\})$ is a directed acyclic graph.
\end{observation}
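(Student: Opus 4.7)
The plan is to establish the equivalence by directly connecting the sink and source constraints defining a solution to the arc structure of the graph $H := (V, \{s_is_j \mid s_is_j \in A_{\ell_i} \text{ or } s_is_j \in B_{\ell_j}\})$. The key idea is that an arc of $H$ represents a precedence constraint: the arc $s_is_j$ appears precisely when the definition of a solution demands that $s_j$ be placed before $s_i$ in the ordering. The natural certificate of DAG-ness we will use throughout is the reverse ordering $s_n, s_{n-1}, \ldots, s_1$.

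For the forward direction, I would assume $(\mathcal{S}, \mathcal{L})$ is a solution and show that the reverse of $\mathcal{S}$ is a topological ordering of $H$, hence $H$ is acyclic. Consider any arc $s_is_j$ in $H$. By the definition of $H$, one of two cases holds. If $s_is_j \in A_{\ell_i}$, the sink constraint on $s_i$ in the induced subgraph $A_{\ell_i}[s_i, \ldots, s_n]$ prohibits $s_j$ from lying in $\{s_i, \ldots, s_n\}$, forcing $j < i$. Symmetrically, if $s_is_j \in B_{\ell_j}$, the source constraint on $s_j$ in $B_{\ell_j}[s_1, \ldots, s_j]$ forces $s_i \notin \{s_1, \ldots, s_j\}$, again giving $j < i$. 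In both subcases the arc points from a later-indexed to an earlier-indexed vertex, so the reverse of $\mathcal{S}$ topologically orders $H$.

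For the backward direction, I would argue by contrapositive. Suppose $(\mathcal{S}, \mathcal{L})$ is not a solution; then some vertex violates at least one of its two constraints. If some $s_i$ violates its sink constraint, then $s_i$ has an out-neighbor $s_j$ in $A_{\ell_i}$ with $j > i$, and this arc $s_is_j$ lies in $H$ while pointing forward in $\mathcal{S}$. Similarly, a source-constraint violation at some $s_j$ produces an arc $s_is_j \in B_{\ell_j}$ with $i < j$, again forward in $\mathcal{S}$. Either forward arc prevents the reverse of $\mathcal{S}$ from being a topological ordering of $H$, contradicting the DAG condition as read off the pair $(\mathcal{S}, \mathcal{L})$.

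The one subtlety to watch out for is the precise reading of the phrase ``$H$ is a DAG'': taken in isolation, abstract acyclicity of $H$ is strictly weaker than being equivalent to solution-ness, because $H$ could in principle be acyclic under a topological order different from reverse-$\mathcal{S}$. Since the observation is explicitly used as a verification tool for a fixed candidate pair $(\mathcal{S}, \mathcal{L})$ (as invoked in Theorems~\ref{thm:bdd-treewidth} and~\ref{thm:bdd-mdw}), the natural interpretation is that reverse-$\mathcal{S}$ is the certifying topological order of $H$, which is exactly what the two-way case analysis above establishes and what enables the claimed linear-time check in the number of $s_i$-outgoing edges of $A_{\ell_i}$ plus $s_i$-incoming edges of $B_{\ell_i}$.
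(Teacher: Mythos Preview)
The paper does not supply a proof for this observation; it is stated immediately after the sentence pointing back to Corollary~\ref{cor:trivial-1-compat-ordering}, so the intended justification is the same reversed-topological-ordering idea used in Lemma~\ref{lem:new-1-compat-ordering}. Your argument carries out exactly that idea, and your forward direction is correct as written.

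You are also right about the subtlety in the backward direction, and in fact the issue is sharper than you indicate: the graph $H$ depends only on the labeling (viewed as a function $V\to[k]$) and not on the ordering $\mathcal{S}$ at all, so ``$H$ is a DAG'' cannot possibly be equivalent to ``the \emph{specific} pair $(\mathcal{S},\mathcal{L})$ is a solution''. A two-vertex example with a single arc $s_1s_2\in A_{\ell_1}$ and empty $B$-graphs already witnesses this: $H$ is acyclic, yet $s_1$ violates its sink constraint. What is true, and what the paper actually uses, is either of the two readings you point toward: (i) $(\mathcal{S},\mathcal{L})$ is a solution iff the reverse of $\mathcal{S}$ is a topological ordering of $H$ (the verification reading, invoked in Theorem~\ref{thm:bdd-treewidth}); or (ii) for a fixed labeling, \emph{some} ordering yields a solution iff $H$ is a DAG (the existence reading, invoked in Theorem~\ref{thm:bdd-mdw}). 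Your two-way case analysis establishes (i) directly, and (ii) then follows by reversing any topological ordering of $H$, exactly as in the backward direction of Lemma~\ref{lem:new-1-compat-ordering}. So your proof is correct and your diagnosis of the statement is accurate; the observation as literally phrased is slightly loose, and you have supplied the precise version.
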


Given that Corollary~\ref{cor:trivial-1-compat-ordering} makes use of the acyclicity of the union of a specific pair of graphs $(A_i, B_i)$ in $\mathcal{G}$ to detect yes-instances of $k$-\textsc{Compatible Ordering}, it is natural to consider whether we can hope to solve instances that include cycles, whether in a given directed graph, in the union of two directed graphs $(A_i, B_i)$, or in the union of $\mathcal{G}$, in polynomial time. The answer to those questions is negative, and the relevant results are presented in Section~\ref{sec:hardness_results}. 

Moreover, these two lemmas do not leverage the geometry of the problems on robotic arms to its fullest extent, specifically when it comes to the positioning of the robotic arms in the plane. Indeed, if any large enough subset of the vertices of a $k$-\textsc{Compatible Ordering} instance encoding an \textsc{SM-RAMP} instance is considered, it is likely to not have many vertices interacting pairwise, as many robotic arms being closely clustered together is a characteristic of no-instances. In other words, in the union of $\mathcal{G}$, the size of the largest clique, that is, the clique number, is bounded. The treewidth of a graph is vaguely related to its clique number, in that the clique number is a lower bound on the treewidth. 

While it is true that, on general graphs, the gap between the two values can be arbitrarily large, we have reason to believe that the union of $\mathcal{G}$ in $k$-\textsc{Compatible Ordering} instances derived from \textsc{SM-RAMP} instances potentially has small (undirected) treewidth, owing to the underlying geometry of \textsc{SM-RAMP}, which motivates the next result.

The algorithm presented next takes in as input a tree decomposition of the undirected union of $\mathcal{G}$ with specific properties that simplify the description of the algorithm:

\begin{definition}[\cite{cygan:parameterized}]
    A \textbf{nice tree decomposition} is a rooted tree decomposition, where each node $N_i$ containing vertex set $V(N_i)$ has one of the following forms:
    \begin{enumerate}
        \item A leaf node: $N_i$ has no children, and the set of vertices that it contains has size 1.
        \item An introduce node: $N_i$ has one child $N_j$, and $V(N_i) = V(N_j) \cup \{v\}$ for some $v \not\in V(N_j)$.
        \item A forget node: $N_i$ has one child $N_j$, and $V(N_i) = V(N_j) \setminus \{v\}$ for some $v \not\in V(N_j)$.
        \item A join node: $N_i$ has two children $N_g$ and $N_h$, and $V(N_i) = V(N_g) = V(N_h)$.
    \end{enumerate}
\end{definition}

\thmTreewidth*

\begin{proof}
Given an instance $\mathcal{I} = (V, \mathcal{G})$ of $k$-\textsc{Compatible Ordering}, consider a nice tree decomposition of minimum width (equal to the treewidth $t$) of the undirected union of $\mathcal{G}$ with at most $4|V|$ nodes. Such a nice tree decomposition is guaranteed to exist, and can be found in $\mathcal{O}(|V|)$ time~\cite{bodlaender:treewidth}. We design a bottom-up dynamic programming algorithm on this tree decomposition to solve the problem. The tree decomposition is a rooted tree of nodes $N_i$, such that each node contains a set of vertices $V(N_i) \subseteq V$. We will use $V_T(N_i)$ to denote the union of the vertex sets of the descendants of $N_i$ in the tree, including $N_i$ itself, $\mathcal{I}_T(N_i)$ to denote the instance of $k$-\textsc{Compatible Ordering} induced on $V_T(N_i)$, and $\mathcal{I}(N_i)$ to denote the instance of $k$-\textsc{Compatible Ordering} induced on $V(N_i)$. We recall from the definition of a tree decomposition that for each vertex $v \in V$, the graph induced by the nodes containing $v$ forms a tree; this will be fundamental in the design of the algorithm.

For two orderings $\mathcal{S}, \mathcal{S}'$ such that the vertices in $\mathcal{S}$ are a subset of the vertices in $\mathcal{S}'$, a labeled ordering $(\mathcal{S}, \mathcal{L})$ is said to be \emph{compliant with} another labeled ordering $(\mathcal{S}', \mathcal{L}')$ if $\mathcal{S}$ is a subsequence of $\mathcal{S}'$, and the labeling of the vertices of $\mathcal{S}$ in $(\mathcal{S}, \mathcal{L})$ matches their labeling in $(\mathcal{S}', \mathcal{L}')$. The idea of the algorithm is to keep track of the labeled orderings $(\mathcal{S}, \mathcal{L})$ of $V(N_i)$ that are compliant with a solution to $\mathcal{I}_T(N_i)$ for each node $N_i$ in the tree decomposition, starting from the leaves.

We introduce a function $comp$ on three parameters: a node, an ordering of the vertices in the node, and a labeling of the vertices in the ordering. Formally, for node $N_i$ and a labeled ordering $(\mathcal{S}_i, \mathcal{L}_i)$ of $V(N_i)$, $comp(N_i, \mathcal{S}_i, \mathcal{L}_i) = 1$ if $(\mathcal{S}_i, \mathcal{L}_i)$ is compliant with a solution to $\mathcal{I}_T(N_i)$, and $comp(N_i, \mathcal{S}_i, \mathcal{L}_i) = 0$ otherwise. We introduce notation which simplifies denoting the addition and the removal of vertices from orderings. Given a labeled ordering $(\mathcal{S}_i, \mathcal{L}_i)$ , we use $\mathcal{S}_i - v$ to refer to the subsequence of $\mathcal{S}_i$ obtained after the removal of a vertex $v$ from $\mathcal{S}_i$, and we use $\mathcal{L}_i - 
 v$ to denote the subsequence of $\mathcal{L}_i$ obtained after the removal of the label of $v$ from $\mathcal{L}_i$. We now explain how the $comp$ values are computed, based on the type of the node in the nice tree decomposition of $G$. For each node $N_i$ in the explanation below, it is understood that the orderings $\mathcal{S}_i$ and the labelings $\mathcal{L}_i$ of the vertices in a node $N_i$ range over the $|V(N_i)|!$ possible orderings and the $k^{|V(N_i)|}$ possible labelings of the vertices in $N_i$ respectively:

\begin{itemize}
\item If the node $N_i$ is a leaf node, $comp(N_i, \mathcal{S}_i, \mathcal{L}_i) = 1$.
\item If the node $N_i$ is a forget node with child $N_j$ such that $V(N_j) \setminus V(N_i) = \{v\}$, then $comp(N_i, \mathcal{S}_i, \mathcal{L}_i)= 1$ if there exists a pair $(\mathcal{S}_j, \mathcal{L}_j)$ such that $comp(N_j, \mathcal{S}_j , \mathcal{L}_j) = 1$, $\mathcal{S}_i = \mathcal{S}_j - v$, and $\mathcal{L}_i = \mathcal{L}_j - v$, and otherwise $comp(N_i, \mathcal{S}_i, \mathcal{L}_i) = 0$.
\item If the node $N_i$ is a join node, with children $N_g$ and $N_h$, $comp(N_i, \mathcal{S}_i, \mathcal{L}_i) = 1$ if and only if $comp(N_g, \mathcal{S}_i, \mathcal{L}_i) = comp(N_h, \mathcal{S}_i, \mathcal{L}_i) = 1$, and otherwise $comp(N_i, \mathcal{S}_i, \mathcal{L}_i) = 0$.
\item If the node $N_i$ is an introduce node with child $N_j$ such that $V(N_i) \setminus V(N_j) = \{v\}$, then $comp(N_i, \mathcal{S}_i, \mathcal{L}_i) = 1$ if $comp(N_j, \mathcal{S}_j, \mathcal{L}_j) = 1$ where $\mathcal{S}_j = \mathcal{S}_i - v$, $\mathcal{L}_j = \mathcal{L}_i - v$, and $(\mathcal{S}_i, \mathcal{L}_i)$ is a solution to $\mathcal{I}(N_i)$, and otherwise $comp(N_i, \mathcal{S}_i, \mathcal{L}_i) = 0$. 
\end{itemize}

There are $\mathcal{O}(t! \cdot k^t)$ distinct Boolean values to be computed for each node $N_i$ in the tree decomposition, one for each labeled ordering of $V(N_i)$, which we refer to collectively as the $comp$ values of $N_i$. The $comp$ values of each internal node can be computed in time $\mathcal{O}(t^2 \cdot t! \cdot k^t)$, provided that the $comp$ values of its children are known. The analysis assumes that a value from $comp$ can be retrieved in constant time, which can be done by treating the parameters of $comp$ as indices by mapping each ordering and each labeling to an index based on their respective lexicographical ordering:

\begin{itemize}
    \item \textbf{Case 1:} The $comp$ values of leaf nodes can be computed in time $\mathcal{O}(k)$, since there are $k$ different labeled orderings of a set of size 1.
    
    \item \textbf{Case 2:} The $comp$ values of a forget node $N_i$ with child $N_j$ can be computed in time $\mathcal{O}(t \cdot t! \cdot k^t)$ by going through each of the $\mathcal{O}(t! \cdot k^t)$ labeled orderings $(\mathcal{S}_j, \mathcal{L}_j)$ of $V(N_j)$, removing $v$ to obtain $(\mathcal{S}_j - v, \mathcal{L}_j - v)$ in time $\mathcal{O}(t)$, and setting $comp(N_i, \mathcal{S}_j - v, \mathcal{L}_j - v) = comp(N_j, \mathcal{S}_j, \mathcal{L}_j)$ in constant time if $comp(N_i, \mathcal{S}_j - v, \mathcal{L}_j - v)$ has not been already assigned the value 1.
    
    \item \textbf{Case 3:} The $comp$ values of a join node $N_i$ with children $N_g$ and $N_h$ can be computed in time $\mathcal{O}(t! \cdot k^t)$ by going through each of the $\mathcal{O}(t! \cdot k^t)$ labeled orderings $(\mathcal{S}_i, \mathcal{L}_i)$ of $V(N_i)$, retrieving the two values $comp(N_g, \mathcal{S}_i, \mathcal{L}_i)$ and $comp(N_h$, $\mathcal{S}_i$, $\mathcal{L}_i)$ in constant time, and assigning a value to $comp(N_i, \mathcal{S}_i, \mathcal{L}_i)$ accordingly in constant time.
    
    \item     \textbf{Case 4:} The $comp$ values of an introduce node $N_i$ with child $N_j$ can be computed in time $\mathcal{O}(t^2 \cdot t! \cdot k^t)$. We go through each of the $\mathcal{O}(t! \cdot k^t)$ labeled orderings $(\mathcal{S}_i, \mathcal{L}_i)$ of $V(N_i)$ and we remove $v$ to obtain the labeled ordering $(\mathcal{S}_i - v, \mathcal{L}_i - v)$ in time $\mathcal{O}(t)$. If $comp(N_j, \mathcal{S}_i - v, \mathcal{L}_i - v) = 0$ or if $(\mathcal{S}_i, \mathcal{L}_i)$ is not a solution to $\mathcal{I}(N_i)$, we set $comp(N_i, \mathcal{S}_i, \mathcal{L}_i) = 0$; otherwise, we set $comp(N_i, \mathcal{S}_i, \mathcal{L}_i) = 1$. Both the construction of $\mathcal{I}(N_i)$ and verifying whether $(\mathcal{S}_i, \mathcal{L}_i)$ is a solution can be done in time linear in its size, i.e., in time $\mathcal{O}(t^2)$ (Observation~\ref{obs:residgraph}), since $\mathcal{I}(N_i)$ has $\mathcal{O}(t)$ vertices and $\mathcal{O}(t^2)$ arcs.
\end{itemize} 

Since there are a total of $\mathcal{O}(|V|)$ nodes, and each node's $comp$ values can be computed in time $\mathcal{O}(t^2 \cdot t! \cdot k^t)$, the $comp$ values of all the nodes in the tree decomposition can be computed in time $\mathcal{O}(|V| \cdot (t^2 \cdot t! \cdot k^t))$. We claim that the computed $comp$ values are correct. This would imply that $(V, \mathcal{G})$ is a yes-instance of $k$-\textsc{Compatible Ordering} if and only if $comp(N_r, \mathcal{S}_r, \mathcal{L}_r) = 1$ for some labeled ordering $(\mathcal{S}_r, \mathcal{L}_r)$ of $V(N_r)$, where $N_r$ is the root node of the tree decomposition. The proof we present next can be easily modified to also compute a solution $(\mathcal{S}, \mathcal{L})$ for the instance $(V, \mathcal{G})$.

We prove the correctness of the $comp$ values inductively with a case analysis based on the type of the node:

\begin{itemize}
    \item \textbf{Case 1:} The $comp$ values are correct for leaf node $N_i$, as $\mathcal{I}_T(N_i)$ has a single ordering, and is a yes-instance regardless of the label of the vertex.
    \item \textbf{Case 2:} The $comp$ values are correct for forget node $N_i$ with child $N_j$: since $V(N_i) \subset V(N_j)$, $\mathcal{I}_T(N_i)$ is equal to $\mathcal{I}_T(N_j)$. Moreover, if $(\mathcal{S}_j, \mathcal{L}_j)$ is compliant with a solution to $\mathcal{I}_T(N_j)$, then so is $(\mathcal{S}_j - v, \mathcal{L}_j - v)$ for any $v \in V(N_j)$. Finally, $(\mathcal{S}_i, \mathcal{L}_i)$ cannot be compliant with a solution to $\mathcal{I}_T(N_j)$ unless it can be extended by the insertion of $v$ (resp. a label of $v$) in some position in $\mathcal{S}_i$ (resp. $\mathcal{L}_i$), such that the resulting pair $(\mathcal{S}_j, \mathcal{L}_j)$ is compliant with a solution to $\mathcal{I}_T(N_j)$.
    \item \textbf{Case 3:} The proof of correctness is identical for the $comp$ values of join nodes and those of introduce nodes. We use join nodes to introduce the argument, before explaining how to apply it to introduce nodes. The argument hinges on the following observation:

    \begin{observation}
        \label{obs:core}
        Consider an instance $(V, \mathcal{G})$ of $k$-\textsc{Compatible Ordering}. Let $V_1, V_2 \subseteq V$ such that $V_1 \cup V_2 = V$, $V_3 = V_1 \cap V_2 \neq \emptyset$, and there do not exist arcs between $V_1 \setminus V_3$ and $V_2 \setminus V_3$ in $\mathcal{G}$. If $(\mathcal{S}_3, \mathcal{L}_3)$, a labeled ordering of $V_3$, is compliant with a solution $(\mathcal{S}_1, \mathcal{L}_1)$ to the instance induced on $V_1$ and a solution $(\mathcal{S}_2, \mathcal{L}_2)$ to the instance induced on $V_2$, then it is compliant with a solution to $(V, \mathcal{G})$. 
    \end{observation}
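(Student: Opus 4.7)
The plan is to construct a labeled ordering $(\mathcal{S}, \mathcal{L})$ of $V$ by merging $(\mathcal{S}_1, \mathcal{L}_1)$ and $(\mathcal{S}_2, \mathcal{L}_2)$ along their shared backbone $(\mathcal{S}_3, \mathcal{L}_3)$, and then to verify that $(\mathcal{S}_3, \mathcal{L}_3)$ is compliant with $(\mathcal{S}, \mathcal{L})$ and that $(\mathcal{S}, \mathcal{L})$ satisfies the source and sink constraints for $(V, \mathcal{G})$. Since $(\mathcal{S}_3, \mathcal{L}_3)$ is compliant with both $(\mathcal{S}_1, \mathcal{L}_1)$ and $(\mathcal{S}_2, \mathcal{L}_2)$, the labels assigned to vertices of $V_3$ by $\mathcal{L}_1$ and $\mathcal{L}_2$ agree, so every vertex of $V$ has a canonical label; this gives $\mathcal{L}$.

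First, I would partition the non-shared vertices using $\mathcal{S}_3$ as a scaffold. Writing $\mathcal{S}_3 = u_1, \ldots, u_r$, the ordering $\mathcal{S}_1$ is obtained by interleaving the elements of $V_1 \setminus V_3$ among the $u_i$; for each $0 \leq i \leq r$, let $U_i^1 \subseteq V_1 \setminus V_3$ denote the (ordered) block of vertices appearing strictly between $u_i$ and $u_{i+1}$ in $\mathcal{S}_1$ (with $u_0, u_{r+1}$ interpreted as the endpoints). Define $U_i^2 \subseteq V_2 \setminus V_3$ analogously from $\mathcal{S}_2$. Construct $\mathcal{S}$ by placing, between $u_i$ and $u_{i+1}$, the concatenation $U_i^1$ followed by $U_i^2$ (in their given internal orders), and let $\mathcal{L}$ assign to each vertex the label it received in $\mathcal{L}_1$ or $\mathcal{L}_2$. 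By construction, the restriction of $(\mathcal{S}, \mathcal{L})$ to $V_1$ is exactly $(\mathcal{S}_1, \mathcal{L}_1)$, the restriction to $V_2$ is $(\mathcal{S}_2, \mathcal{L}_2)$, and the restriction to $V_3$ is $(\mathcal{S}_3, \mathcal{L}_3)$, so $(\mathcal{S}_3, \mathcal{L}_3)$ is compliant with $(\mathcal{S}, \mathcal{L})$ as required.

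Next, I would verify the source and sink constraints by a case split on where a vertex $v \in V$ lies. If $v \in V_1 \setminus V_3$ with label $\ell$, then by the hypothesis on the absence of arcs between $V_1 \setminus V_3$ and $V_2 \setminus V_3$, all out-neighbors of $v$ in $A_\ell$ and all in-neighbors of $v$ in $B_\ell$ belong to $V_1$; since the relative order of the $V_1$-vertices in $\mathcal{S}$ coincides with $\mathcal{S}_1$, the sink and source constraints for $v$ reduce to the constraints already verified in $(\mathcal{S}_1, \mathcal{L}_1)$. The case $v \in V_2 \setminus V_3$ is symmetric. If $v \in V_3$, its relevant neighbors may lie in $V_1$ or $V_2$, but the constraints involving $V_1$-vertices follow from $(\mathcal{S}_1, \mathcal{L}_1)$ being a solution on $V_1$ (since the $V_1$-restriction of $\mathcal{S}$ equals $\mathcal{S}_1$), and those involving $V_2$-vertices follow symmetrically from $(\mathcal{S}_2, \mathcal{L}_2)$.

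The main obstacle is purely bookkeeping: ensuring that no constraint of $(V, \mathcal{G})$ involves a pair of vertices whose relative order in $\mathcal{S}$ was chosen arbitrarily (namely, one vertex in $V_1 \setminus V_3$ and one in $V_2 \setminus V_3$). The hypothesis that no arcs cross between $V_1 \setminus V_3$ and $V_2 \setminus V_3$ in any graph of $\mathcal{G}$ is precisely what makes such pairs irrelevant to the sink and source conditions, so the arbitrary interleaving within each block $U_i^1 U_i^2$ is safe. No stronger structural argument is needed, and the construction terminates in polynomial time.
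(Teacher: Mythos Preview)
Your proposal is correct and follows essentially the same approach as the paper: both merge $(\mathcal{S}_1,\mathcal{L}_1)$ and $(\mathcal{S}_2,\mathcal{L}_2)$ along the common backbone $(\mathcal{S}_3,\mathcal{L}_3)$, inherit labels from $\mathcal{L}_1,\mathcal{L}_2$, and rely on the absence of arcs between $V_1\setminus V_3$ and $V_2\setminus V_3$ to argue that the relative order of those vertices is immaterial. The paper's proof is a brief sketch, whereas you spell out the block decomposition and the per-vertex constraint check explicitly, but the underlying idea is identical.
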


    \begin{proof}
        Since there are no arcs between vertices in $V_1 \setminus V_3$ and vertices in $V_2 \setminus V_3$, the relative ordering of their vertices in a solution to $(V, \mathcal{G})$ does not matter. 

        Given that $(\mathcal{S}_3, \mathcal{L}_3)$ is compliant with both $(\mathcal{S}_1, \mathcal{L}_1)$ and $(\mathcal{S}_2, \mathcal{L}_2)$, it will be compliant with a solution $(\mathcal{S}, \mathcal{L})$ to $(V, \mathcal{G})$, in which the vertices in $V_1 \setminus V_3$ (resp. $V_2 \setminus V_3$) are labeled with their labels in $\mathcal{L}_1$ (resp. $\mathcal{L}_2$), and are ordered with respect to the vertices in $\mathcal{S}$ as they are in $\mathcal{S}_1$ (resp. $\mathcal{S}_2$). \end{proof}

    For a join node $N_i$ with children $N_g$ and $N_h$, it is therefore sufficient to define the two vertex sets $V_1, V_2$. We make use of the following observation, which follows directly from the fact that the nodes containing a vertex $v$ induce a tree in a tree decomposition:
    \begin{observation}
    \label{obs:emptyinters}$(V_T(N_g) \setminus V(N_i)) \cap (V_T(N_h) \setminus V(N_i)) = \emptyset$, and there do not exist arcs between vertices in $V_T(N_g) \setminus V(N_i)$ and vertices in $V_T(N_h) \setminus V(N_i)$.
\end{observation}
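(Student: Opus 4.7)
The plan is to derive both assertions directly from the two defining properties of a tree decomposition: (i) for every vertex $v$, the set of tree nodes whose bag contains $v$ induces a connected subtree of the decomposition; and (ii) for every edge $uv$ of the underlying undirected graph, some bag contains both endpoints. Since $\mathcal{G}$ is a collection of directed graphs and the tree decomposition is of their undirected union, any arc of $\mathcal{G}$ qualifies as an edge for the purposes of property (ii).

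First I would exploit the join-node condition $V(N_g) = V(N_h) = V(N_i)$. For any $u \in V_T(N_g) \setminus V(N_i)$, this forces $u \notin V(N_g)$, so by property (i) every bag containing $u$ must lie \emph{strictly} below $N_g$ in the decomposition: otherwise the subtree of bags containing $u$ would need to cross $N_g$, forcing $u \in V(N_g) = V(N_i)$, a contradiction. The symmetric statement holds for $V_T(N_h) \setminus V(N_i)$.

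For the disjointness claim, assume for contradiction that some vertex $v$ lies in both $V_T(N_g)\setminus V(N_i)$ and $V_T(N_h)\setminus V(N_i)$. By the preceding paragraph, $v$ appears in some bag strictly inside the subtree rooted at $N_g$ and in some bag strictly inside the subtree rooted at $N_h$. Any path in the decomposition between these two bags passes through $N_i$, so by property (i) the bag $V(N_i)$ must contain $v$, contradicting $v \notin V(N_i)$. For the no-arc claim, assume there is an arc between $u \in V_T(N_g)\setminus V(N_i)$ and $v \in V_T(N_h)\setminus V(N_i)$. By property (ii), some bag contains both $u$ and $v$. But the bags that contain $u$ lie in the subtree strictly below $N_g$, while those containing $v$ lie in the subtree strictly below $N_h$; these two subtrees are vertex-disjoint in the decomposition, so no bag can contain both, a contradiction.

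The only subtlety is the step that upgrades ``$u \notin V(N_i)$'' to ``$u$ occurs only in bags strictly below $N_g$'', which is where the join-node identity $V(N_g) = V(N_i)$ is crucial; without it, $u$ could sit in bags above $N_g$ without violating property (i). Once that step is handled, both parts of the observation reduce to immediate applications of the standard tree-decomposition axioms, so I do not anticipate any further obstacle.
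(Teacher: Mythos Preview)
Your argument is correct and matches the paper's approach: the paper simply asserts that the observation ``follows directly from the fact that the nodes containing a vertex $v$ induce a tree in a tree decomposition,'' which is precisely your property~(i), and you have spelled out the standard details (together with the edge-covering property~(ii) for the no-arc claim) that the paper leaves implicit.
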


    As such, we set $V_1 = V_T(N_g), V_2 = V_T(N_h)$. By Observation~\ref{obs:emptyinters}, all the preconditions of Observation~\ref{obs:core} hold; we can therefore apply Observation~\ref{obs:core} to prove the correctness of the $comp$ values of $N_i$ that are assigned the value 1.

    As for the $comp$ values of $N_i$ that are assigned the value 0, their proof of correctness is trivial: if $(\mathcal{S}_i, \mathcal{L}_i)$ is not compliant with a solution to $\mathcal{I}_T(N_g)$ and $\mathcal{I}_T(N_h)$, then it will not be compliant with a solution to $\mathcal{I}_T(N_i)$ either.

    \item \textbf{Case 4:} The proof of correctness for the $comp$ values of introduce nodes reuses the argument for those of join nodes. In particular, for introduce node $N_i$ with child $N_j$, we set $V_1 = V(N_i)$, $V_2 = V_T(N_j)$, and we replicate the rest of the argument to conclude that the $comp$ values of $N_i$ that are assigned the value 1 are correct. 

As for the $comp$ values of $N_i$ that are assigned the value 0, where $v = V(N_i) \setminus V(N_j)$, if a labeled ordering $(\mathcal{S}_i, \mathcal{L}_i)$ is not a solution to $\mathcal{I}(N_i)$, it will not be compliant with any solution to $\mathcal{I}_T(N_i)$. Otherwise, $(\mathcal{S}_j, \mathcal{L}_j) = (\mathcal{S}_i - v, \mathcal{L}_i - v)$ is not compliant with a solution to $\mathcal{I}_T(N_j)$, which implies that $(\mathcal{S}_j, \mathcal{L}_j)$ is not compliant with a solution to $\mathcal{I}_T(N_i)$ either. Since the vertices in $\mathcal{S}_j$ are a proper subset of the vertices in $\mathcal{S}_i$, we therefore conclude that $(\mathcal{S}_i, \mathcal{L}_i)$ is not compliant with a solution to $\mathcal{I}_T(N_i)$, as needed.

\end{itemize}

Given that the $comp$ values for all the nodes in the tree decomposition can be computed in time $\mathcal{O}(|V| \cdot (t^2 \cdot t! \cdot k^t))$, and that the $comp$ values of the root of the tree can be used to solve the original $k$-\textsc{Compatible Ordering} instance, we obtain an efficient bottom-up algorithm that runs on nice tree decompositions of instances where the undirected union of $\mathcal{G}$ has bounded treewidth.
\end{proof}

Theorem~\ref{thm:bdd-mdw} proposes an algorithm that runs on $k$-\textsc{Compatible Ordering} instances with bounded labeled modular width, which is a variant of the modular width measure. This variant is intended to work in contexts where edges are directed and labeled, as is the case for $k$-\textsc{Compatible Ordering}. The usefulness of the labeled modular width measure lies in the union of $\mathcal{G}$ being likely to be dense only around vertices representing robotic arms that are close to each other in the plane, with pairs of robotic arms selected from two sets of robotic arms that are far from each other in the plane not having any constraints associated with them. This makes it possible to efficiently encode the constraints via a recursive definition of the graph, which we call the labeled modular decomposition of the graph.

We first recall the classical definition of the modular decomposition of a graph, originally defined on undirected graphs. The modular decomposition of an undirected graph makes it possible to represent the graph by an algebraic expression consisting of a set of four operations: the isolated vertex operation, the disjoint union operation, the complete join operation, and the substitution operation~\cite{gajarsky:parameterized}. A graph $G$ has bounded modular width if it can be constructed using a combination of those four operations, such that the minimum number of operands used across the substitution operations is bounded. The modular width of a graph is a structural parameter of interest when it comes to the design of parameterized algorithms not only because it generalizes other graph properties that are indicative of the edge density of a graph, such as neighborhood density and twin cover, but also because the modular decomposition allows for a recursive definition of graphs. One might wonder if it is possible to disregard labels and consider only the modular width of the undirected union of $\mathcal{G}$, as we did for treewidth (which would also allow us to avoid redefining new notation). The answer is negative. Indeed, if we add an additional pair of graphs $(A_i,B_i)$ forming a complete bidirected clique to an instance of $k$-\textsc{Compatible Ordering}, then the modular width of the undirected union of $\mathcal{G}$ is equal to 1. However, the inclusion of this new pair does not make the instance any simpler to solve, because no more than one vertex can be assigned the new label. Labeled modular width is indeed needed to obtain a positive result.

Given the above discussion, we introduce variants of modular width and modular decomposition for $k$-\textsc{Compatible Ordering} instances, which accommodate arcs and retain information about the directed graphs containing each arc. To this end, we label each arc by a subset of the $A$ graphs and the $B$ graphs. We now have the context needed to introduce the needed definitions. 

\begin{definition}
    Consider the following four operations:
    
\begin{itemize}
    \item O1: create an isolated vertex;
    \item O2: form the \textbf{disjoint union} $G_3 = (V_3, E_3)$ of two graphs $G_1 = (V_1, E_1)$ and $G_2 = (V_2, E_2)$, where $V_3 = V_1 \cup V_2$ and $E_3 = E_1 \cup E_2$;
    \item O3: form the \textbf{labeled complete join} $G_3 = (V_3, E_3)$ of two labeled directed graphs $G_1 = (V_1, E_1)$ and $G_2 = (V_2, E_2)$, where $V_3 = V_1 \cup V_2$ and $E_3 = E_1 \cup E_2 \cup E_{1 \rightarrow 2} \cup E_{2 \rightarrow 1}$. The labeled arc set $E_{1 \rightarrow 2}$ (resp. $E_{2 \rightarrow 1}$)  is in reference to a set of labeled arcs from each vertex in $G_1$ (resp. $G_2$) to each vertex in $G_2$ (resp. $G_1$), where at least one of $E_{1 \rightarrow 2}$ and $E_{2 \rightarrow 1}$ is non-empty. Formally, given a fixed set of labels $L \subseteq \{\texttt{"}A_1\texttt{"}, \texttt{"}B_1\texttt{"}, \ldots, \texttt{"}A_k\texttt{"}, \texttt{"}B_k\texttt{"}\}$, $E_{1 \rightarrow 2} = \{(uv, L) : u \in V_1, v \in V_2\}$; $E_{2 \rightarrow 1}$ is defined analogously; and
    \item O4: form the \textbf{labeled substitution operation} for labeled directed graphs $G_1, \ldots, G_p$ with respect to a labeled directed graph $G_t$ with vertex set $v_1, \ldots, v_p$ (the \textbf{template graph}): the labeled substitution operation applied to graphs $G_1 = (V_1, E_1), \ldots, G_p = (V_p, E_p)$, using graph $G_t$ as a template, yields a graph $G$, such that $V(G) = \bigcup\limits_{i = 1}^p V_i$ and $E(G) = (\bigcup\limits_{i = 1}^p E_i) \cup E'$, where $E' = \{(u_iu_j, L) : u_i \in G_x, u_j \in G_y, (v_xv_y, L) \in E(G_t)\}$. We note that O3 is a special case of O4 where $p = 2$.
\end{itemize}

The notions equivalent to modular width and modular decomposition, where a labeled directed graph $G$ is constructed using the four modified operations above, are called the \textbf{labeled modular width} and the \textbf{labeled modular decomposition} of $G$ respectively.
\end{definition}

Naturally, labeled graphs also make it possible to look at an analogue of the union operation that incorporates labels.

\begin{definition}
    The \textbf{labeled union} of $\mathcal{G}$ is its union augmented with the addition to each arc of a set of labels that is a subset of $\{\texttt{"}A_1\texttt{"}, \texttt{"}B_1\texttt{"}, \ldots, \texttt{"}A_k\texttt{"}, \texttt{"}B_k\texttt{"}\}$, denoting the $A$ graphs and the $B$ graphs that contain the edge.
\end{definition}

For brevity, we use $mw(\mathcal{G})$ to refer to the labeled modular width of the labeled union of $\mathcal{G}$, with the name of the collection of directed graphs omitted when clear from context. While it is known that computing the modular width and a modular decomposition can be done in linear time for general graphs~\cite{mcconnell:moddecomp,tedder:moddecomp}, it is not possible to replicate the proof techniques to obtain an efficient algorithm to compute their labeled variants, so the next result assumes that a labeled modular decomposition of the labeled union of $\mathcal{G}$  is given as part of the input.

\thmModularwidth*

\begin{proof}
We design a dynamic programming algorithm, which assumes that a labeled directed modular decomposition is provided alongside the input. For each labeled directed graph that is an operand of one of the four operations, we consider the $k$-\textsc{Compatible Ordering} instance induced on its vertices, and we keep track of the different sets of labels, or \emph{label selections} (not to be confused with labelings) that can be used to solve it. Given the instance $(V, \mathcal{G})$ of $k$-\textsc{Compatible Ordering}, let $G^f$ represent the labeled union of the directed graphs in $\mathcal{G}$. For conciseness, since $G^f$ encodes the instance, whenever we mention solving $G^f$, we are referring to solving the instance $(V, \mathcal{G})$. Let $\mathbb{L}(G^f)$ denote the set of label selections that can be used to solve $G^f$. The argument relies on induction on the number of vertices in $G^f$; we give a case analysis based on the four operations that could have been used to obtain the graph $G^f$. 

The analysis for the first two operations is trivial. If $G^f$ is a graph on a single vertex, this single vertex is the only vertex in an ordering that solves the instance, therefore $\mathbb{L}(G^f) = \{\{1\}, \ldots, \{k\}\}$. If $G^f$ is the disjoint union of two labeled directed graphs $G^f_1$ and $G^f_2$, $\mathbb{L}(G^f)$ is equal to the pairwise union of the label selections in $\mathbb{L}(G^f_1)$ and those in $\mathbb{L}(G^f_2)$, because the vertex sets of $G^f_1$ and $G_2^f$ are independent. Next, we handle the case where $G^f$ is obtained via an application of O4, which also handles O3.

Suppose that $G^f$ is obtained via an application of O4, by substituting the labeled directed graphs $G^f_1, \ldots, G^f_p$ into the template graph $G_t$ that has vertex set $v_1, \ldots, v_p$. To compute which label selections can be used to solve $G^f$, we look at the template graph, we go over all possible label selections of the $p$ instances, and we check which combinations of label selections allow for an ordering of the vertices of $G_t$. Formally, for each element $(L_1, \ldots, L_p)$ in $\mathbb{L}(G^f_1) \times \ldots \times \mathbb{L}(G^f_p)$, we update the neighborhood of each vertex $v_j$ in $V(G_t)$ to preserve $v_j$'s outgoing and incoming arcs whose label set shares at least one of the labels in $L_j$, and delete the rest of the arcs. Then, we check whether $G_t$ is a directed acyclic graph. If it is, $L_1 \cup \ldots \cup L_p$ is a label selection that can be used to solve the instance induced on the vertices of $G^f$; if it is not, then it is not possible to solve the instance induced on the vertices of $G^f$ using only labels from $L_1 \cup \ldots \cup L_p$ (Observation~\ref{obs:residgraph}). 

The number $p$ of graphs that are substituted in the template graph is at most $mw$ by definition, each of those $p$ graphs has at most $2^k$ possible sets of labels, and we want to try all $2^{k + mw}$ sets of label selections to see which ones can be used to solve $G^f$. After the set of label selections is determined, the work that is left to be done is updating $G_t$, checking for acyclicity, and computing the union of the label selections, all of which can be executed in time polynomial in the size of the input. At most $|V|$ operations are applied, since each operation increases the number of vertices by at least one, and this gives us the desired running time of $f(k + mw) \cdot |V|^{\mathcal{O}(1)}$.
\end{proof}

\subsection{Hardness results}
\label{sec:hardness_results}

The instances of $k$-\textsc{Compatible Ordering} problem that we were efficiently able to solve so far (using Corollary~\ref{cor:trivial-1-compat-ordering}) exhibited acyclicity within their input. In this section, we restrict the presence of cycles to represent cyclical constraints that naturally arise in the context of the \textsc{RAMP} problems. We prove below that $k$-\textsc{Compatible Ordering} is hard under each of those restrictions.

The hardness results covered in this section stem from restricted versions of the \textsc{SAT} problem, namely the \textsc{Planar 4-Bounded 3-SAT} problem. Let $\varphi$ be a \textsc{3-SAT} instance, and let $G(\varphi)$ be a graph representation of $\varphi$, defined as follows: for each variable and clause in $\varphi$, we add a vertex to the vertex set of $G(\varphi)$. As for the edge set of $G(\varphi)$, a clause vertex is adjacent to a variable vertex if and only if the clause contains a literal of that variable. If $G(\varphi)$ is planar, and each variable occurs at most 4 times in $\varphi$, irrespective of the polarity of its literals, we say that $\varphi$ is a \textsc{Planar 4-Bounded 3-SAT} instance. 
To prove that $k$-\textsc{Compatible Ordering} is $\mathsf{NP}$-hard, we reduce from \textsc{Planar 4-Bounded 3-SAT}, which is known to be $\mathsf{NP}$-hard, even on instances where each clause contains exactly 3 literals~\cite{tippenhauer:3sat}.

\begin{figure*}[ht]
    \centering
    \includegraphics[width=\linewidth]{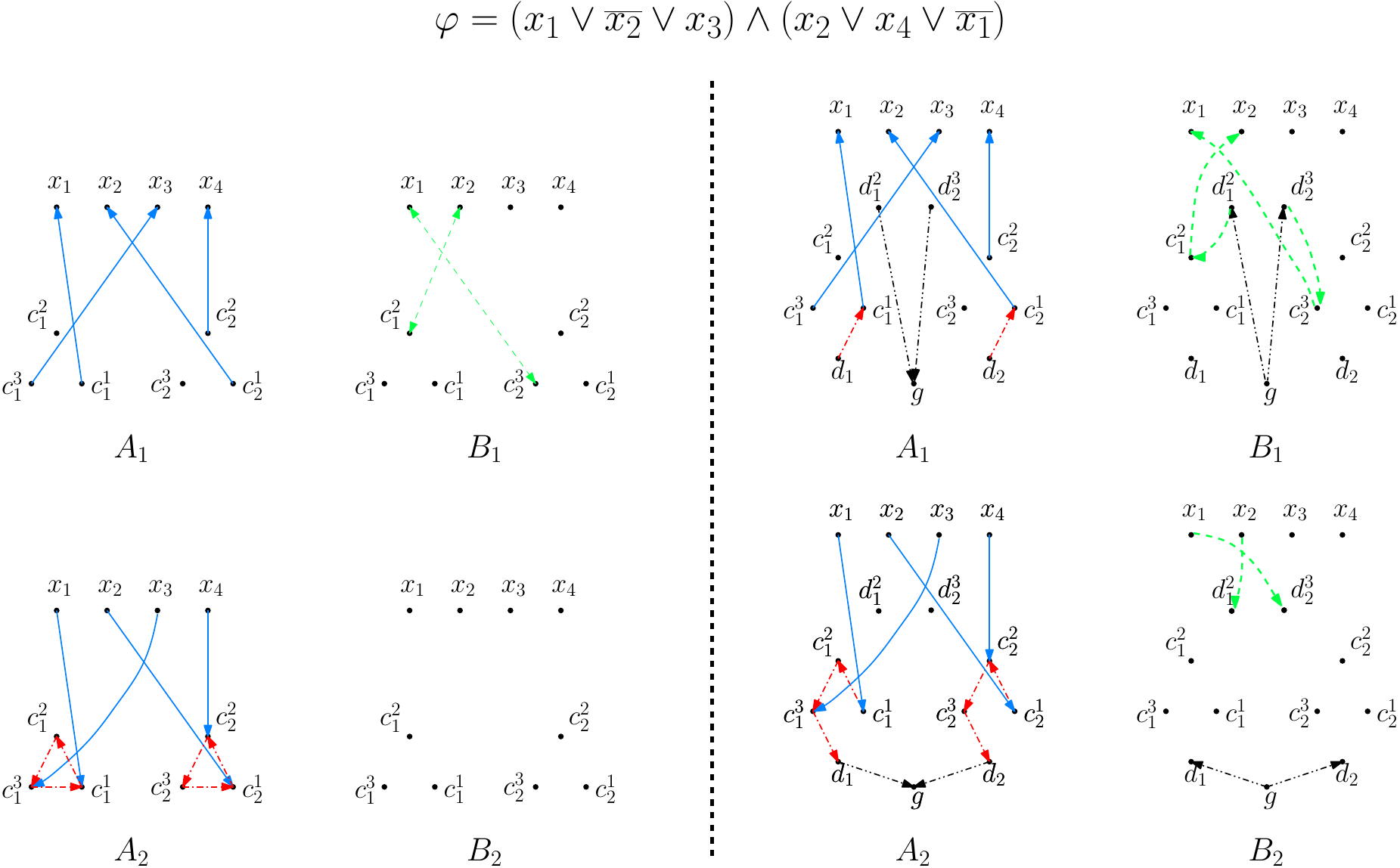}
    \caption{Illustration of the reductions from Theorems~\ref{thm:kcompat-npc} (left) and~\ref{thm:kcompat-npc-acyclic} (right), starting from the same \textsc{Planar 4-Bounded 3-SAT} instance $\varphi$ (top). The labeled ordering $(\mathcal{S}_1 = x_1, x_2, c_1^1, c_2^1, c_1^3, c_1^2, c_2^3, c_2^2, x_3, x_4, \mathcal{L}_1 = 1, 1, 1, 1, 2 ,2, 2, 2, 2, 2)$ is a solution to the $2$-\textsc{Compatible Ordering} instance on the left, whereas the labeled ordering $(\mathcal{S}_2 = d_2^3, x_1, d_1^2, x_2, d_1, c_1^1, d_2, c_2^1, c_1^3, c_1^2, c_2^3, c_2^2, x_3, x_4, \mathcal{L} = 2, 1, 2, 1, 1, 1, 1, 1, 2 ,2, 2, 2, 2, 2)$ is a solution to the $2$-\textsc{Compatible Ordering} instance on the right.}
    \label{fig:3-sat-kco-reduction}
\end{figure*}

\thmNPPlanar*

%\begin{theorem}\label{thm:kcompat-npc}
%  $k$-\textsc{Compatible Ordering} is $\mathsf{NP}$-complete for $k \geq 2$, even when the union of the graphs in the input instance is planar and has degeneracy 3.
%\end{theorem}

\begin{proof}
We show that $2$-\textsc{Compatible Ordering} is $\mathsf{NP}$-complete under the stated conditions; this suffices since $2$-\textsc{Compatible Ordering} is a special case of $k$-\textsc{Compatible Ordering}.

Membership in $\mathsf{NP}$ follows from the fact that a labeled ordering $(\mathcal{S}, \mathcal{L})$ serves as a certificate; verifying all constraints in $\mathcal{O}(n + m)$ time is straightforward (Observation~\ref{obs:residgraph}).

Let $\varphi = C_1 \wedge \cdots \wedge C_p$ be a \textsc{Planar 4-Bounded 3-SAT} instance on variables $X=\{x_1,\dots,x_q\}$. We construct an instance $(V, \{(A_1,B_1),(A_2,B_2)\})$ of $2$-\textsc{Compatible Ordering} as follows:
\smallskip
\\
\\
\noindent
 \emph{Vertices in $V$:}
\begin{itemize}
    \item For each clause $C_i$, add three \emph{clause vertices} $c_i^1, c_i^2, c_i^3$ to $V$, each representing one literal of $C_i$.
    \item For each variable $x_j \in X$, add a \emph{variable vertex} to $V$.
\end{itemize}

The vertex set $V$ of the constructed instance is therefore of size $3p + q$. Given a clause vertex $c_i^a$ and a variable $x_j$ such that the literal that $c^a_i$ represents is a literal of the variable $x_j$, we say that the \emph{associated variable} of $c_i^a$ is $x_j$.

In the constructed instance, assigning a variable vertex the label 1 (resp. the label 2) corresponds to the variable being set to \emph{true} (resp. \emph{false}), and assigning a clause vertex the label 1 corresponds to the literal it represents being set to \emph{true}; we therefore need to ensure that at least one clause vertex per clause is assigned the label 1, and that the labeling of the variable vertex associated with a clause vertex labeled 1 is consistent, i.e., the variable vertex is assigned label 1 (resp. label 2) if the clause vertex is a positive (resp. negative) clause vertex. The arc set of the instance ensures consistency holds. 
\noindent
\\
\\
\emph{Arcs in $A_1,B_1,A_2,B_2$:}

\begin{itemize}
    \item \textbf{Clause cycles:} For each clause $C_i$, we add the directed cycle $c_i^1c_i^2$, $c_i^2c_i^3$, $c_i^3c_i^1$ to $A_2$.
    \item \textbf{Positive clause-variable arcs:} For each pair $c^+, x$, where $c^+$ is a positive clause vertex and $x$ is the associated variable of $c^+$, we add the arc $c^+x$ to $A_1$, and the arc $xc^+$ to $A_2$. 
    \item \textbf{Negative clause-variable cycles:} For each pair $c^-, x$, where $c^-$ is a negative clause vertex and $x$ is the associated variable of $c^-$, we add the arcs $c^-x$ and $xc^-$ to $B_1$. 
\end{itemize}

We remark that there are no arcs in $B_2$, and that the variable vertices form independent sets in all four directed graphs. An example depicting the different types of arcs in the construction can be found in Figure~\ref{fig:3-sat-kco-reduction}.

We claim that $\varphi$ is satisfiable if and only if $(V,\mathcal{G} = \{(A_1, B_1), (A_2, B_2)\})$ is a yes-instance of $2$-\textsc{Compatible Ordering}.
\\
\\
\noindent
\textbf{($\Rightarrow$)}
If $\varphi$ has a satisfying assignment, let $\mathcal{X}^T$ (resp. $\mathcal{X}^F$) be an arbitrary ordering of the variable vertices representing variables that are assigned \emph{true} (resp. \emph{false}) in the assignment, and let $\mathcal{C}^T$ be an arbitrary ordering of $q$ clause elements, one per clause, whose literals evaluate to \emph{true}: each clause $C_i$ therefore has two clause elements $c^a_i$ and $c^b_i$ that are not in $\mathcal{C}^T$. An ordering $\mathcal{C}^R_i$ of those two clause elements as given as follows: $c_i^a$ comes before $c_i^b$ in $\mathcal{C}^R_i$ if $c_i^ac_i^b$ is an edge in $A_2$, and $c_i^a$ comes after $c_i^b$ otherwise. Let $\mathcal{C}^R = \mathcal{C}^R_1, \ldots, \mathcal{C}^R_p$. 

We set $\mathcal{S} = \mathcal{X}^T, \mathcal{C}^T, \mathcal{C}^R, \mathcal{X}^F$. We assign label 1 to the vertices in $\mathcal{X}^T$ and $\mathcal{C}^T$, we assign label 2 to the remaining vertices, and we denote by $\mathcal{L}$ the labeling of $\mathcal{S}$. We claim that $(\mathcal{S}, \mathcal{L})$ is a valid labeled ordering. We include a brief proof of correctness below:
\begin{itemize}
    \item \textbf{Case $\mathcal{X}^T$:} Each variable vertex in $\mathcal{X}^T$ has no out-neighbors in $A_1$, and its in-neighbors in $B_1$, which are clause vertices, come after it in $\mathcal{S}$. 
    \item \textbf{Case $\mathcal{C}^T$:} Each positive clause vertex in $\mathcal{C}^T$ has a unique out-neighbor in $A_1$ in $\mathcal{X}^T$ which comes before it in $\mathcal{S}$, and has no in-neighbors in $B_1$. Each negative clause vertex in $\mathcal{C}^T$ has no out-neighbor in $A_1$, and has a unique in-neighbor in $B_1$ in $\mathcal{X}^F$, which comes after it in $\mathcal{S}$.
    \item \textbf{Case $\mathcal{C}^R$:} The vertices in $\mathcal{C}^R$ require a slightly more careful analysis. Consider an arbitrary clause vertex $c_i^a$ in $\mathcal{C}^R_{i}$: $c_i^a$ has a unique out-neighbor $c_i^b$ in $A_2$, which is either in $\mathcal{C}^T$ and has been previously included in $\mathcal{S}$, or is in $\mathcal{C}^R_{i}$ and has been included in $\mathcal{S}$ directly before $c^a_i$ by construction, and has no in-neighbors in $B_2$, since there are no arcs in $B_2$.
    \item \textbf{Case $\mathcal{X}^F$:} Each variable vertex in $\mathcal{X}^F$ has no in-neighbors in $B_2$, since there are no arcs in $B_2$, and its out-neighbors in $A_2$, which are in either $\mathcal{C}^T$ or $\mathcal{C}^R$, come before it in $\mathcal{S}$.
\end{itemize}

The proposed labeled ordering $(\mathcal{S}, \mathcal{L})$ is therefore valid.

\noindent
\\
\textbf{($\Leftarrow$)} Consider the three clause vertices $c_i^1, c_i^2, c_i^3$ of an arbitrary clause $C_i$. In a solution to $k$-\textsc{Compatible Ordering}, the following observation holds:

\begin{observation}
    \label{obs2} Each positive (resp. negative) clause vertex in $C_i$ that is included in the ordering and assigned label 1 comes before (resp. after) its associated variable vertex, which is assigned the label 1 (resp. 2), in the ordering.
\end{observation}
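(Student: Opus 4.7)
The plan is to handle the two symmetric cases (positive and negative clause vertex labeled $1$) by direct appeal to the arcs added in the construction, together with Observations~\ref{obs:ordering} and~\ref{obs:directedcycle}. For each case, the two pieces of information to derive are the label assigned to the associated variable vertex and its position in $\mathcal{S}$ relative to the clause vertex.

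For the positive case, let $c^+$ be assigned label $1$ with associated variable $x$. The construction provides the arcs $c^+ x \in A_1$ and $x c^+ \in A_2$. First, I would apply Observation~\ref{obs:ordering} with $v_i = c^+$, $v_j = x$, and $\ell = 1$ to obtain the ordering constraint forced by the sink condition on $c^+$ in $A_1$. Second, to pin down the label of $x$, I would argue by contradiction: if $x$ were assigned label $2$, then applying Observation~\ref{obs:ordering} to the arc $x c^+ \in A_2$ with $v_i = x$, $v_j = c^+$, and $\ell = 2$ would force the opposite ordering between $c^+$ and $x$, contradicting the previous derivation. Hence $x$ must be labeled $1$.

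For the negative case, let $c^-$ be assigned label $1$ with associated variable $x$. The construction places both $c^- x$ and $x c^-$ in $B_1$, so $\{c^-, x\}$ forms a directed $2$-cycle in $B_1$. Observation~\ref{obs:directedcycle} then immediately forces $x$ to be labeled $2$, since otherwise both vertices of the cycle would carry label $1$. For the ordering, I would apply the source condition of the $k$-\textsc{Compatible Ordering} definition directly to $c^-$ in $B_1$: because $x$ is an in-neighbor of $c^-$ in $B_1$ and $c^-$ is labeled $1$, the requirement that $c^-$ be a source in $B_1[s_1, \ldots, s_{i}]$ (where $s_i = c^-$) forbids $x$ from appearing in $\{s_1, \ldots, s_{i-1}\}$, fixing the ordering between $c^-$ and $x$.

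No step presents a real conceptual difficulty; the main care needed is in matching each arc of the construction to the correct case of Observation~\ref{obs:ordering} (an $A$-arc invoking the sink constraint versus a $B$-arc invoking the source constraint), and in tracking which endpoint carries the label that triggers each constraint. Everything else follows mechanically from translating the arc set of the construction into ordering and labeling restrictions on $(\mathcal{S}, \mathcal{L})$.
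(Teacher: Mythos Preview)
Your proposal is correct and follows essentially the same route as the paper's proof: both treat the positive and negative cases separately using the construction's arcs together with Observation~\ref{obs:ordering}. The only minor difference is that in the negative case the paper first derives the relative position of $x$ and $c^-$ from the arc $xc^-\in B_1$ and then rules out label~$1$ for $x$ via the arc $c^-x\in B_1$, whereas you first invoke Observation~\ref{obs:directedcycle} on the $2$-cycle to fix the label of $x$ and then derive the position; either order works and the two arguments are interchangeable.
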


\begin{proof}
Let $c_i^a$ be a clause vertex labeled 1, and let $x_j$ be the associated variable. If $c_i^a$ is a positive clause vertex, given the arc $c_i^ax_j$ in $A_1$, $x_j$ must come before $c_i^a$ (Observation~\ref{obs:ordering}). Moreover, since we can now assume that $x_j$ comes before $c_i^a$, $x_j$ cannot be assigned label 2 due to the arc $x_jc^a_i$ in $A_2$. Otherwise, if $c_i^a$ is a negative clause vertex, given the arc $x_jc^a_i$ in $B_1$, $x_j$ must come after $c_i^a$ (Observation~\ref{obs:ordering}). Moreover, since we can now assume that $x_j$ comes after $c_i^a$, $x_j$ cannot be assigned label 1 due to the arc $c^a_ix_j$ in $B_1$.
\end{proof}

Let $C^{\ell = 1}$ be the set of clause vertices that are assigned label 1. It follows from Observation~\ref{obs2} that $C^{\ell = 1}$ cannot include negative and positive clause vertices with the same associated variable, as this would imply that the associated variable is included multiple times in the ordering, and is assigned two different labels. We partition $C^{\ell = 1}$ into two sets of clause vertices, $C^{+, \ell = 1}$ and $C^{-, \ell = 1}$, based on polarity, i.e., $C^{+, \ell = 1}$ (resp. $C^{-, \ell = 1}$) is the set of positive (resp. negative) clause vertices in $C^{l = 1}$. Let $X^{+, \ell = 1}$ (resp. $X^{-, \ell = 1}$) be the set of variable vertices associated with the clause vertices in $C^{+, \ell = 1}$ (resp. $C^{-, \ell = 1}$). Note that $|X^{+, \ell = 1}| \leq |C^{+, \ell = 1}|$ and $|X^{-, \ell = 1}| \leq |C^{-, \ell = 1}|$, since multiple clause vertices may be associated with the same variable vertex, and that $X^{+, \ell = 1} \cap X^{-, \ell = 1} = \emptyset$, which follows from Observation~\ref{obs2}.

We observe that $X \setminus \left(X^{+, \ell = 1} \bigcup X^{-, \ell = 1}\right)$ need not be empty. We claim that setting each variable in $X^{+, \ell = 1}$ to \emph{true} and setting each variable in $X^{-, \ell = 1}$ to \emph{false} satisfies $\varphi$; the variables in $X \setminus \left(X^{+, \ell = 1} \bigcup X^{-, \ell = 1}\right)$ are \emph{don't-care} variables that can be set to either \emph{true} or \emph{false}. Suppose otherwise: there exists a clause $C_i$ in $\varphi$, that is not satisfied by the proposed truth value assignment. By Observation~\ref{obs:directedcycle}, a clause vertex $c_i^a$ of $C_1$, whose associated variable is $x_j$, is included in the ordering with label 1. Furthermore, by Observation~\ref{obs2}, if $c^a_i$ is a positive (resp. negative) clause vertex, $x_j \in X^{+, \ell = 1}$ (resp. $x_j \in X^{-, \ell = 1}$), and is assigned the truth assignment \emph{true} (resp. \emph{false}) by our construction, which satisfies $c_i^1$, and therefore $C_i$, a contradiction. This concludes the proof of correctness of the reduction. 

Since $\varphi$ is planar 4-bounded, $G(\varphi)$ has a planar embedding. Replacing each clause vertex with three closely spaced points (for $c_i^1,c_i^2,c_i^3$) introduces no crossings, so the union remains planar. For degeneracy, each clause vertex has degree at most 3 in the underlying undirected graph without parallel edges (two edges in the cycle and one edge to the corresponding variable). Variable vertices form independent sets, implying any induced subgraph has a vertex of degree at most 3. Hence the union of $A_1,A_2,B_1,B_2$ has degeneracy at most 3. We note that the maximum degree of a vertex in the union is $8$, which occurs when a variable appears $4$ times in $4$ distinct clauses, irrespective of polarity.

Thus, $2$-\textsc{Compatible Ordering} (and hence $k$-\textsc{Compatible Ordering}$)$ is $\mathsf{NP}$-complete under planarity and degeneracy~$3$.
\end{proof}

The construction covered in Theorem~\ref{thm:kcompat-npc} goes against the spirit of the problem, because it relies on the existence of directed digons in $B_1$ and directed triangles in $A_2$: directed cycles trivially prevent the existence of a solution where the cycle's vertices are all assigned the same label (Observation~\ref{obs:directedcycle}). One might therefore wonder whether forbidding directed cycles within each of the directed graphs makes the problem easier to solve. Our next result shows that the answer is negative. 

\thmNPAcyclic*
%\begin{theorem}\label{thm:kcompat-npc-acyclic}
%  $k$-\textsc{Compatible Ordering} is $\mathsf{NP}$-complete for $k \ge 2$, even when no directed graph in the instance contains a directed cycle.
%\end{theorem}

\begin{proof}
Membership of the problem in $\mathsf{NP}$ is immediate. We show hardness via a reduction from \textsc{Planar 4-Bounded 3-SAT}, modifying the construction used in the proof of Theorem~\ref{thm:kcompat-npc} to eliminate directed cycles in $A_2$ and $B_1$.

Let $\varphi = C_1 \wedge \cdots \wedge C_p$ be a \textsc{Planar 4-Bounded 3-SAT} instance on variables $X = \{x_1, \dots, x_q\}$. Starting with the vertex set $V$ in Theorem~\ref{thm:kcompat-npc} (which includes variable vertices and clause vertices), we add:
\begin{itemize}
  \item A \emph{dummy clause vertex} $d_i$ for each clause $C_i$.
  \item A \emph{dummy negative clause vertex} $d_i^a$ for each negative literal $c_i^a$.
  \item A single \emph{global vertex} $g$.
\end{itemize}
After the inclusion of the vertices above, $|V| = 4p + q + o + 1$, where $o$ is the total number of negative literals in $\varphi$.
\\
\\
We then define arcs in $A_1, B_1, A_2, B_2$ so that none of the four graphs contains a directed cycle, while preserving the logical structure of the previous reduction:

\begin{itemize}
\item \textbf{Clause arcs:} For each clause $C_i$ whose clause vertices are $c^1_i, c^2_i, c^3_i$, we add the arcs $c_i^1c_i^2, c_i^2c_i^3, c_i^3d_i$ to $A_2$, and the arc $d_ic_i^1$ to $A_1$. The clause cycles previously contained in $A_2$ in the construction of Theorem~\ref{thm:kcompat-npc} are now in the union of $A_1$ and $A_2$.
\item \textbf{Positive clause-variable arcs:} 
  As in Theorem~\ref{thm:kcompat-npc}, for a positive literal $c^+$ whose associated variable is $x$, add the arc $c^+x$ to $A_1$ and the arc $xc^+$ to $A_2$.
\item \textbf{Negative clause-variable arcs:}
  For each negative literal $c_i^a$ whose associated variable is $x$, add the arcs
    $d_i^ac_i^a$, $c_i^ax$ to $B_1$ and the arc $xd_i^a$ to $B_2$. The negative clause-variable cycles previously contained in $B_1$ in the construction of Theorem~\ref{thm:kcompat-npc} are now in the union of $B_1$ and $B_2$.
\item \textbf{Arcs involving the global vertex $g$:}  For each dummy clause vertex $d_i$, add the arc $d_ig$ to $A_2$ and the arc $gd_i$ to $B_2$.  For each dummy negative clause vertex $d_i^a$, add the arc $d_i^ag$ to $A_1$ and the arc $gd_i^a$ to $B_1$. These arcs guarantee dummy vertices obtain particular labels, as explained in Observation~\ref{obs:forbiddenlabel}, and prevent cycles involving $g$.
\end{itemize}

An example depicting the different types of arcs in the construction can be found in Figure~\ref{fig:3-sat-kco-reduction}.

We claim that  $\varphi$ is satisfiable if and only if $(V,\mathcal{G} = \{(A_1,B_1), (A_2,B_2)\})$ is a yes-instance of $2$-\textsc{Compatible Ordering}.

\noindent
\\
\textbf{($\Rightarrow$)}
If $\varphi$ is satisfiable, we begin with the labeled ordering from Theorem~\ref{thm:kcompat-npc} that arranges variable vertices and clause vertices consistently (ensuring at least one literal per clause is labeled 1), with the labeling $\mathcal{L}$ and the orderings $\mathcal{S}, \mathcal{X}^T, \mathcal{C}^T, \mathcal{C}^R$ and $\mathcal{X}^F$  retaining their definitions. We then insert:
\begin{itemize}
\item each $d_i$ with label 1 immediately after $c_i^1$ (resp. before $c_i^3$) if $c_i^1$ comes before (resp. after) $c_i^3$ in $\mathcal{S}$ to satisfy the constraints implied by the new arcs in $A_1$ and $A_2$,
\item each $d_i^a$ with label 2 immediately after $c_i^a$ (resp. before its associated variable $x$) if $c_i^a$ comes before (resp. after) $x$ in $\mathcal{S}$ to satisfy the constraints implied by the new arcs in $B_1$ and $B_2$, and
\item the global vertex $g$, labeled 1, at the end of the ordering.
\end{itemize}
It is easy to verify that the sink and source constraints of all the dummy vertices, as well as $g$, are satisfied.  The proof of correctness to come will therefore cover the rest of the vertex types.
\\
\\
Let $\mathcal{S}' = \mathcal{X}^T, \mathcal{C}^T_d, \mathcal{C}^R_d, \mathcal{X}^F_d, g$ and $\mathcal{L}'$ denote the updated ordering and labeling, with $\mathcal{C}^T_d, \mathcal{C}^R_d$ and $\mathcal{X}^F_d$ and denoting the augmentation of $\mathcal{C}^T, \mathcal{C}^R$ and $\mathcal{X}^F$ respectively with the dummy vertices, as explained earlier. 

We claim that $(\mathcal{S}', \mathcal{L}')$ is a valid labeled ordering. We replicate the structure of the proof of Theorem~\ref{thm:kcompat-npc}:
\begin{itemize}
    \item Each variable vertex in $\mathcal{X}^T$ has no out-neighbors in $A_1$, and its in-neighbors in $B_1$, which are clause vertices, come after it in $\mathcal{S}'$.
\item The vertices in $\mathcal{C}^T_d$ are dummy clause vertices, dummy negative clause vertices, positive clause vertices or negative clause vertices; we handle the last two types of vertices. Each positive clause vertex in $\mathcal{C}^T_d$ has a unique out-neighbor in $A_1$ in $\mathcal{X}^T$ which comes before it in $\mathcal{S'}$, and has no in-neighbors in $B_1$. Each negative clause vertex in $\mathcal{C}^T_d$ has no out-neighbors in $A_1$, and has a unique in-neighbor in $B_1$ in $\mathcal{X}^F$, a dummy negative clause vertex, which comes after it in $\mathcal{S}'$.
\item The vertices in $\mathcal{C}^R_d$ are dummy clause vertices, dummy negative clause vertices, positive clause vertices, or negative clause vertices; we handle the last two types of vertices. Consider an arbitrary clause vertex $c_i^a$ in $\mathcal{C}^R_d$. If $a = 1$ or $a = 2$, $c_i^a$ has a unique out-neighbor in $A_2$, which is either in $\mathcal{C}^T_d$ and comes before it in $\mathcal{S}'$, or is in $\mathcal{C}^R_d$ and has been included in $\mathcal{S}'$ directly before $c_i^a$. If $a = 3$, $c_i^a$ has a unique out-neighbor $d_i$ in $A_2$, which either comes directly before $c_i^3$ in $\mathcal{S}'$ if $c^3_i$ comes before $c_i^1$ in $\mathcal{S}'$, or comes directly after $c_i^1$, and therefore, before $c_i^3$, if $c_i^3$ comes after $c_i^1$ in $\mathcal{S}'$. In both cases, $c_i^a$ has no in-neighbors in $B_2$.
\item The vertices in $\mathcal{X}^F_d$ are either dummy negative clause vertices or variable vertices; we handle the second type of vertex. Each variable vertex in $\mathcal{X}^F_d$ has no in-neighbors in $B_2$, and their out-neighbors in $A_2$, which are clause vertices, come before them in $\mathcal{S}'$.
\item The global vertex $g$ has no out-neighbors in $A_1$, and no in-neighbors in $B_1$.
\end{itemize}

The proposed labeled ordering $(\mathcal{S}', \mathcal{L}')$ is therefore valid.

\noindent
\\
\textbf{($\Leftarrow$)}
The proof for the backward direction of Theorem~\ref{thm:kcompat-npc} relied on the fact that one of $c_i^1, c_i^2$ and $c_i^3$ is assigned label 1 for each clause $C_i$, as well as Observation~\ref{obs2}. We show that the two observations still hold under the new construction, which is sufficient to prove the existence of a satisfying assignment for $\varphi$.

To prove that, for each clause $C_i$, one of $c_i^1, c_i^2, c_i^3$ is assigned label 1 under the new construction, suppose, for the sake of contradiction, that there exists a clause $C_i$ whose clause vertices are all assigned label 2 in the ordering, which implies that its non-dummy clause vertices are in the following relative order: $c_i^3, c_i^2, c_i^1$. Since $d_i$ is assigned label 1 (Observation~\ref{obs:forbiddenlabel}), it will have to come after $c_i^1$, and therefore, after $c_i^3$, which violates a constraint on $c_i^3$ because of the arc $c_i^3d_i$ in $A_1$ (Observation~\ref{obs:ordering}).
As for Observation~\ref{obs2}, the previous proof is still applicable in the new construction, because the proof relies on the arcs constructed between variable vertices and non-dummy clause vertices, which were not modified.

The rest of the argument used to derive a satisfying assignment for $\varphi$ follows directly from the two statements above, and is identical to the proof of Theorem~\ref{thm:kcompat-npc}.

Thus, removing the cycles from $A_2$ and $B_1$ via dummy vertices still encodes satisfiability. Hence $2$-\textsc{Compatible Ordering} (and therefore $k$-\textsc{Compatible Ordering}$)$ remains $\mathsf{NP}$-complete when all graphs are acyclic.
\end{proof}

In the construction of Theorem~\ref{thm:kcompat-npc-acyclic}, cycles no longer exist in individual directed graphs, but rather, in the union of the directed graphs. By Corollary~\ref{cor:trivial-1-compat-ordering}, we already know that the problem admits a polynomial-time solution when the union of $\mathcal{G}$ is acyclic (and actually even when there exists a single $i$ for which $A_i \cup B_i$ is acyclic). So the natural next step would be looking for intermediate instances. Next, we prove that the problem remains $\mathsf{NP}$-complete even on instances whose union is acyclic after a simple transformation of the instance. Given a directed graph $G = (V, E)$, we define the \emph{reversed graph} of $G$ as the graph $(V, \{ v_jv_i$ $|$ $v_iv_j \in E\})$: we will reverse the $B$ graphs.

\begin{theorem}
    \label{appendix-thm:kcompat-dag}$k$-\textsc{Compatible Ordering} is $\mathsf{NP}$-complete on instances where the union of the $A$ graphs and the reversed $B$ graphs is acyclic.
\end{theorem}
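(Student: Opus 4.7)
The plan is to adapt the $\mathsf{NP}$-hardness reduction from \textsc{Planar 4-Bounded 3-SAT} underlying Theorem~\ref{thm:kcompat-npc-acyclic}, redesigning each gadget so that no directed cycle arises in the union of the $A$ graphs with the reversed $B$ graphs. Membership in $\mathsf{NP}$ follows exactly as in the earlier theorems, via Observation~\ref{obs:residgraph}.

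Given a formula $\varphi = C_1 \wedge \cdots \wedge C_p$ on variables $x_1, \ldots, x_q$, I would introduce one variable vertex per $x_j$, three clause vertices $c_i^1, c_i^2, c_i^3$ and one dummy vertex $d_i$ per clause, and one auxiliary vertex $y_i^a$ per literal. For the clause gadget I would use $c_i^1c_i^2, c_i^2c_i^3, c_i^3d_i$ in $A_2$ together with $d_ic_i^1$ in $B_2$: if all three $c_i^a$'s carry label $2$, every arc becomes active and the induced precedence constraints force the cycle $c_i^1 \prec d_i \prec c_i^3 \prec c_i^2 \prec c_i^1$, so at least one $c_i^a$ must be labeled $1$. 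For a positive literal $c_i^a$ with associated variable $x$ I would add $c_i^a y_i^a$ to $A_1$, $y_i^a x$ to $B_2$, and $xc_i^a$ to $A_2$; the unique label combination that activates all three arcs is $\ell_{c_i^a} = 1$ with $\ell_x = 2$, in which case the constraints produce the precedence cycle $c_i^a \prec x \prec y_i^a \prec c_i^a$ and are therefore forbidden, exactly capturing ``positive literal true implies variable true''. Negative literals are handled symmetrically, replacing $B_2$ with $B_1$ and $A_2$ with $A_1$.

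The correctness proof would follow the template of Theorems~\ref{thm:kcompat-npc} and~\ref{thm:kcompat-npc-acyclic}: from a satisfying assignment, I would set $\ell_{x_j}$ to match the truth value of $x_j$, label exactly one satisfied literal per clause with $1$ and the remaining clause vertices with $2$, and verify by a short case analysis that the resulting active precedence graph is a DAG whose topological sort yields a valid labeled ordering. Conversely, any valid labeled ordering must avoid the clause-gadget precedence cycle and every literal-consistency precedence cycle, directly producing a satisfying assignment. Verifying the structural condition amounts to exhibiting the topological order ``variables, then clause vertices in the intra-clause order $c_i^1, c_i^2, c_i^3$, then dummy and auxiliary vertices'' for the union of the $A$ graphs and the reversed $B$ graphs: variables have only outgoing arcs in this union, each clause contributes the chain $c_i^1 \to c_i^2 \to c_i^3 \to d_i$ together with the forward shortcut $c_i^1 \to d_i$ obtained by reversing $d_ic_i^1 \in B_2$, and each auxiliary vertex $y_i^a$ is a sink receiving the arcs $c_i^a \to y_i^a$ from $A_1$ and $x \to y_i^a$ from the reversal of $y_i^a x \in B_2$ (or $B_1$).

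The hardest step will be designing the literal-consistency gadget. Using only direct arcs between $c_i^a$ and $x$, forbidding one specific pair of labels would require arcs in both directions between them, immediately producing a length-two cycle in the union of the $A$ graphs and the reversed $B$ graphs. Routing the constraint through the auxiliary vertex $y_i^a$, which is a sink in that union, is what allows the precedence cycle enforcing the forbidden combination to live only in the active precedence graph rather than in the underlying arc structure. The accompanying task is to show that no cross-gadget cycle can arise in the active precedence graph; this reduces to the observations that dummy and auxiliary vertices are sinks and that the labeling derived from any satisfying assignment cannot simultaneously activate a chosen-literal arc and a forbidden-combination arc at the same variable hub.
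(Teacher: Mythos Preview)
Your construction is correct and proves the statement, but it is genuinely different from the paper's own reduction. The paper abandons the three-clause-vertex template of Theorems~\ref{thm:kcompat-npc} and~\ref{thm:kcompat-npc-acyclic} entirely and instead reduces from general \textsc{3-SAT} to $3$-\textsc{Compatible Ordering}: it uses a \emph{single} clause vertex per clause together with two global separator vertices $t_1,t_2$, and lets the label $\ell\in\{1,2,3\}$ of a clause vertex encode which of its three literals is the witness, while arcs to and from $t_1,t_2$ force every variable vertex either before $t_1$ (meaning \emph{true}) or after $t_2$ (meaning \emph{false}). Your route stays with $k=2$ and the three-clause-vertex pattern, but breaks the problematic two-cycles of the earlier constructions by routing each literal-consistency constraint through a fresh auxiliary $y_i^a$ that is a sink in the union of the $A$ graphs with the reversed $B$ graphs. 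What your approach buys is the sharper value $k=2$; what the paper's approach buys is a much smaller and more uniform gadget (no per-literal auxiliaries, no clause cycle to break) at the cost of moving to $k=3$.

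Two small points to tighten when you write the full proof. First, your closing claim that ``dummy and auxiliary vertices are sinks'' in the active precedence graph is not literally true: $d_i$ can have the outgoing active constraint $d_i\prec c_i^3$, and a \emph{chosen} auxiliary $y_i^{a_i}$ is a source, not a sink. The acyclicity argument you need is rather that non-chosen auxiliaries are sinks, chosen auxiliaries are sources, variables have only incoming arcs from clause vertices and outgoing arcs to non-chosen auxiliaries, and each clause's internal constraints form a path beginning at the chosen $c_i^{a_i}$; together these force every directed walk in the active precedence graph to terminate. Second, you should state explicitly that $d_i$ and each $y_i^a$ have no out-neighbours in any $A$ graph and no in-neighbours in any $B$ graph, so their own sink and source constraints are vacuous and their labels may be chosen arbitrarily in the forward direction.
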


\begin{proof} We have already proven in Theorem~\ref{thm:kcompat-npc} that the problem is in $\mathsf{NP}$. We prove that the problem is $\mathsf{NP}$-hard on the set of instances of interest via a reduction from \textsc{3-SAT} where each clause contains exactly three literals, which is still $\mathsf{NP}$-hard. 

Let $\varphi = C_1 \land \ldots \land C_p$ be a \textsc{3-SAT} instance on variables $X = \{x_1, \ldots, x_q\}$. The vertex set $V$ contains three vertex types:

\begin{itemize}
    \item Each variable in $X$ as a variable vertex,
    \item each clause in $\varphi$ as a clause vertex, and
    \item two vertices $t_1$ and $t_2$.
\end{itemize}

The vertex set $V$ of the constructed instance is therefore of size $p + q + 2$. The purpose of the two vertices $t_1$ and $t_2$ is to separate the variable vertices and the clause vertices in a solution. We now present the arcs that are included in the construction; this set of arcs will yield a set of directed graphs whose labeled union after reversal of the $B$ graphs is acyclic, and where the following sequence is a topological ordering of the vertices: $\mathcal{C}, \mathcal{X}, t_2, t_1$, where $\mathcal{C}$ and $\mathcal{X}$ are arbitrary orderings of the clause vertices and $X$ respectively.

\begin{itemize}
    \item \textbf{Arc from $t_2$ to $t_1$:} The arc $t_2t_1$ is added to $A_1, A_2, A_3$.
    \item  \textbf{Arcs between clause vertices and $t_1$ and $t_2$:} For each clause vertex $C_j$, we add the arc $C_jt_1$ (resp. $t_2C_j$) to $A_1, A_2, A_3$ (resp. $B_1, B_2, B_3$). 
    \item  \textbf{Clause-variable arcs:} For each clause $C_i$ whose clause vertices are $c_i^1, c_i^2, c_i^3$, and for each literal $c_i^a$ associated with the variable $x_j$, we add the arc $C_ix_j$ to $A_a$ (resp. the arc $x_jC_i$ to $B_a$) if and only if $c_i^a$ is a positive (resp. negative) literal in $C_i$. 
    \item \textbf{Arcs between variable vertices and $t_1$ and $t_2$}: For each variable vertex $x_j$ we add the arc $t_1x_j$ to $B_1$, the arcs $x_jt_1$ and $t_1x_j$ to $A_2$ and $B_2$ respectively, and the arc $x_jt_2$ to $A_3$.

\end{itemize}

\begin{figure*}
    \centering
    \includegraphics[width=\linewidth]{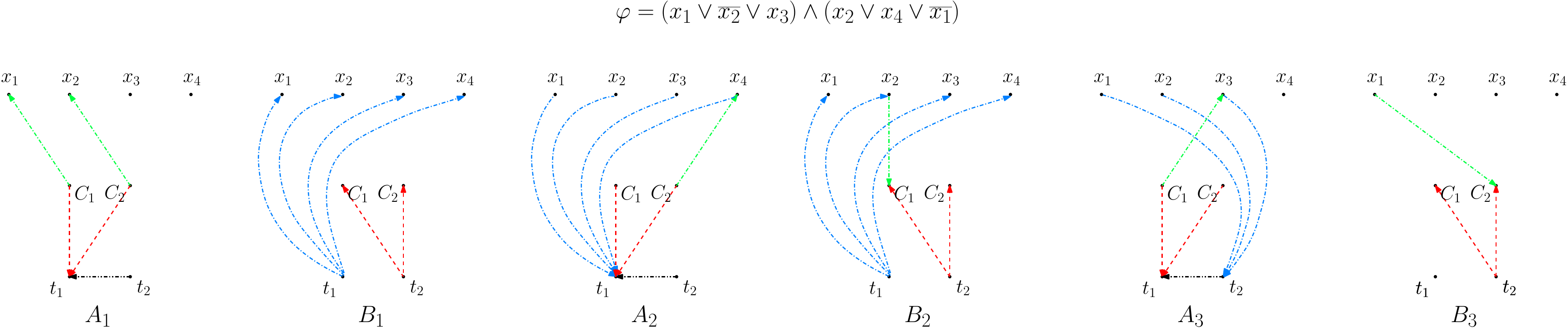}
    \caption{Illustration of the reduction from Theorem~\ref{thm:kcompat-npc-acyclic}.}
    \label{fig:reversed}
\end{figure*}

We remark that the variable vertices and the clause vertices form two independent sets in the union of the four digraphs. An example depicting the different types of arcs in the construction can be found in Figure~\ref{fig:reversed} (the ordering $\mathcal{S} = x_1, x_4, t_1, C_1, C_2, t_2, x_2, x_3$ solves the instance with labeling $\mathcal{L} = 1, 1, 1, 1, 2, 1, 3, 3$.).

We claim that $\varphi$ is a yes-instance of \textsc{3-SAT} if and only if $(V, \mathcal{G} = \{(A_1, B_1), (A_2, B_2)\})$ is a yes-instance of $3$-\textsc{Compatible Ordering}.
\\
\\
($\Rightarrow$) If $\varphi$ has a satisfying assignment, let $\mathcal{X}^T$ (resp. $\mathcal{X}^F$) be an arbitrary ordering of the variables that are assigned \emph{true} (resp. \emph{false}) in the assignment. We set $\mathcal{S} = \mathcal{X}^T, t_1, \mathcal{C}, t_2, \mathcal{X}^F$, where $\mathcal{C}$ is an arbitrary ordering of the clause vertices. We assign label 1 to $t_1$ and $t_2$, label 1 to each vertex in $\mathcal{X}^T$, and label 3 to each vertex in $\mathcal{X}^F$. The label that a clause variable $C_i$ is assigned will depend on which literal within $C_i$ is satisfied according to the chosen satisfying assignment. If $c^1_i$ is satisfied, $C_i$ is assigned label 1; if $c_i^2$ is satisfied, $C_i$ is assigned label 2, otherwise, $C_i$ is assigned label 3.

We claim that $(\mathcal{S}, \mathcal{L})$ is a valid labeled ordering. We include a brief proof of correctness:

\begin{itemize}
    \item Each variable vertex in $\mathcal{X}^T$ has no out-neighbors in $A_1$, and its in-neighbor in $B_1$, which is $t_1$, comes after it in $\mathcal{S}$.
    \item The vertex $t_1$ has no out-neighbors in $A_1$, and no in-neighbors in $B_1$.
    \item Each clause vertex in $\mathcal{C}$ has $t_1$ as an out-neighbor in the $A$ graphs, $t_2$ as an in-neighbor in the $B$ graphs, and for each pair $(A_l, B_l)$, either one variable vertex as an out-neighbor in $A_l$, or one variable vertex as an in-neighbor in $B_l$. Each clause vertex $C_i$ comes after $t_1$ and before $t_2$ in the ordering. Suppose that $C_i$ has been assigned label $l$: this implies that $c^l_i$, that has the associated variable $x_j$, evaluates to \emph{true}. If $c^l_i$ is a positive literal (resp. a negative literal), $C_ix_j$ (resp. $x_jC_i$) is an arc in $A_l$ (resp. $B_l$), and $x_j$ comes before (resp. after) $C_i$ in $\mathcal{S}$, because $x_j \in \mathcal{X}^T$ (resp. $x_j \in \mathcal{X}^F$), as needed. 
    \item The vertex $t_2$ has a unique out-neighbor in $A_1$, $t_1$, which comes before it in $\mathcal{S}$, and has no in-neighbors in $B_1$.
    \item Each variable vertex in $\mathcal{X}^F$ has a unique out-neighbor $t_2$ in $A_3$, which comes before it in $\mathcal{S}$, and has no in-neighbors in $B_3$.
\end{itemize}

The proposed labeled ordering $(\mathcal{S}, \mathcal{L})$ is therefore valid.
\\
\\
($\Leftarrow$) Let $(\mathcal{S}, \mathcal{L})$ be the valid labeled ordering that solves the constructed $3$-\textsc{Compatible Ordering} instance. In addition to the vertex $t_1$ coming before the vertex $t_2$ in $\mathcal{S}$, since the arc $t_2t_1$ exists in all $A$ graphs (Observation~\ref{obs:before}), $\mathcal{S}$ has the following characteristics:

\begin{observation}
    \label{obs:between}
    All the clause vertices come between $t_1$ and $t_2$ in $\mathcal{S}$.
\end{observation}
\begin{observation}
    \label{obs:outside}
    No variable vertex comes between $t_1$ and $t_2$ in $\mathcal{S}$.
\end{observation}

Observation~\ref{obs:between} is due to $t_1$ coming before $t_2$, and the arcs $C_it_1$ in the $A$ graphs, $t_2C_i$ in the $B$ graphs for each clause vertex $C_i$.
As for Observation~\ref{obs:outside}, for each variable vertex $x_j$, the arc $t_1x_j$ in $B_1$ means that $x_j$ must come before $t_1$ if $x_j$ is labeled 1 (Observation~\ref{obs:ordering}), the arcs $x_jt_1$ and $t_1x_j$ in $A_2$ and $B_2$ respectively mean that $x_j$ is never labeled 2 (Observation~\ref{obs:forbiddenlabel}), and the arc $x_jt_2$ in $A_3$ means that $x_j$ must come after $t_2$ if $x_j$ is labeled 3 (Observation~\ref{obs:ordering}).

In order to derive a satisfying assignment, we focus on the clause vertices, as well as the labels that they are assigned. Let $\ell_1, \ldots, \ell_p$ be the labels assigned to $C_1, \ldots, C_p$, respectively. Consider the set of literals $C^{sat} = \{c^{\ell_1}_1, \ldots, c^{\ell_p}_p\}$. We will partition those literals into two sets $C^T$ and $C^F$, based on whether the arc between $c_i^{\ell_i}$ and its associated variable, in one direction or the other, is in an $A$ graph or a $B$ graph, respectively. We then define $X^T$ (resp. $X^F$) as the set of variables in $\varphi$ with at least one literal in $C^T$ (resp. $C^F$). We observe that a variable vertex $x$ cannot belong to both $X^T$ and $X^F$, i.e., $X^T \cap X^F = \emptyset$, because an arc from a clause vertex in $C^{sat}$ to $x$ in an $A$ graph implies that $x$ comes before $t_1$ in $\mathcal{S}$, and an arc from the same clause vertex in $C^{sat}$ to $x$ in a $B$ graph implies that $x$ comes after $t_2$ in $\mathcal{S}$ (Observations~\ref{obs:ordering},~\ref{obs:outside}), a contradiction, since we already established that $t_1$ comes before $t_2$ in every valid labeled ordering. 

We claim that we can satisfy the 3-SAT formula $\varphi$ by setting all the variables in $X^T$ to \emph{true}, all the variables in $X^F$ to \emph{false}; the variables in $X \setminus \left(X^T \cup X^F\right)$ are \emph{don't-care} variables, and can be assigned either \emph{true} or \emph{false} (without loss of generality, we will set those to \emph{false}). Towards a contradiction, suppose that this proposed truth value assignment is incorrect, i.e., there exists a clause $C_i$ that is not satisfied. Its clause vertex has been assigned label $\ell_i$, and the variable associated with the literal $c_i^{\ell_i}$ is either in $X^T$ or in $X^F$, depending on whether it is a positive literal or a negative literal. If it is a positive literal, i.e., in $X^T$, we are setting it to true, whereas if it is a negative literal, i.e., in $X^F$, we are setting it to false, which satisfies $C_i$ in both cases.
\end{proof}

\section{Finding bounded $k$-compatible orderings}
\label{sec:bounded_ordering}
\vspace{1em}
In some cases, we may not need to order all of the vertices in a $k$-\textsc{Compatible Ordering} instance. Going back to the application of $k$-\textsc{Compatible Ordering} on moving objects from containers into trucks proposed in Section~\ref{sec:introduction}, the recipient of the delivery may be satisfied with receiving at least $b$ objects, as long as the objects are able to fit in the available trucks in some order. The problem definition can easily be modified to accommodate this change.
We introduce a bounded variant of the $k$-\textsc{Compatible Ordering} problem. In $k$-\textsc{Compatible Bounded Ordering}, a natural number $b$ is included as part of the input, for each vertex the source and sink constraints are defined identically to those in $k$-\textsc{Compatible Ordering}, and it suffices to find a valid labeled ordering of size at least $b$, where the source and sink constraints that should be satisfied only concern the vertices in the ordering. The problem is defined formally below:
\begin{tcolorbox}[colback=white, colframe=black,  
                  arc=4mm, boxrule=0.3mm, 
                  fonttitle=\bfseries]
$k$-\textsc{Compatible Bounded Ordering}
\\
\textbf{Input:} A vertex set $V$, a collection $\mathcal{G}$ of $k$ pairs of directed graphs $(A_1, B_1), \ldots,(A_k, B_k)$, such that $V(A_i) = V(B_i) = V$ for $1 \leq i \leq k$, a natural number $b$.
\\
\textbf{Output:} Whether there exists a pair $(\mathcal{S}, \mathcal{L})$, where $\mathcal{S} = s_1,\ldots, s_{b'}$ ($b' \geq b$) is an ordering of a subset of $V$, $\mathcal{L} = \ell_1, \ldots, \ell_{b'}$ is a sequence of labels, and vertex $s_i \in \mathcal{S}$ is assigned label $\ell_i \in [k]$, for $1 \leq i \leq b'$, such that $s_i$ satisfies the following two constraints:
\begin{itemize}
    \item $s_i$ is a sink in $A_{\ell_i}[s_i, \ldots, s_{b'}]$, and
    \item $s_i$ is a source in $B_{\ell_i}[s_1, \ldots, s_i]$.
\end{itemize}
\end{tcolorbox}

We refer to a related problem, the \textsc{Multicolored Independent Set} problem:

\begin{tcolorbox}[colback=white, colframe=black,  
                  arc=4mm, boxrule=0.3mm, 
                  fonttitle=\bfseries]
\textsc{Multicolored Independent Set}
\\
\textbf{Input:} A graph $G$, a natural number $b$, a partitioning of $V(G)$ into $b$ sets $V_1, \ldots, V_b$.
\\
\textbf{Output:} Whether there exists a set $X$ containing exactly one vertex from each set $V_i$, such that $X$ induces an independent set in $G$. 
\end{tcolorbox}

We can think of the $k$-\textsc{Compatible Bounded Ordering} problem as a generalization of the \textsc{Multicolored Independent Set} problem, as adding a digon between each pair of vertices in a set $V' \subseteq V$ in a graph $A_\ell$ (or $B_\ell$) prohibits the inclusion of more than one vertex in $V'$ in the ordering $\mathcal{S}$ with label $\ell$. This can be used to encode the constraint concerning the selection of a single vertex per partition in \textsc{Multicolored Independent Set}.

Since the number of desired vertices in the output is now part of the input of $k$-\textsc{Compatible Bounded Ordering}, it is worth considering whether $k$-\textsc{Compatible Bounded Ordering} can be solved in time $f(b) \cdot p(n)$, where $f$ is some computable function, $p$ is a polynomial function and $n$ is the number of vertices in the input. The next result indicates that this is unlikely to be the case. 

\thmBoundedOrdering*

\begin{proof}
We prove that the statement holds even when $k = 1$. We reduce from the \textsc{Multicolored Independent Set} problem. \textsc{Multicolored Independent Set} is known to be $\mathsf{W[1]}$-hard parameterized by the size of the solution~\cite{cygan:parameterized}.

We now describe the reduction. Given an instance $(G = (V_1 \cup \ldots \cup V_k, E), b)$ of \textsc{Multicolored Independent Set}, each vertex in $V(G)$ is a vertex in the constructed vertex set $V$. For each pair of vertices $v_i, v_j$ that belong to the same partition in $G$, we add the arcs $v_iv_j$ and $v_jv_i$ to $A_1$. For each pair of vertices $v_i, v_j$ that do not belong in the same partition in $G$, such that $v_iv_j \in E$,  we add the arcs $v_iv_j$ and $v_jv_i$ to $A_1$. All of the arcs in the construction are therefore contained within $A_1$, and the graph $B_1$ does not contain any arc.

We claim that $(G, b)$ is a yes-instance of \textsc{Multicolored Independent Set} if and only if $(V, \{(A_1, B_1)\}, b)$ is a yes-instance of $1$-\textsc{Compatible Bounded Ordering}.

\noindent 
\\($\Rightarrow$) Suppose $(G, b)$ is a yes-instance of \textsc{Multicolored Independent Set}, and let $v_1, \ldots, v_b$ be the vertices that form the multicolored independent set, such that $v_i \in V_i$ for $1 \leq i \leq b$. We set $\mathcal{S}$ to be equal to $v_1, \ldots, v_{b}$, and these vertices are trivially assigned label 1. Since none of the vertices in $\mathcal{S}$ are pairwise in the same partition, and none of these vertices are pairwise adjacent in $G$, they induce an independent set in $A_1$; the ordering $\mathcal{S}$ is therefore valid with the labeling that assigns label 1 to all the vertices of $\mathcal{S}$.

\noindent 
\\($\Leftarrow$) Suppose $(S, \{(A_1 \cup B_1)\}, b)$ is a yes-instance of $1$-\textsc{Compatible Bounded Ordering}, and let $\mathcal{S} = v_1, \ldots, v_{b'}$ ($b' \geq b$) be a ordering that is valid with the labeling that assigns label 1 to every vertex. We remark that the vertices in $\mathcal{S}$ pairwise do not belong to the same partition in $G$, and are pairwise not adjacent in $G$, otherwise, they would have induced a digon in $A_1$, prohibiting the inclusion of both of them in $\mathcal{S}$, regardless of their relative positions. As such, this implies that $b' = b$, since the vertex set of $G$ is partitioned into $b$ sets. The vertices in $\mathcal{S}$ therefore form an independent set in $G$, such that each vertex belongs to a distinct partition, which makes $\mathcal{S}$ a multicolored independent set in $G$, as desired. 
\end{proof}

\section{Finding $k$-compatible set arrangements}
\label{sec:compat_set_arrangement}

While we are not able to settle the hardness of \textsc{MM-RAMP}, in a bid to better our understanding of the problem, this section's main result is a hardness result on a restricted set of $k$-\textsc{Compatible Set Arrangement} instances, which are structurally similar to those representing \textsc{MM-RAMP} instances. 

\vspace{1em}
We recall the $k$-\textsc{Compatible Set Arrangement} problem:

\begin{tcolorbox}[colback=white, colframe=black,  
                  arc=4mm, boxrule=0.3mm, 
                  fonttitle=\bfseries] 
$k$-\textsc{Compatible Set Arrangement}
\\\
\textbf{Input: } A vertex set $V$, a collection $\mathcal{G}$ of $k$ pairs of directed graphs $(A_1, B_1), \ldots, (A_k, B_k)$, such that $V(A_i) = V(B_i) = V$ for $1 \leq i \leq k$, and two sets $V_{\text{s}}, V_{\text{t}} \subseteq V$.
\\\textbf{Output: } Whether there exists a sequence of sets $\mathbb{V} = V_1, \ldots, V_q$, such that:
\begin{itemize}
    \item $V_{\text{s}} = V_1$ and $V_{\text{t}} = V_q$;
    \item $|V_j \Delta V_{j + 1}| = 1$ for $1 \leq j \leq q - 1$ (i.e., a vertex is either added to or removed from $V_j$ to obtain $V_{j + 1}$); and
    \item For each vertex $v$ that is added to or removed from a set $V_j$, there exists a label $\ell \in [k]$ assigned to $v$ such that:
    \begin{itemize}
        \item If $v$ is added to $V_j$, $v$ is a sink in $A_\ell[V_{j + 1}]$ and a source in $B_\ell[V \setminus \left(V_{j + 1} \setminus \{v\}\right)]$. 
        \item If $v$ is removed from $V_j$, $v$ is a sink in $A_\ell[V_{j}]$ and a source in $B_\ell[V \setminus \left(V_{j} \setminus \{v\}\right)]$.
    \end{itemize}
\end{itemize}  
\end{tcolorbox}

We define a \emph{legal} addition/removal to be an addition/removal of a vertex from a set, such that the addition/removal does not violate the vertex's sink and source constraints, where the notions of sink constraint and source constraint satisfaction carry over from the $k$-\textsc{Compatible Ordering} problem. 

The reason why we consider the $k$-\textsc{Compatible Set Arrangement} problem is because the \textsc{MM-RAMP} problem where each robotic arm can take on one of two angles $a_1, a_2$ (which need not be the same two angles for all robotic arms) is a special case of the $2$-\textsc{Compatible Set Arrangement} problem. Indeed, the set $V$ represents the robotic arms, the sets within the sequence $\mathbb{V}$ represent which robotic arms have angle $a_2$ at any given point, and the arcs in $(A_1, B_1)$ and $(A_2, B_2)$ encode the restrictions on the clockwise and counterclockwise rotation of the robotic arms respectively. An arc in $A_1$ from a vertex $v_i$ representing robotic arm $r_i$ to a vertex $v_j$ representing robotic arm $r_j$ signifies that the rotation of $r_i$ would hit $r_j$ if $r_j$ has angle $a_2$ and $r_i$ were to be rotated clockwise from $a_1$ to $a_2$, or counterclockwise from $a_2$ to $a_1$. An arc in $B_1$ from $v_i$ to $v_j$ signifies that the rotation of $r_j$ would hit $r_i$ if $r_i$ has angle $a_1$ and $r_j$ were to be rotated clockwise from $a_1$ to $a_2$, or counterclockwise from $a_2$ to $a_1$. The interpretation of the arcs in $(A_2, B_2)$ is identical, the only difference being the direction in which the robotic arms are being rotated. 

This section's main result is obtained via a reduction from a well-studied problem in the realm of reconfiguration problems: the \textsc{Configuration-To-Configuration Nondeterministic Constraint Logic} (\textsc{C2C NCL}) problem.

We define terminology used in the definition of the \textsc{C2C NCL} problem:

\begin{definition}
    An undirected weighted cubic graph $G = (V, E)$ where each vertex is either incident to three edges with weight 2 (an \textbf{OR vertex}), or one edge with weight 2 and two edges with weight 1 (an \textbf{AND vertex}), is called a \textbf{constraint graph}.
\end{definition}
\begin{definition}
    A set of edge orientations $E^o$ of edges in a constraint graph $G = (V, E)$ is said to be a \textbf{legal set of edge orientations} if and only if each vertex has an in-weight of at least 2.
\end{definition}

We can now formally define the \textsc{C2C NCL} problem:

\begin{tcolorbox}[colback=white, colframe=black,  
                  arc=4mm, boxrule=0.3mm, 
                  fonttitle=\bfseries] 
$k$-\textsc{Configuration-To-Configuration Nondeterministic Constraint Logic (C2C NCL)}
\\\
\textbf{Input: } A constraint graph $G = (V, E)$, two legal sets of edge orientations of $E$, $E^o_{\text{s}}$ and $E^o_{\text{t}}$.
\\\textbf{Output: } Whether there exists a sequence of legal edge orientations sets $\mathbb{E} = E^o_1, \ldots, E^o_{q}$, such that:
\begin{itemize}
    \item $E^o_{\text{s}} = E^o_1$ and $E^o_{\text{t}} = E^o_q$; and
    \item $|E^o_j \Delta E^o_{j + 1}| = 1$ for $1 \leq j \leq q - 1$ (i.e., an edge has its orientation reversed to obtain $E^o_{j+1}$ from $E^o_j$).
\end{itemize}  
\end{tcolorbox}

Starting from a legal set of edge orientations, we provide intuition for when an edge orientation reversal yields another legal set of edge orientations. 

When it comes to OR vertices, the orientation of an edge can be changed from inward to outward if and only if one of the other two edges incident to the same OR vertex is oriented inward, so there is a ``choice'' for which of the other two edges to ``rely on'' to guarantee that the in-weight of the vertex remains at least 2 after the edge orientation change; this choice will be encoded in our construction by the choice of a label for a vertex addition/removal, where the vertex addition/removal is associated with the edge orientation change. 

As for AND vertices, the orientation of a weight 2 edge can be changed from inward to outward if and only if both the other edges incident to the same AND vertex are oriented inward, and the orientation of a weight 1 edge can be changed from inward to outward if and only if the weight 2 edge incident to the same AND vertex is oriented inward. As such, the reversal of the orientation of an edge in an AND vertex can take place if and only if a unique condition is met, regardless of the weight of the edge: this is not the case for edge orientation changes in OR vertices. In line with the intuition for the labels of vertex additions/removals associated with edge orientation changes in OR vertices, the label for a vertex addition/removal associated with an edge orientation change in an AND vertex will be irrelevant in our construction.

In view of the $k$-\textsc{Compatible Ordering} problem and the $k$-\textsc{Compatible Set Arrangement} problem being generalizations of \textsc{SM-RAMP} and \textsc{MM-RAMP} respectively, as well as the results presented so far, we are mostly interested in instances of $k$-\textsc{Compatible Set Arrangement} that represent \textsc{MM-RAMP} instances, i.e., that are characterized by planarity and low degree. Our next result shows that the \textsc{C2C NCL} problem can be used to prove that $k$-\textsc{Compatible Set Arrangement} problem is $\mathsf{PSPACE}$-complete even when restricted to such instances.

\thmArrangement*

\begin{proof}
We prove that $k$-\textsc{Compatible Set Arrangement} is $\mathsf{PSPACE}$-complete on instances where the union is planar, has maximum degree 6, and has bounded bandwidth even when $k = 2$. We reduce from \textsc{C2C NCL}, which is known to be $\mathsf{PSPACE}$-complete even when restricted to instances where the constraint graph is planar and has bounded bandwidth~\cite{vandz:parameterized}.

We describe how an arbitrary instance $(G = (V, E), E^o_{\text{s}}, E^o_{\text{t}})$ of \textsc{C2C NCL} can be transformed into an equivalent instance of \textsc{2-Compatible Set Arrangement}, starting with the construction of the vertex set.

For each edge $e = v_iv_j \in E$, we include two vertices in the vertex set $V$ of the constructed \textsc{2-Compatible Set Arrangement} instance, which we call \emph{endpoint vertices}, and which we denote by $v_i^e$ and $v_j^e$ respectively. The endpoint vertices $v_i^e, v_j^e$ are said to be \emph{associated with} the edge $e$. We intend to use the membership of an endpoint $v_i^e$ in a set $V_j \in \mathbb{V}$ to indicate that the endpoint $v_i$ is currently the head of the oriented version of $e$. In particular, this implies that the construction should not allow both $v_i^e$ and $v_j^e$ to be included in any of the sets in $\mathbb{V}$.

The intended interpretation of the endpoint vertices allows us to describe the initial and the final vertex sets: if $v_i^e$ (resp. $v_j^e$) is the head of the oriented edge $e$ in $E^o_{\text{s}}$,  $v_i^e$ (resp. $v_j^e$) is added to $V_{\text{s}}$. $V_{\text{t}}$ is constructed similarly out of $E^o_{\text{t}}$. In particular, we have: $|E^o_{\text{s}}| = |E^o_{\text{t}}| = |V_{\text{s}}| = |V_{\text{t}}| = |E|$.

Next, we cover the construction of the two types of arcs: arcs encoding edges in $G$, and arcs encoding the relationship between edges incident to the same vertex in $G$. The latter of those two arc types is handled across two cases depending on the type of the vertex they concern:

\begin{enumerate}
    \item \textbf{Arcs encoding edges in G: } For each pair of endpoint vertices $v_i^e, v_j^e$ associated with the same edge $e = v_iv_j$ in $G$, we add the two arcs  $v_i^ev_j^e$ and $v_j^ev_i^e$ to both $A$ graphs. As stated earlier, the purpose of these digons is to ensure that there does not exist a set in $\mathbb{V}$ that contains both $v_i^e$ and $v_j^e$.
    \item \textbf{Arcs encoding the relationship between edges incident to the same OR vertex:} For each OR vertex $v_i$, incident to three edges $e_1, e_2, e_3$, we present the neighbors of $v_i^{e_3}$ first: we add the arcs $v^{e_1}_iv^{e_3}_i$ and $v^{e_2}_iv^{e_3}_i$ to $B_1$ and $B_2$ respectively. Consider an arbitrary vertex set $V_j$, such that $v_i^{e_3}$ is the only vertex in $\{v_i^{e_1}, v_i^{e_2}, v_i^{e_3}\}$ that is in $V_j$. The removal of $v_i^{e_3}$ from $V_j$ is not possible, because in each of $B_1$ and $B_2$, $v_i^{e_3}$ has an in-neighbor that is not in $V_j$, i.e., the removal would violate its source constraint. The neighbors of $v_i^{e_2}$ and $v_i^{e_3}$ mirror those of $v_i^{e_3}$: we add the arcs $v_i^{e_2}v_i^{e_1}, v_i^{e_1}v_i^{e_2}$ to $B_1$ and $v_i^{e_3}v_i^{e_1}, v_i^{e_3}v_i^{e_2}$ to $B_2$.
    \item \textbf{Arcs encoding the relationship between edges incident to the same AND vertex:} For each AND vertex $v_i$, incident to three edges $e_1, e_2, e_3$, we assume, without loss of generality, that $e_1$ is the weight 2 edge. We will handle the construction for the weight 1 edges and the weight 2 edges separately. For the weight 1 edge $e_2$ (resp. $e_3$), we add the arc $v_i^{e_1}v_i^{e_2}$ (resp. $v_i^{e_1}v_i^{e_3}$) to both $B$ graphs. This prohibits the removal of $v_i^{e_2}$ (resp. $v_i^{e_3}$) from an arbitrary vertex set $V_j$ unless $v_i^{e_1} \in V_j$; otherwise, $v_i^{e_2}$'s (resp. $v_i^{e_3}$'s) source constraint is violated, regardless of its assigned label. As for the weight 2 edge $e_1$, we add arcs $v_i^{e_2}v_i^{e_1}$ and $v_i^{e_3}v_i^{e_1}$ to both $B$ graphs. This prohibits the removal of $v_{i}^{e_1}$ from an arbitrary vertex set $V_j$ unless both $v_i^{e_2}$ and $v_i^{e_3}$ are in $V_j$; otherwise, $v_i^{e_1}$'s source constraint is violated, regardless of its assigned label.
\end{enumerate}

We remark that the AND vertices arcs, unlike the OR vertices arcs, are symmetric across both $B$ graphs: this aligns with the intuition presented earlier regarding the presence of a choice for OR vertices, and the absence of a choice for AND vertices. We highlight that the union of the constructed digraphs is isomorphic to $G$, up to replacing vertices with digons and undirected edges with arcs or digons. This implies that the union of $A_1, B_1, A_2$ and $B_2$ is planar, and has bounded maximum degree: this can be seen in Figure~\ref{fig:ncl_reduction}.

\begin{figure}
    \centering
    \includegraphics[width=\linewidth]{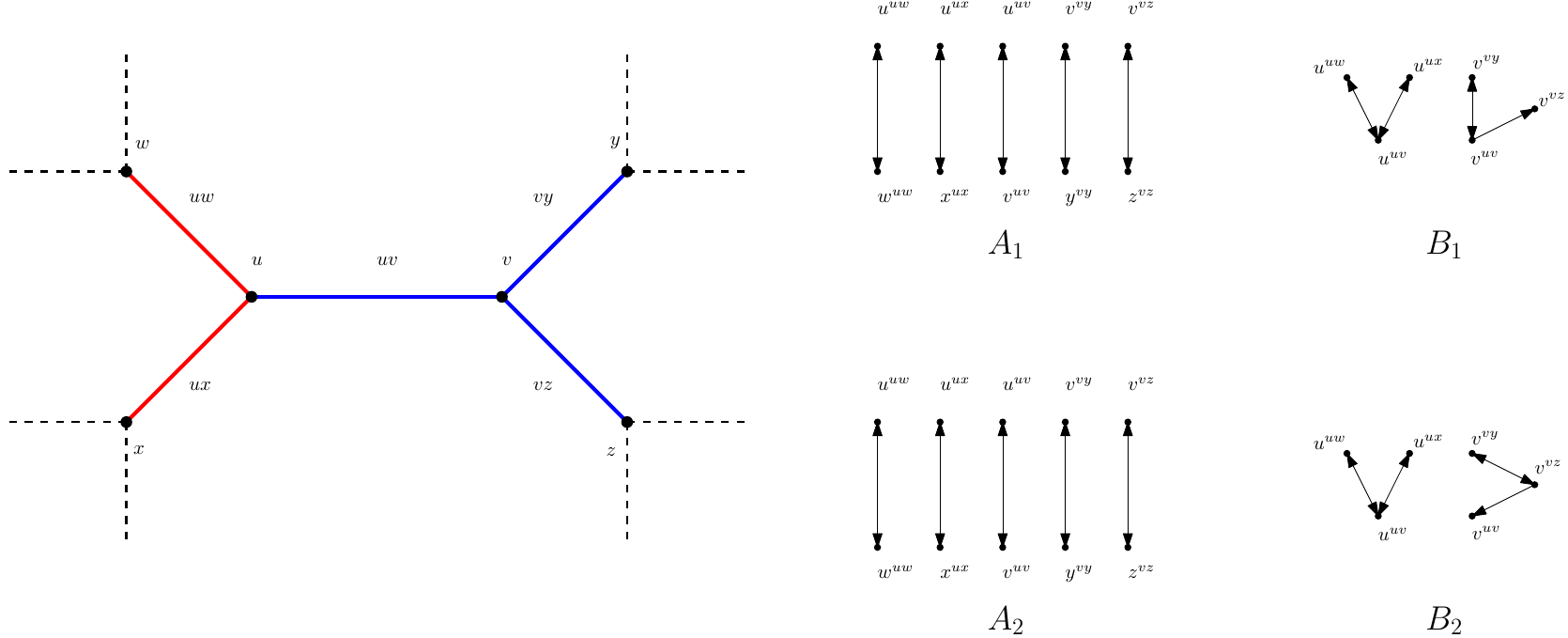}
    \caption{Illustration of the reduction from Theorem~\ref{thm:arrangement}. Only the vertices and the arcs constructed out of $u$, $v$, the edges incident to them and their neighbors are included in the illustration.}
    \label{fig:ncl_reduction}
\end{figure}

We claim that $(G, E^o_{\text{s}}, E^o_{\text{t}})$ is a yes-instance of \textsc{C2C NCL} if and only if $(V, \mathcal{G} = \{(A_1, B_1), (A_2, B_2)\}, V_{\text{s}}, V_{\text{t}})$ is a yes-instance of $k$-\textsc{Compatible Set Arrangement}. 

Before delving into the details of the proof, it will be useful to introduce two observations that greatly restrict the structure of a solution to the constructed $k$-\textsc{Compatible Set Arrangement} instance:

\begin{observation}
    \label{obs:sinkremove}
    Given a vertex set $V_i$ containing a single vertex per digon, removing a vertex $v \in V_i$ does not violate vertex $v$'s sink constraint.  
\end{observation}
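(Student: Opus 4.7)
The plan is to unpack the construction and observe that the $A$ graphs are extremely sparse: by inspection of the three arc-addition steps, the only arcs placed in $A_1$ and $A_2$ come from step~(1), namely the digons $v_i^e v_j^e$ and $v_j^e v_i^e$ added for each edge $e = v_i v_j$ of the constraint graph $G$. Steps (2) and (3), which encode the OR-vertex and AND-vertex relationships, place all of their arcs exclusively in $B_1$ and $B_2$. In particular, no $A$-arc ever connects two endpoint vertices associated with different edges of $G$.

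Given this, I would fix an arbitrary label $\ell \in \{1,2\}$ and examine the out-neighborhood of $v = v_i^e$ in $A_\ell$. It consists of the single vertex $v_j^e$, where $v_j$ is the other endpoint of $e$ in $G$. The hypothesis that $V_i$ contains a single vertex per digon then gives $v_j^e \notin V_i$, since $v_i^e \in V_i$ by assumption. Consequently, $v$ has no out-neighbors inside $A_\ell[V_i]$, so $v$ is a sink there, and one can pick this $\ell$ as the label justifying the removal. This establishes that the sink constraint is satisfied regardless of which of the two labels is eventually used.

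The argument is essentially bookkeeping on the construction; the only ``obstacle'' is making sure that the digons really are the only arcs in the $A$ graphs, which one verifies by checking that steps (2) and (3) exclusively introduce arcs in $B_1$ and $B_2$. Once that is observed, the statement is immediate from the single-vertex-per-digon hypothesis, and the proof can be written in a few lines. This lemma will then be used downstream to reduce the verification of legal vertex removals to checking only the source constraint, i.e., only the relevant $B$-arcs encoding OR/AND vertex behavior.
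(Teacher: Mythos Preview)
Your proposal is correct and uses essentially the same idea as the paper: both arguments rest on the fact that the only arcs in $A_1$ and $A_2$ are the digons $v_i^e v_j^e$, $v_j^e v_i^e$ from step~(1), so $v$'s unique $A$-out-neighbor is its digon partner, which by hypothesis is not in $V_i$. The paper phrases this by contradiction (if the sink constraint were violated, some $w\in V_i$ would satisfy $vw\in A_\ell$, hence $wv\in A_\ell$ ``by construction'', contradicting one-vertex-per-digon), whereas you argue directly and are more explicit about why steps~(2) and~(3) contribute no $A$-arcs; but this is a presentational difference, not a substantive one.
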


\begin{proof}
    Towards a contradiction, assume otherwise. Then, there must exist a vertex $w$ in $V_i$ whose removal violates vertex $v$'s sink constraint, which implies that $vw$ is an arc in both $A$ graphs. However, if $vw$ is an arc, by construction, so is $wv$, which contradicts the fact that $V_i$ contains a single vertex per digon.
\end{proof}

\begin{observation}
    \label{obs:sinkadd}
    Given a vertex set $V_i$ containing a single vertex per digon, adding a vertex $v$ to $V_i$ violates vertex $v$'s sink constraint.  
\end{observation}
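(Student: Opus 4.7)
The plan is to show that no choice of label $\ell \in \{1,2\}$ can make $v$ a sink in $A_\ell[V_{i+1}]$, where $V_{i+1} = V_i \cup \{v\}$. The argument will be short and will rely entirely on the digons placed between the two endpoint vertices associated with the same constraint-graph edge, which appear in both $A_1$ and $A_2$ by construction.

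First I would note that $v \notin V_i$ (since it is being added) and that, by construction, $v$ is one of exactly two endpoint vertices belonging to a unique digon, say with partner $v^*$. The hypothesis that $V_i$ contains a single vertex per digon forces $v^* \in V_i$, since the digon of $v$ must contribute its unique representative to $V_i$, and that representative is not $v$.

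Next I would use the fact that the arc $vv^*$ lies in both $A_1$ and $A_2$ (the digon is inserted into each $A$ graph). Because $v^* \in V_i \subseteq V_{i+1}$, this means $v^*$ is an out-neighbor of $v$ in both $A_1[V_{i+1}]$ and $A_2[V_{i+1}]$. Hence $v$ is not a sink in $A_\ell[V_{i+1}]$ for either $\ell = 1$ or $\ell = 2$, so the sink portion of the legality condition for the addition fails regardless of the label assigned to $v$.

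I do not foresee any substantial obstacle: the observation is a direct consequence of (i) the digon gadget used to encode each edge of $G$ in both $A$ graphs and (ii) the fact that the single-vertex-per-digon hypothesis implicitly guarantees that the partner of any vertex outside $V_i$ is inside $V_i$. The only nuance worth stating explicitly in the write-up is this second point, because it is what makes the addition of $v$ unavoidably reintroduce both endpoints of a digon into $V_{i+1}$.
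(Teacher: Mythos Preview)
Your argument is correct and is exactly what underlies the paper's proof; the paper simply remarks that the proof is ``identical'' to that of Observation~\ref{obs:sinkremove} by the symmetry between addition and removal in the definition of a legal move, rather than spelling out the digon-partner argument as you do.
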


\noindent The proof of Observation~\ref{obs:sinkadd} is identical to that of Observation~\ref{obs:sinkremove} because vertex additions and vertex removals are treated identically as far as constraint satisfaction is concerned, i.e., the removal of a vertex $v$ from $V_i$ is legal if and only if the addition of a vertex $v$ to $V_i \setminus \{v\}$ is legal.
\\
\\
($\Rightarrow$) Let $\mathbb{E} = (E^o_{\text{s}} = E^o_1, \ldots, E^o_{q} = E^o_{\text{t}})$ be a sequence of edge orientations that is a solution to the \textsc{C2C NCL} instance (i.e., a single edge's orientation is reversed between $E^o_i$ and $E^o_{i + 1}$ for $1 \leq i \leq q-1$, and all the edge orientations in $\mathbb{E}$ are legal). Starting from $\mathbb{E}$, we construct a sequence of vertex sets $\mathbb{V} = (V_{\text{s}} = V_1, \ldots, V_{2q} = V_{\text{t}})$ that is a solution to the constructed $2$-\textsc{Compatible Set Arrangement} instance. The argument is inductive; in the sequence of vertex sets we construct, we prove that all vertex additions and removals are legal, and that the vertices in $V_{2i}$ correspond to the heads of the directed edges in $E^o_i$ for $1 \leq i \leq q$.

The statement we are trying to prove clearly holds for $E^o_1 = E^o_{\text{s}}$ and $V_1 = V_{\text{s}}$, since by construction the vertices in $V_{\text{s}}$ correspond to the heads of the oriented edges in $E_{\text{s}}^o$.

By the inductive hypothesis, we assume that, up until an arbitrary index $c$, $1 \leq c \leq q - 1$, all the vertex additions and removals that were executed are legal, and that the vertices in $V_{2c}$ correspond to the heads of the oriented edges in $E^o_c$. Now, we consider an edge orientation change going from the set $E^o_c$ to the set $E^o_{c + 1}$. In contrast to earlier mentions of edges in the paper, the names of the edges in the discussion to follow are not indicative of their orientation; their orientation, if of interest, will be clearly specified. We refer to the reversed (undirected) edge as $uv$. Before the flip, $uv$ is directed towards $u$, a vertex that is incident to (undirected) edges $uw$ and $ux$, and away from $v$, a vertex that is incident to two other (undirected) edges $vy$ and $vz$.  

We present a case analysis on the vertex type of $u$ and that of $v$ to prove the existence of a legal vertex removal, followed by a legal vertex addition, that can be applied to $V_{2c}$ to yield a vertex set $V_{2c + 2}$ whose vertices are the heads of the oriented edges of $E^o_{c + 1}$. It is sufficient to check whether the first of the two operations, i.e., the vertex removal, satisfies the constraints of the removed vertex, as adding a vertex is a symmetrical operation, and is covered by a proof of correctness that replicates that for the vertex removal. 

Analyzing whether the vertex removal we propose is legal will omit any mention of the $A$ graphs, because we have already established that a vertex removal cannot violate the removed vertex's sink constraint (Observation~\ref{obs:sinkremove}):

\begin{itemize}
\item \textbf{The vertex $u$ is an OR vertex:} At least one of $uw$ and $ux$ must be oriented towards $u$; otherwise reversing the orientation of $uv$ would not yield a legal edge orientation. Without loss of generality, assume that $uw$ is oriented towards $u$, i.e., $u^{uw}$ is in $V_{2c}$ by the inductive hypothesis. The construction guarantees that there exists a label $j$ such that $u^{uw}u^{uv}$ is an arc in $B_j$, and such that $u^{uv}$ has no other in-neighbors in $B_j$. As such, $V_{2c + 1} = V_{2c} \setminus \{u^{uv}\}$ and $u^{uv}$ is assigned label $j$.
\item \textbf{The vertex $u$ is an AND vertex, and $uv$ is a weight 2 edge:} Both $uw$ and $ux$ must be oriented towards $u$, i.e., $u^{uw}$ and $u^{ux}$ are vertices in $V_{2c}$ by the inductive hypothesis. The construction guarantees that $u^{uv}$ has two in-neighbors in both $B_1$ and $B_2$: $u^{uw}$ and $u^{ux}$, both of which are in $V_{2c}$, and no other in-neighbors. As such, $V_{2c + 1} = V_{2c} \setminus \{u^{uv}\}$ and $u^{uv}$ is assigned label 1 (without loss of generality, since the construction across the two $B$ graphs is symmetrical).
\item \textbf{The vertex $u$ is an AND vertex, and $uv$ is a weight 1 edge:} Without loss of generality, assume that $uw$ is a weight 2 arc: $uw$ must be oriented towards $u$, i.e., $u^{uw}$ is a vertex in $V_{2c}$ by the inductive hypothesis. The construction guarantees that $u^{uv}$ has $u^{uw}$ as an in-neighbor in both $B_1$ and $B_2$, and no other in-neighbors. As such, $V_{2c + 1} = V_{2c} \setminus \{u^{uv}\}$ and $u^{uv}$ is assigned label 1 (without loss of generality, since the construction across the two $B$ graphs is symmetrical).
\end{itemize}

Next, we seek to add $v^{uv}$ to $V_{2c + 1}$. As stated earlier, the structure of the proof for the legality of this vertex addition is identical to the proof for the legality of the removal of $u^{uv}$ from $V_{2c}$.

Therefore, $V_{2c + 2} = (V_{2c} \setminus \{u^{uv}\}) \cup \{v^{uv}\}$, and both the removal of $u^{uv}$ and the addition of $v^{uv}$ are legal. Starting with two sets $V_{2c}$ and $E^o_c$, such that $V_{2c}$ contains the heads of the oriented edges of $E^o_c$, by the inductive hypothesis, we were able to construct a set $V_{2c + 2}$ that contains the heads of the oriented edges of $E^o_{c + 1}$, after applying a legal vertex removal, followed by a legal vertex addition.

Since $c$ is arbitrary, it follows that $V_{2\text{t}}$ contains all the heads of the directed arcs in $E_{\text{t}}$, as desired. This concludes the forward direction of the proof.

\noindent
\\($\Leftarrow$) Let $\mathbb{V} = (V_{\text{s}} = V_1, \ldots, V_q = V_{\text{t}})$ be a sequence of vertex sets that is a solution to the constructed $2$-\textsc{Compatible Set Arrangement} instance. We reuse parts of the argument for the forward direction of the proof. In particular, the forward direction maps each edge orientation to a vertex addition followed by a vertex removal that act on the two endpoint vertices of the same reversed edge. It is therefore sufficient to prove that the addition/removal sequence associated with $\mathbb{V}$ can be partitioned into pairs, where each pair is a removal of an endpoint vertex and an addition of the other endpoint vertex of the same edge, to be able to do the inverse of this mapping.

We introduce a definition and an observation that will be fundamental in our proof arguments, and that can directly be derived from the definition of the problem. We define vertex addition applied to a sequence of vertex sets: adding a vertex $v$ to a vertex set sequence $V_1, \ldots V_j$, where $v$ is not contained in any of the vertex sets in the sequence, yields the vertex set sequence $V_1, V_1 \cup \{v\}, V_2 \cup \{v\}, \ldots, V_{j-1} \cup \{ v \}, V_j \cup \{ v \}, V_j$; removing a vertex from a vertex set sequence, where $v$ is contained in all the vertex sets in the sequence, is defined analogously.

\begin{observation}
    \label{obs:addremove}
    Let $\mathbb{V} = (V_1, \ldots, V_i, \ldots, V_j, \ldots V_q)$ be a solution to a $k$-\textsc{Compatible Set Arrangement} instance. The addition (resp. the removal) of a vertex $v$  to (resp. from) the sequence $V_i, \ldots, V_j$ cannot violate the (previously satisfied) source (resp. sink) constraints of any of the vertices being added or removed in $V_i, \ldots, V_j$, and can only violate their (previously satisfied) sink (resp. source) constraints. 
\end{observation}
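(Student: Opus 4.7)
The plan is to prove the observation by a direct unpacking of the legality conditions from the definition of $k$-\textsc{Compatible Set Arrangement}, exploiting a basic monotonicity fact: a vertex that is a source (resp. sink) in a directed graph remains a source (resp. sink) in every induced subgraph. Recall that for each vertex $u$ added to or removed from some $V_{j'}$ with label $\ell$, the sink constraint is phrased in terms of an induced subgraph of $A_\ell$ on a subset of the current configuration (either $V_{j'}$ or $V_{j'+1}$), while the source constraint is phrased in terms of an induced subgraph of $B_\ell$ on the complement $V \setminus (V_{j'} \setminus \{u\})$ (or $V \setminus (V_{j'+1} \setminus \{u\})$).

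First, I would fix an arbitrary index $j'$ with $i \le j' \le j - 1$ such that some vertex $u \neq v$ is added to or removed from $V_{j'}$ in the original solution with label $\ell$, and track how each relevant induced subgraph changes after $v$ is inserted into every set of $V_i, \ldots, V_j$. Inserting $v$ enlarges the sets $V_{j'}$ and $V_{j'+1}$ by the single vertex $v$, and therefore shrinks the complementary sets by $v$. Consequently, the $B_\ell$ subgraph appearing in $u$'s source constraint becomes a proper induced subgraph of its original version; since $u$ had no in-neighbors there before, it still has none, and the source constraint is preserved. This is the core of the ``cannot violate the source constraint'' claim for additions.

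Next, I would observe that the same operation enlarges the $A_\ell$ subgraph appearing in $u$'s sink constraint (now including $v$), so $u$ could gain $v$ as a new out-neighbor in $A_\ell$ and thereby lose the sink property. This shows that only sink constraints are at risk, yielding the second half of the addition case. The removal case is handled by an entirely symmetric argument: deleting $v$ from every set in $V_i, \ldots, V_j$ shrinks the $A_\ell$ sink subgraphs (preserving sinks) and enlarges the complementary $B_\ell$ source subgraphs (potentially introducing new in-neighbors of $u$).

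Overall this is mostly a careful bookkeeping exercise; the mathematical content reduces to the trivial fact that source/sink properties are preserved under taking induced subgraphs. The main obstacle I anticipate is purely notational: keeping the distinction between $V_{j'}$ and $V_{j'+1}$ in the definition straight when $u$ is being added versus removed, and being explicit that $v \neq u$ throughout (since the observation only concerns vertices genuinely added or removed within the range, none of which can coincide with $v$, as $v$ is assumed to be uniformly absent or uniformly present in the modified sequence). Once this bookkeeping is pinned down, the four sub-cases (add/remove $v$ crossed with $u$ added/removed) all collapse to the single monotonicity argument above.
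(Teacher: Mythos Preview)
Your proposal is correct and is precisely the direct derivation from the definition that the paper alludes to; the paper itself does not spell out a proof of this observation beyond noting that it ``can directly be derived from the definition of the problem.'' Your monotonicity argument---enlarging the configuration shrinks the complementary $B_\ell$ subgraph (preserving sources) while enlarging the $A_\ell$ subgraph (possibly destroying sinks), and symmetrically for removal---is exactly the intended unpacking, and your remark that $u \neq v$ is automatic is the only point one needs to be careful about.
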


We can assume that the sequence $\mathbb{V}$ has a structure that is implied by the following observations:

\begin{observation}
    \label{obs:oneremoval}
    There is at least one vertex removal operation in $\mathbb{V}$, unless $V_{\text{s}} = V_{\text{t}}$.
\end{observation}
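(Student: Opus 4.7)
The plan is to exploit a straightforward counting argument built on the structure of the constructed instance. First, I would point out that by the way $V_{\text{s}}$ and $V_{\text{t}}$ are built in the reduction, each of them contains exactly one of the two endpoint vertices associated with every edge of $G$: for each $e = v_iv_j \in E$, the endpoint vertex $v_i^e$ is placed in $V_{\text{s}}$ if $v_i$ is the head of $e$ in $E^o_{\text{s}}$, and $v_j^e$ is placed there otherwise, and similarly for $V_{\text{t}}$ using $E^o_{\text{t}}$. Consequently, $|V_{\text{s}}| = |V_{\text{t}}| = |E|$, so the first and last sets of the sequence $\mathbb{V}$ have exactly the same cardinality.

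Next, I would argue the contrapositive of the claim. Suppose that $\mathbb{V}$ contains no removal at all and show that this forces $V_{\text{s}} = V_{\text{t}}$. Every operation in $\mathbb{V}$ changes the cardinality of the current set by exactly one, increasing it in the case of an addition and decreasing it in the case of a removal. If no removal ever occurs, then $|V_1| \leq |V_2| \leq \cdots \leq |V_q|$; combining this with the previously established equality $|V_1| = |V_q|$, no operation can have taken place at all. That gives $q = 1$ and $V_{\text{s}} = V_1 = V_q = V_{\text{t}}$, which is the desired conclusion.

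I do not anticipate any real obstacle here: the argument is purely a cardinality-counting one and relies on nothing beyond the definition of $V_{\text{s}}$ and $V_{\text{t}}$ in the construction. The only subtlety worth flagging is that the statement is specific to the instances produced by the reduction and is not a general fact about $k$-\textsc{Compatible Set Arrangement}: in a generic instance with $|V_{\text{s}}| < |V_{\text{t}}|$, an additions-only solution could perfectly well exist, so any later observations building on Observation~\ref{obs:oneremoval} must be careful to invoke it only within the context of the reduction.
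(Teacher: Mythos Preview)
Your argument is correct but differs from the paper's. The paper invokes Observation~\ref{obs:sinkadd}: since $V_{\text{s}}$ contains exactly one endpoint vertex per digon, any attempted addition to $V_1 = V_{\text{s}}$ would violate the added vertex's sink constraint, so the very first operation (if any exists) must already be a removal. Your route instead ignores constraint satisfaction entirely and relies only on the equality $|V_{\text{s}}| = |V_{\text{t}}| = |E|$ together with the fact that additions strictly increase cardinality; an additions-only sequence then forces $|V_q| > |V_1|$ unless $q = 1$.

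Both arguments are specific to the constructed instance, as you correctly flag, and both succeed. Your counting argument is more elementary and does not need to appeal to the $A$-graph digons at this stage. The paper's argument, on the other hand, implicitly yields the slightly stronger conclusion that the \emph{first} operation is a removal, which dovetails with the alternating removal/addition structure established later in the backward direction---though the observation as stated does not require that extra strength.
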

\begin{proof}
    If $V_{\text{s}} = V_{\text{t}}$, there is nothing to do. Otherwise, we can refer to Observation~\ref{obs:sinkadd} to conclude the existence of at least one vertex removal in $\mathbb{V}$. 
\end{proof}

\begin{observation}
    \label{obs5}
    Let $e = uv$ such that $v^e$ is removed from $V_{j_1} \in \mathbb{V}$. In a shortest solution $\mathbb{V}$ we may assume there exists an index $j_2$ such that $u^e$  is added to $V_{j_2} \in \mathbb{V}$ where $j_1 < j_2$. 
\end{observation}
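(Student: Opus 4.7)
The plan is to argue by contradiction on a shortest solution $\mathbb{V}$. The key structural ingredients are the digons $v^e u^e$ and $u^e v^e$ present in both $A_1$ and $A_2$, which by the standard sink-constraint argument prevent $u^e$ from being added to any $V_j$ that contains $v^e$ (and symmetrically), together with the fact that $V_{\text{t}}$ contains exactly one of $v^e, u^e$ since it encodes a legal edge orientation of $G$.

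First I would split into two cases based on which of $v^e, u^e$ lies in $V_{\text{t}} = V_q$. If $u^e \in V_{\text{t}}$, then since $v^e \in V_{j_1}$ forces $u^e \notin V_{j_1}$, there must be a first step $j_2$ at which $u^e$ is added. Because the sink constraint for adding $u^e$ at step $j_2$ requires $v^e \notin V_{j_2+1}$, and $v^e$ is still in $V_{j_1}$, the earliest possible such $j_2$ is at least $j_1+1 > j_1$, as required.

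The main obstacle is the remaining case $v^e \in V_{\text{t}}$. Here $v^e$ must be re-added at some first step $j' > j_1$. If $u^e$ is added at some step strictly between $j_1$ and $j'$ we are done, so assume for contradiction that it is not; then every operation in the interval $[j_1+1, j'-1]$ concerns a vertex different from $v^e$ and $u^e$. I would then construct a strictly shorter valid solution by simply omitting the two operations that remove $v^e$ at step $j_1$ and re-add $v^e$ at step $j'$. The net effect of the omitted pair is trivial, so the new sequence still starts at $V_{\text{s}}$, ends at $V_{\text{t}}$, and its consecutive sets still differ by a single vertex; only the sets in the interval acquire the additional element $v^e$.

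The remaining step is to verify that every intermediate operation on some vertex $w \neq v^e, u^e$ remains legal when $v^e$ is re-inserted throughout the interval. For the sink constraint of $w$ in either $A_\ell$: by construction, the only in-neighbor of $v^e$ in $A_1$ or $A_2$ is $u^e$ (via the digon), so no such $w$ has $v^e$ as an out-neighbor in either $A$ graph, and the presence of $v^e$ cannot newly violate $w$'s sink constraint. For the source constraint of $w$ in any $B_\ell$: this constraint requires that all in-neighbors of $w$ in $B_\ell$ lie in the current set, and inserting $v^e$ into that set can only enlarge the overlap, so any previously satisfied source constraint remains satisfied. Hence the shorter sequence is valid, contradicting the minimality of $\mathbb{V}$, and so in any shortest $\mathbb{V}$ the vertex $u^e$ must indeed be added at some $j_2 > j_1$.
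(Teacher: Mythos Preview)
Your proof is correct and uses the same core shortening argument as the paper (delete the removal of $v^e$ and its later re-addition, then verify that re-inserting $v^e$ throughout cannot break any intermediate step because the only $A$-graph neighbor of $v^e$ is $u^e$, and enlarging the current set can only help source constraints). The paper invokes this via its Observation~\ref{obs:addremove}; you spell out the same verification explicitly, which is fine.

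The organizational difference is that you split on whether $u^e$ or $v^e$ lies in $V_{\text t}$, handling the first case directly without shortening, whereas the paper splits on what the next operation involving $\{u^e,v^e\}$ is. Your split is arguably cleaner, since the paper's ``no further operation'' case is actually impossible (neither endpoint would then lie in $V_{\text t}$), and the paper's shortened sequence $V_q\cup\{v^e\}$ only equals $V_{\text t}$ when $v^e\in V_{\text t}$ anyway.

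One small wrinkle: your argument in the case $u^e\in V_{\text t}$ is phrased awkwardly. The sentence about the sink constraint forcing ``the earliest possible such $j_2$'' to exceed $j_1$ does not quite work as written, since $u^e$ could in principle have been added and removed before $j_1$. The clean argument is simply that $u^e\notin V_{j_1}$ (forced by $v^e\in V_{j_1}$ and the digon), hence $u^e\notin V_{j_1+1}$ (the step at $j_1$ removes $v^e$), while $u^e\in V_q$; so some addition of $u^e$ must occur at an index $j_2\ge j_1+1>j_1$. No sink-constraint reasoning is needed there.
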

\begin{proof}
    Suppose otherwise, for the sake of contradiction. If the removal of $v^e$ is the last addition/removal operation involving any of $v^e$ and $u^e$ in $\mathbb{V}$, we can find a shorter sequence of vertex sets that is a solution to the instance. If $\mathbb{V} = V_1, \ldots, V_{j_1}, V_{j_1 + 1}, V_{j_1 + 2},  \ldots, V_q$ is a solution to the instance, so is  $\mathbb{V}' = V_1, \ldots, V_{j_1}, V_{j_1 + 2} \cup \{v^e\}, \ldots, V_q \cup \{v^e\}$. Undoing the removal of $v^e$ can only cause sink constraint violations for the vertices that are subsequently added or removed (Observation~\ref{obs:addremove}), and the only vertex whose sink constraint involves $v^e$ is $u^e$. A similar argument can be used to shorten the vertex set sequence that is a solution to the instance if the removal of $v^e$ is followed by the addition of $v^e$ to some vertex set $V_{j_3}$, where $j_3 > j_1$.
\end{proof}

In what follows, we assume that $\mathbb{V}$ is a nontrivial shortest solution whose structure adheres to that of the solution in the statement of Observation~\ref{obs5}.

\begin{observation}
    \label{obs6}
     Let $e$ be an edge with endpoints $u$ and $v$, and let $j_1$ and $j_2$ ($j_1 < j_2$) be two indices such that $u^e$ is removed from $V_{j_1}$,  $v^e$ is added to $V_{j_2}$, and neither $u^e$ nor $v^e$ is added to or removed from $V_{j_1 + 1}, \ldots, V_{j_2 - 1}$ in $\mathbb{V}$. There exists a shortest solution $\mathbb{V}'$ where $j_2 = j_1 + 1$.
\end{observation}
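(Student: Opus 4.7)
The plan is to construct a shortest solution $\mathbb{V}'$ obtained from $\mathbb{V}$ by moving the addition of $v^e$ so that it occurs immediately after the removal of $u^e$. Concretely, define
\[
\mathbb{V}' := V_1, \ldots, V_{j_1}, V_{j_1+1}, V_{j_1+1} \cup \{v^e\}, V_{j_1+2} \cup \{v^e\}, \ldots, V_{j_2} \cup \{v^e\}, V_{j_2+2}, \ldots, V_q,
\]
where the last element of the $\cup \{v^e\}$ block equals $V_{j_2+1}$, using the identity $V_{j_2} \cup \{v^e\} = V_{j_2+1}$ that follows from the original transition $V_{j_2} \to V_{j_2+1}$ being the addition of $v^e$. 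The sequences $\mathbb{V}$ and $\mathbb{V}'$ have the same length, so $\mathbb{V}'$ is a shortest solution as soon as every transition in it is legal; transitions outside the modified block coincide with those of $\mathbb{V}$, so I focus on the block.

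For each transition $V_k \cup \{v^e\} \to V_{k+1} \cup \{v^e\}$ with $j_1+1 \le k \le j_2-1$, the underlying operation is the original operation on some vertex $w_k \notin \{u^e, v^e\}$ with some label. The only in-neighbor of $v^e$ in the $A$ graphs is $u^e$ (via the digon associated with $e$), so $v^e$ is an out-neighbor of no vertex other than $u^e$; hence including $v^e$ in the ``current'' side of this transition cannot create a new out-neighbor of $w_k$, preserving $w_k$'s sink constraint. On the source side, including $v^e$ in the current side removes it from the complement $V \setminus (\cdot)$ appearing in the source-constraint formula, so $w_k$'s source constraint only becomes weaker. Each such transition therefore stays legal with the same label.

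The main obstacle is verifying legality of the newly inserted transition $V_{j_1+1} \to V_{j_1+1} \cup \{v^e\}$, which adds $v^e$. The sink constraint is immediate because the only out-neighbor of $v^e$ in the $A$ graphs is $u^e$, which is absent from $V_{j_1+1}$. For the source constraint, I must exhibit a label $\ell$ such that every in-neighbor of $v^e$ in $B_\ell$ lies in $V_{j_1+1}$; by the construction, these in-neighbors are endpoint vertices $v^{e'}$ at $v$ for edges $e' \ne e$ incident to $v$ in the constraint graph. Suppose for contradiction that no such $\ell$ exists. Since the original addition at $V_{j_2} \to V_{j_2+1}$ used some label $\ell^*$ whose required in-neighbors of $v^e$ all lie in $V_{j_2}$, at least one such in-neighbor is absent from $V_{j_1+1}$ but present in $V_{j_2}$, hence added during the range $[j_1+1, j_2-1]$. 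Let $z$ be the first endpoint vertex at $v$ that is an in-neighbor of $v^e$ in some $B$ graph and is added inside the range, occurring at step $V_k \to V_{k+1}$. By the contradiction hypothesis combined with the first-choice of $z$, every endpoint vertex at $v$ other than $z$ that is an in-neighbor of $v^e$ in some $B$ graph is absent from $V_k$. Inspecting the construction, $z$'s own source constraint, for every label, requires at least one vertex $y \in V_k$ with $y = v^e$ or $y$ another endpoint vertex at $v$ that is an in-neighbor of $v^e$; however, $v^e$ is absent throughout $[j_1+1, j_2]$, and every other such endpoint is absent from $V_k$ by the preceding sentence. This contradiction produces the desired label $\ell$, completing the legality check and establishing that $\mathbb{V}'$ is a shortest solution with $j_2 = j_1+1$.
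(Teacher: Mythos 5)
Your construction of $\mathbb{V}'$ is exactly the paper's: slide the addition of $v^e$ so that it immediately follows the removal of $u^e$, keep every other operation in place, and observe that the length is unchanged, so the result is again a shortest solution. Your treatment of the intermediate transitions also matches the paper's reasoning (Observation~\ref{obs:addremove} together with the fact that $u^e$ is the only vertex having $v^e$ as an $A$-out-neighbor, and $u^e$ is untouched in the range). Where you genuinely go beyond the paper is the explicit verification that the relocated addition of $v^e$ at $V_{j_1+1}$ satisfies its own source constraint: the paper's proof only invokes Observation~\ref{obs:addremove}, which protects the operations already present in the segment, and leaves the legality of the newly inserted addition implicit. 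Your argument for it --- if no label worked at $V_{j_1+1}$, then the first $B$-in-neighbor of $v^e$ added inside the range could not itself have been added legally, since its own $B$-in-neighbors lie among $v^e$ (absent throughout the range) and the other endpoint vertices at $v$ --- is sound, so on this point your write-up is more complete than the paper's.

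One local overstatement: the sentence ``every endpoint vertex at $v$ other than $z$ that is an in-neighbor of $v^e$ in some $B$ graph is absent from $V_k$'' does not follow from the contradiction hypothesis when $v$ is an AND vertex and $e$ is its weight-$2$ edge. There the hypothesis only guarantees that at least one of the two weight-$1$ endpoint vertices is missing from $V_{j_1+1}$; the other may well be present in $V_k$. The final contradiction nevertheless survives, because in that case $z$ is a weight-$1$ endpoint whose only $B$-in-neighbor (under both labels) is $v^e$ itself, which is absent throughout $[j_1+1, j_2]$. So the clean fix is to replace the blanket claim by a short case analysis on whether $v$ is an OR vertex, an AND vertex with $e$ of weight $1$, or an AND vertex with $e$ of weight $2$; in the first case your claim does hold (both other endpoints are forced out of $V_{j_1+1}$ and, by minimality of $z$, out of $V_k$), and in the latter two cases the absence of $v^e$ alone blocks every label for $z$.
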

\begin{proof}
    We modify $\mathbb{V}$ to obtain $\mathbb{V}'$ by moving $v^e$ directly after the removal operation of $u^e$, which can be thought of as the undoing of a vertex addition of $v^e$, followed by a vertex addition of $v^e$, both to a sequence of vertex sets. This modification can only violate sink constraints (Observation~\ref{obs:addremove}), and the only vertex that is potentially affected by the earlier inclusion of $v^e$ is $u^e$. However, given our choice of $j_1$ and $j_2$, none of the operations that happen on vertex sets between $V_{j_1}$ and $V_{j_2}$ involve $u^e$ or $v^e$, so if the operations taking place between $V_{j_1}$ and $V_{j_2}$ are legal in $\mathbb{V}$, they will remain legal in $\mathbb{V}'$.
\end{proof}

Since we know that if we are dealing with a nontrivial sequence, there exists a solution $\mathbb{V}$ with at least one endpoint vertex removal (Observation~\ref{obs:oneremoval}), such that each endpoint vertex removal $u^{uv}$ is directly followed by an endpoint vertex addition $v^{uv}$ (Observations~\ref{obs5} and~\ref{obs6}). Moreover, if $i$ is odd, vertex set $V_{i + 1}$ in $\mathbb{V}$ is obtained from $V_i$ by removing an endpoint vertex $u^{uv}$, and vertex set $V_{i + 2}$ is obtained from $V_{i + 1}$ by adding the other endpoint vertex $v^{uv}$ associated with the same edge. It follows that that $|V_i| = |E|$ if $i$ is odd and $|V_{i + 1}| = |E| - 1$ if $i$ is even.

Now, starting from $\mathbb{V}$, we are able to construct a sequence of edge orientation sets $\mathbb{E} = (E^o_{\text{s}} = E^o_1, \ldots, E^o_{q/2} = E^o_{\text{t}})$ that is a solution to the constructed $2$-\textsc{Compatible Set Arrangement} instance by executing the inverse of the mapping that was presented in the forward direction. 
\end{proof}

\section{Conclusion and open problems}\label{sec:conclusion}
In this paper, we introduced two geometric reconfiguration problems, \textsc{SM-RAMP} and \textsc{MM-RAMP}, as well as two graph problems which generalize them, the $k$-\textsc{Compatible Ordering} problem and the $k$-\textsc{Compatible Set Arrangement Problem}, and we settled the complexity of the latter two problems under myriad constraints, namely acyclicity, planarity, bounded maximum degree, and degeneracy, towards a better understanding of the two initial problems on robotic arms. We also discussed two graph parameters, the treewidth and the labeled modular width, which we believe are bounded in $k$-\textsc{Compatible Ordering} instances that characterize \textsc{SM-RAMP} instances, and we designed two algorithms that efficiently solve the $k$-\textsc{Compatible Ordering} problem on instances where either of those two parameters are bounded, which further extends the set of \textsc{SM-RAMP} instances that we know how to efficiently solve.

We propose several open problems. First, although we have shown that several natural generalizations of \textsc{SM-RAMP} are $\mathsf{NP}$-complete, the complexity of \textsc{SM-RAMP} itself remains unresolved. It would be valuable to determine whether this specific problem admits a polynomial-time solution or if it too is $\mathsf{NP}$-complete. 
One might also explore what happens when the problem is relaxed or generalized further. For instance, what is the complexity of the problem if we remove the assumption that all robotic arms have the same length? The introduction of broader constraints could significantly impact the problem's tractability. 
Moreover, as far as our positive results for $k$-\textsc{Compatible Ordering} are concerned, knowing whether a labeled modular decomposition can be computed efficiently makes our current result on instances of bounded labeled modular width all the more robust, and may ameliorate our understanding of the problems on robotic arms.

Another intriguing direction concerns the complexity of $k$-\textsc{Compatible Ordering} under geometric constraints, as opposed to purely topological ones like planarity. For example, what is the complexity of $k$-\textsc{Compatible Ordering} when the graphs in $\mathcal{G}$ are unit disk graphs, which arise naturally in scenarios involving spatial or wireless network constraints? These graphs exhibit geometric properties that could either simplify or complicate the problem.

Finally, while we explored certain structural graph properties such as bounded treewidth and labeled modular width, the algorithmic impact of other structural graph parameters such as clique-width or bounded expansion on $k$-\textsc{Compatible Ordering} remains uncharted. Addressing these questions would deepen our understanding of the interplay between graph structure and reconfiguration complexity.

\section*{Acknowledgments}

Nicolas Bousquet was partly supported by ANR project ENEDISC (ANR-24-CE48-7768-01). Remy El Sabeh and Naomi Nishimura's research was funded by the Natural Science and Engineering Research Council of Canada.

%% If you have bib database file and want bibtex to generate the
%% bibitems, please use
%%
\bibliographystyle{abbrv} 
\bibliography{main}

%% else use the following coding to input the bibitems directly in the
%% TeX file.

%% Refer following link for more details about bibliography and citations.
%% https://en.wikibooks.org/wiki/LaTeX/Bibliography_Management

% \begin{thebibliography}{00}

%% For numbered reference style
%% \bibitem{label}
%% Text of bibliographic item

% \bibitem{lamport94}
%   Leslie Lamport,
%   \textit{\LaTeX: a document preparation system},
%   Addison Wesley, Massachusetts,
%   2nd edition,
%   1994.

% \end{thebibliography}
\newpage

\end{document}